\newcommand{\argmax}{\operatorname{argmax}}
\newcommand{\law}{\operatorname{Law}}
\newcommand{\Rr}{{\mathbb{R}}}
\newcommand{\Ff}{{\mathcal{F}}}
\newcommand{\Ee}{{\mathbb{E}}}
\newcommand{\fp}{\mathfrak{p}}
\newcommand{\TI}{\mbox{TI}}
\newcommand{\OFI}{\mbox{OFI}}
\definecolor{cadmiumgreen}{rgb}{0.0, 0.42, 0.24}
\newcommand{\commY}[1]{{\color{cadmiumgreen} #1}}
\journal{TBA}
\newtheorem{theorem}{Theorem}[section]
\newtheorem{corollary}[theorem]{Corollary}
\newtheorem{lemma}[theorem]{Lemma}
\newtheorem{proposition}[theorem]{Proposition}
\newtheorem{definition}[theorem]{Definition}
\newtheorem{remark}[theorem]{Remark}
\newtheorem{example}[theorem]{Example}
\numberwithin{equation}{section}
\begin{document}
	
\begin{frontmatter}
	
\title{Price formation in financial markets: a game-theoretic perspective}

\author[FGV]{David Evangelista}
\ead{david.evangelista@fgv.br}

\author[FGV]{Yuri Saporito}
\ead{yuri.saporito@gmail.com}

\author[UFF]{Yuri Thamsten}
\ead{ythamsten@id.uff.br}

\address[FGV]{Escola de Matem\'atica Aplicada (EMAp), Funda\c c\~ao Getulio Vargas, Rio de Janeiro, Brasil}
\address[UFF]{Instituto de Matem\'atica e Estat\'istica (IME), Universidade Federal Fluminense, Niter\'{o}i, Brasil}

	
\date{}
	
	
\begin{abstract}
We propose two novel frameworks to study the price formation of an asset negotiated in an order book. Specifically, we develop a game-theoretic model in many-person games and mean-field games, considering costs stemming from limited liquidity. We derive analytical formulas for the formed price in terms of the realized order flow. We also identify appropriate conditions that ensure the convergence of the price we find in the finite population game to that of its mean-field counterpart. We numerically assess our results with a large experiment using high-frequency data from ten stocks listed in the NASDAQ, a stock listed in B3 in Brazil, and a cryptocurrency listed in Binance. 
\end{abstract}	

\begin{keyword}
Price Formation; Optimal Trading; Mean-field Games; Finite Population Games.
\MSC[2010] 
\end{keyword}

\end{frontmatter}	


\section{Introduction}\label{sec:Introduction}

Price formation of asset prices is a process constituted as a result of the interaction of traders through the limit order book (LOB), cf. \cite{easley1995market}. More precisely, traders decide on how to act according to the current state of the book, and this process results in the price of the asset.  Following \cite{lehalle2013market}, we put it in distinction to the \textit{price discovery} perspective, in which we assume that the true asset price exists, for fundamental reasons, and the agents inject their views on the fair price in the order book; assuming market efficiency, we would expect that the long-run ``true'' price is eventually \textit{discovered} in this way. Following the terminology of that work, in the price formation problem, the construction of prices occurs in a forward way, whereas in price discovery frameworks, the price is unraveled in a backward mechanism.

From a fundamental perspective, the study of price formation aims to investigate questions such as the ones \cite{kyle1985continuous} poses.\footnote{``How quickly is new private information about the underlying value of a speculative commodity incorporated into market prices? How valuable is private information to an insider? How does noise trading affect the volatility of prices? What determines the liquidity of a speculative market?'' \cite[Introduction]{kyle1985continuous}} The development of numerous models resulted from these investigations, including those considering liquidity issues, of which we cite \cite{mike2008empirical,daniels2003quantitative,weber2005order,rocsu2009dynamic,cont2010stochastic,avellaneda2008high,citeulike:12386824,lehalle2011high}. In this connection, we refer to the survey on agent-based models \cite{chakraborti2011econophysics}; see also \cite{schinckus2012methodological,mastromatteo2014agent}.  \footnote{The flourishing of the statistical investigation of order books has led to many results,  especially from researchers originally from the physics community, see \cite{bouchaud2002statistical,potters2003more,chakraborti2011econophysics1,lan2007statistical}. A popular designation of the latter approach is \textit{econophysics} --- it represents a direction that has its foundations in analogies between financial markets and physics.}

Furthermore, there is a vast literature taking the price discovery perspective. For instance, we mention the efforts \cite{cohen1981transaction,garbade1979structural,ho1980dealer,ho1981optimal,ho1983dynamics} regarding market microstructure properties in such a framework; see also the survey \cite{biais2005market}, and the references therein. Key econometric measures for price discovery analysis are treated in \cite{hasbrouck1995one,gonzalo1995estimation,gonzalo2001systematic}, and further advances include \cite{booth1999price,chu1999price,dias2016price}. Following the line of \cite{hasbrouck1995one}, in \cite{grammig2005internationally,lien2009new,de2010price,dias2018price,fernandes2018price} these works investigate assets traded in many venues envisaging to identify which one leads the process of impound of novel information.  

In financial practice, knowledge of price formation is quite valuable. In the work of \cite{kyle1985continuous}, the equilibrium configuration implied by the market clearing condition indicates that prices are formed proportionally to the order flow. Under our model, we might interpret the constant of proportionality --- commonly designated as Kyle's lambda --- as the book depth. This aspect is a relevant contribution linking the investigation of the causal effect of the order flow on the price to the study of \textit{price impact}, i.e., understanding how the order flow perturbs the dynamics of the corresponding asset price: ``(...) the correlation between an incoming order (to buy or to sell) and the subsequent price change'', as put by \cite[Section 1]{bouchaud2009price}.


Recently, electricity markets became a strong source of motivation for research in price formation. In effect, we cite \cite{shrivats2020mean}, which considers Solar Renewable Energy Certificate (SREC) markets. The agents decide on how to optimally behave according to their SREC generation and their trading activities. From a market clearing condition, the determination of the price occurs as a result of the optimal behavior of the players. The construction of the performance criteria of the agents relies on their views on their future inventory, as is typical in price formation settings. In \cite{shrivats2020optimal}, there is a numerical assessment of the results of \cite{shrivats2020mean}. Using similar models, to a certain extent, there are various results regarding equilibrium prices in the papers \cite{fujii2020mean,fujii2020finite}, both for infinite and finite populations. In the latter framework, these authors also consider multiple populations in \cite{fujii2019probabilistic}. We also mention \cite{feron2020price}, which treats price formation in intraday electricity markets under intermittent renewable generation; in \cite{feron2020leader} they consider this problem in the presence of a leader.

Understanding the price formation is a matter of high relevance on its own. For instance, we refer to the efforts \cite{brennan1995investment,brennan1996market,locke2007order} in the econometrics literature. Conceptually speaking,  the authors seek to explain price dynamics in terms of order flow. For such a study in the context of cryptocurrencies, see \cite{ciaian2016economics}. 

Moreover, in many of the works we mentioned above, the formed price from the model does replicate many empirical features of the market.  In general, we can say that any plausible explanation of the asset price from the price formation viewpoint yields a benchmark trade execution price. This explanation has applications in Transaction Cost Analysis (TCA). Indeed, given a trading rate, its effect on the market will be to perturb the realized order flow. From our formed price, in terms of the perturbed order flow, we can derive a benchmark for the price impact induced by the strategy under scrutiny.

To take liquidity issues into account, we build upon the phenomenological framework of \cite{bertsimas1998optimal} and\cite{almgren2001optimal}. There, price impacts are twofold, viz., they are either permanent or temporary. This distinction is relative to their persistence in time. Thus, the impact on the dynamics of the price is what we regard as permanent, whereas the impact incurred by a single agent as a consequence of having to consume some layers of the book to have their order completely executed is the temporary one.  We call \textit{order book resilience}. the underlying assumption that the book replenishes immediately from the temporary impact.  Further advances in optimal execution problems generalized the idea of temporary price impact to a transient type of this friction, see \cite{gatheral2012transient}. Also, Obizhaeva and Wang developed a model of a block-shaped book with finite resilience, see \cite{obizhaeva2013optimal}. In \cite{kallsen2014high},  the authors see the Almgren and Chriss model as a high resilience limit of that of Obizhaeva and Wang.

Although the final goal of the above models was to study optimal execution problems, they also serve as a basis for the analysis of many other problems, such as optimal hedging \cite{almgren2016option,bank2017hedging,gueant2017option}, management of accelerated share repurchase contracts \cite{gueant2015accelerated}, and even price formation \cite{feron2020price,feron2020leader}. Here, we will develop our contributions to the price formation problem with a related model. We will consider a temporary price impact in a standard way, but we will not assume a specific shape for the price impact on the asset price dynamics.  In turn, we are willing to establish an explanation for the asset price in terms of its realized order flow.

At first, we will make a simplifying assumption that there are infinitely many players.  In this manner, we will use the machinery of Mean-Field Game (MFG) models. MFGs comprise game-theoretic models serving to investigate equilibria of large and competitive populations of strategically behaving individuals. On many occasions, they are useful precisely because they provide analytic tractability. They were introduced in the mathematical community independently by J-M. Lasry and P-L. Lions, see \cite{ll1,ll2,ll3}, and by M. Huang, P. Caines and R. Malham\'e, see  \cite{Caines2,Caines1}. In a second moment, we  investigate the many-person game corresponding to the MFG one.  We work on a framework of asymmetric information among decentralized homogeneous agents, allowing each agent to impact the price in a rather general way. Our goal is to interpret the results for the infinite population as limiting cases of some for the finite population counterpart. In the literature, the work \cite{gjerstad1998price} is another game-theoretic approach to the price formation problem, but in a context of double auctions.

In the MFG setting, we introduce the notion of competitive equilibria for our problem.  Through this definition, we can analyze the interplay between the supply/demand resulting from the order flow and the corresponding asset prices. We assume continuity of the price and order flow, as well as that they are deterministic, working in an MFG framework akin to that of \cite{gomes2020mean} (cf. \cite{ashrafyan2021duality} and \cite{gomes2021random} for further developments on theoretical aspects). Although our assumptions are strong at this point, we will see that our results are satisfactory empirically. Moreover, our analysis of the finite population counterpart will shed light upon our explanation of the asset price endogenously resulting from our MFG model. 

We resort to many-person games to assess sufficient conditions to ensure the convergence of the price we find as the population size grows to infinity.  We allow for more flexibility in this setting,  such as a heterogeneous informational structure as in \cite{evangelista2020finite}. For analytic tractability, as well as for a clearer comparison to the previous framework, we keep the assumption that the parameters are homogeneous among the players. However, we consider a broader class of admissible prices when defining the competitive equilibria for the finite population games. Thus, we investigate some consequences of considering price impacts stemming from each individual agent -- a ``microstructural impact'', or ``micro-impact'', for short. In terms of this micro-impact, alongside a further assumption on the limiting behavior of the uncertainty in the traders' inventories, we are able to formulate a limiting theorem, thus recovering the formula for the price we found for the MFG.

We organize the remainder of this paper as follows. In Section \ref{sec:MFG}, we analyze competitive equilibria in the MFG setting. Then, in Section \ref{sec:FinitePop}, we investigate the many-person game to gain further insight into the results we obtained for the infinite population framework. We conclude the paper numerically assessing our results with high-frequency data from NASDAQ and other two different financial markets.  In Section \ref{sec:Empirical} We empirically assess the results we derived in the previous sections.  Our many-person approach indicates a regression covariate shown to be relevant empirically.  Finally, in Section \ref{sec:conclusions} we state our general findings and provide our concluding remarks.


\section{Price formation with a MFG model} \label{sec:MFG}
 
\subsection{The model}

\paragraph{\textbf{The market}} We consider a single venue in which infinitely many agents trade a single asset. By this, we mean that their trading in other assets do not influence their decisions in the specific one we are focusing on. We denote its reference price \footnote{We assume that the reference price is given by the asset's midprice, i.e.,  the average between the best bid and best ask quotes.}  by $p$. Our point of view is that $p$ is the result of the interaction among the population through the order book. As a simplifying assumption, we stipulate that $p$ is deterministic, as in \cite{gomes2020mean}. In Section \ref{sec:FinitePop} we will relax this restriction in the finite population setting, and we will provide an appropriate interpretation for the price we find in the present section as a limit when the number of agents goes to infinity.

\paragraph{\textbf{Information structure}} We fix a complete filtered probability space $(\Omega, \Ff, \mathbb{P}, \mathbb{F} = \left\{ \mathcal{F}_t\right\}_{0\leqslant t \leqslant T}),$ with $\mathcal{F} = \mathcal{F}_T,$ supporting a one-dimensional Brownian motion $W.$ Therefore, we currently suppose symmetric information throughout the population.

\paragraph{\textbf{The traders}} Let us consider a representative trader beginning a trading program at time $t \in \left[0,T\right],$ targeting to execute a trade of size $|q|$ shares of the asset, where $q \in \mathbb{R}.$ If $q> 0$ (resp., $q<0$) the agent seeks to liquidate (resp., execute) $q$ shares over $\left[t,T\right].$ The trader controls her turnover rate $\left\{ \nu_u \right\}_{t \leqslant u \leqslant T} \in \mathcal{U}_t,$ where
$$
\mathcal{U}_t := \left\{ \nu = \left\{ \nu_u \right\}_{t\leqslant u \leqslant T} : \nu \text{ is } \mathbb{F}-\text{progressively measurable, and } \mathbb{E}\left[\int_t^T  \nu_u^2 \,du\right] < \infty \right\}
$$
is the set of admissible controls. In this manner, her inventory process $\left\{ q_u \right\}_{t \leqslant u \leqslant T}$ satisfies
\begin{equation} \label{eq:Inventory}
    \begin{cases}
        dq_u = \nu_u\,du + \sigma dW_u,\\
        q_t = q.
    \end{cases}
\end{equation}
Above, we allowed uncertainty in the traders' inventory. As put by \cite{duffie2017size}, this is justified because it represents an effect observed in reality. If the trader is a broker, she will receive an order flow from her clients, something which we can model as \eqref{eq:Inventory}, cf. \cite[Subsection 2.1]{carmona2015probabilistic}. Another explanation for introducing uncertainty in the dynamics \eqref{eq:Inventory} of the inventory is that when the agent actually conducts her trades, at the so-called tactical layer of an execution algorithm, in the terminology of \cite{gueant2016financial}, she is subject to execution uncertainty stemming from operational risks. 

As the trader executes a volume $\nu_u\,du $ over the time interval $\left[u,u+du\right[ $ she must pay or receive --- depending on the sign of $\nu_u$--- the quantity $-p(u)\nu_u\,du.$ If we assume a linear temporary price impact, then the indirect cost per share she incurs from this effect is $-\kappa \nu_u;$ hence,  this trade's total cost is  $- \kappa \nu_u^2\,du.$ Gathering these two pieces together, we derive the following dynamics for the corresponding cash process $\left\{ c_t\right\}_{0\leqslant t \leqslant T}:$  
\begin{equation} \label{eq:Cash}
    \begin{cases}
        dc_u =-p(u)\nu_u\,du - \kappa \nu_u^2\,du =-\left[ p(u) + \kappa \nu_u \right]\nu_u\,du,\\
        c_t = c.
    \end{cases}
\end{equation}
In \eqref{eq:Cash}, we denoted by $c \in \mathbb{R}$ the initial cash amount that the trader holds. We remark that we assume $\kappa$ to be the same among players. Since we suppose they are all trading the same asset in the same venue, this is not an unrealistic condition.

Finally, we define the agent's wealth $w$ as
\begin{equation} \label{eq:Wealth}
    w_u := c_u + q_u p(u) = c -\int_t^u \left[ p(\tau) + \kappa \nu_\tau\right] \nu_\tau \,d\tau + q_u p(u) \hspace{1.0cm} (t\leqslant u \leqslant T).
\end{equation}

\paragraph{\textbf{The value function}} We intend to approach the price formation in the MFG model via the dynamic programming perspective.  To conduct this endeavor, we define the value function $V$ as
\begin{align} \label{eq:ValueFn}
  \begin{split}
    V(t,q) &:= \sup_{\nu \in \mathcal{U}_t}\mathbb{E}\left[w_T - c - A\,q_T^2 - \phi \int_t^T q_u^2 \,du \Bigg| c_t = c,\, q_t = q \right]  \\
    &= \sup_{\nu \in \mathcal{U}_t}\mathbb{E}\left[q_T\left( p(T) - Aq_T \right) - \int_t^T \left\{\kappa \, \nu_u^2 +\phi\, q_u^2 +\nu_u p(u) \right\}\,du \Bigg| q_t = q \right].
  \end{split}
\end{align}
Here, $V$ is the supremum of performance criteria which are in turn formed by three parts. Firstly, they include the mean of the difference $w_T - c$ between the corresponding trader's terminal wealth and initial cash. We do not compare $w_T - c$ to the pre-trade price $q \, p(t)$ of the initial inventory,  a common enterprise in \textit{Implementation Shortfall} algorithms,  because we are not interested in taking differentials of the price. Secondly, the criteria also include a terminal penalization for finishing with non-vanishing inventory, i.e., $-A \, q_T^2.$ Thirdly, there is the mean of an urgency term $-\phi \int_t^T q_u^2\,du,$ inducing traders to accelerate their program. Analogously as authors in \cite{cartea2015algorithmic} explain, we emphasize that we do not interpret this as a financially meaningful penalization but rather as an extra risk management tool.  However, as put by \cite{duffie2017size} in a related modeling context, even if we assume traders are risk-neutral, they can incur costs for holding non-zero inventory, either long or short, which advocates in part of taking $\phi$ to be the same for all agents.

\paragraph{\textbf{Inventory density}} We denote the density of the law of the players' inventories at time $t$, $q_t$, by $m(t,\cdot)$. Additionally, we consider the initial density $m_0$ as given, i.e., $m(0,\cdot) = m_0$.

\paragraph{\textbf{Market clearing condition}} If an agent with inventory level $q$ trades at a rate $\nu(t,q),$ then we can define the aggregation rate $\mu$ as
$$
\mu(t) = \int_{\mathbb{R}} \nu(t,q) m(t,q)\,dq.
$$
Heuristically, being in equilibrium with a market demand/supply rate $\Lambda$ should require the market clearing condition, cf. \cite{gomes2020mean}:
\begin{equation} \label{eq:MktClearingMFG_nu}
    \mu(t) + \Lambda(t) = 0 \hspace{1.0cm} (0\leqslant t \leqslant T).
\end{equation}
Therefore, if the market is bullish (respectively, bearish) on aggregate, then we would expect $\mu(t) > 0$ (respectively, $\mu(t) < 0$), i.e., traders are required to buy (respectively, sell) some shares of the asset; hence, $\Lambda(t)$ describes the opposite movement, which occurs in the inventory of the traders' counterparts as a whole. We thus see $\Lambda$ as the interaction process of agents with the LOB itself (such as by adding or subtracting shares of an asset from there), or rather the resulting movement in the LOB due to the negotiations taking place there. We remark that we do not suppose that the players behaving optimally in our MFG trade among themselves --- in fact, players trading aggressively (meaning via market orders) could be, say, all buying the asset simultaneously at a certain time instant. Thus, we do implicitly regard that there are exogenous liquidity providers accepting to serve as counterparts at price $p$ to the mean-field of traders we are studying. 

\subsection{Competitive equilibria}

We are now ready to provide the following definition of competitive equilibria for the MFG model.

\begin{definition} \label{def:CompetitiveEquilibriaMFG}
A competitive equilibrium for the MFG model is a tuple $(p,V,m,\nu^*,\Lambda),$ where $p,\Lambda : \left[0,T\right] \rightarrow \mathbb{R}$ and $V,m,\nu^*:\left[0,T\right]\times \mathbb{R} \rightarrow \mathbb{R}$ satisfy:
\begin{itemize}
	\item[(a)] For each square-integrable random variable $\xi$, the SDE $dq^*_u = \nu^*(u,q^*_u)\,du + \sigma dW_u,\,q^*_t=\xi,$ has a strong solution;
	
	\item[(b)] For each $(t,q) \in [0,T] \times \mathbb{R}$, if we consider a solution of the SDE above with initial condition $q^*_t=q$ and write $\nu^* := \left\{ \nu^*_u := \nu^*(u,q^*_u) \right\}_{t \leqslant u \leqslant T}$, then the supremum in \eqref{eq:ValueFn} is attained by the strategy $\nu^*,$ and it is equal to $V(t,q);$
    \item[(c)] If the density of $\law(q^*_0)$ is $m(0,\cdot)$ and $\left\{ q^*_t \right\}_{0\leqslant t \leqslant T}$ is a solution of SDE in $(a)$ with such initial condition, then the function $m$ is the inventory distribution density of $q^*$, i.e. $m(t,\cdot)$ is the density of $\law(q^*_t)$, for $0 \leqslant t \leqslant T$; 
    \item[(d)] Equation \eqref{eq:MktClearingMFG_nu} holds for $\nu^*$ and $m$:
\begin{equation} \label{eq:MktClearingMFG}
    \int_{\mathbb{R}} \nu^*(t,q) m(t,q)\,dq + \Lambda(t) = 0 \hspace{1.0cm} (0\leqslant t \leqslant T).
\end{equation}
\end{itemize}
\end{definition}

On the one hand, in condition (b) of Definition \ref{def:CompetitiveEquilibriaMFG} is akin to the ``\textit{Profit Maximization}'' property of the equilibria \cite[Sections 2, 3, and 4]{kyle1985continuous}. On the other hand, we do not propose ``market efficiency'' in the context of our equilibria configurations. Rather, we leave a degree of freedom, viz., the order flow of liquidity taking orders. In our condition (d), we assume market makers clear the market, but following a perspective similar to that of \cite{kyle1985continuous},\footnote{``Modelling how market makers can earn the positive frictional profits necessary to attract them into the business of market making is an interesting topic which takes us away from our main objective of studying how price formation is influenced by the optimizing behavior of an insider in a somewhat idealized setting.'', see \cite[Section 2]{kyle1985continuous}.} i.e., we do not delve into the (interesting) question of their optimizing behavior in doing so. 

\subsection{Some properties of the competitive equilibria}

We now turn to the analysis of competitive equilibria in terms of Definition \ref{def:CompetitiveEquilibriaMFG}. The first step is to assume that the price component $p$ of the tuple $(p,V,m,\nu^*,\Lambda)$ is given and continuous and from this derive the well-posedness for $(V,m).$ Particularly, the linear-quadratic shape of the criteria in \eqref{eq:ValueFn} will allow us to show that $V$ is itself quadratic in the inventory variable. These comprise the content of the subsequent result.

\begin{proposition} \label{prop:HJBandFP}
Let us consider a competitive equilibrium $(p,V,m,\nu^*,\Lambda),$ with $p \in C\left( \left[0,T\right] \right).$ Then, $(V,m)$ must solve the system
\begin{equation} \label{eq:HJBandFP}
    \begin{cases}
        \partial_t V(t,q) +\frac{\sigma^2}{2}\partial_q^2 V(t,q) + \frac{1}{4\kappa}\left[ \partial_q V(t,q) - p(t) \right]^2 - \phi q^2 = 0,\\
        \partial_t m(t,q) - \frac{\sigma^2}{2}\partial_q^2 m(t,q) + \partial_q\left[ \frac{\left( \partial_q V(t,q) - p(t) \right)}{2\kappa} m(t,q) \right] =0, \\
        V(T,q) = q\left[ p(T) -Aq \right] \text{ and }  m(0,q) = m_0(q),
\end{cases}
\end{equation}
where we understand the first PDE above in the viscosity sense, and the second one in the distributional sense. Moreover, for each such $p,$ the system \eqref{eq:HJBandFP} admits a unique smooth solution $(V,m) \in \left[ C^{1,2} \right]^2,$ where
\begin{equation}\label{eq:ans_thet} 
V(t,q)=\theta_0(t)+\theta_1(t)q+\theta_2(t)q^2,
\end{equation} 
where
\begin{equation}\label{eq:sol_theta2}
\theta_2(t)=\sqrt{\kappa \phi}\frac{1-ce^{2\gamma(T-t)}}{1+ce^{2\gamma(T-t)}},\, \text{ for } c:=\frac{\sqrt{\kappa \phi} + A}{\sqrt{\kappa \phi} - A} \text{ and } \gamma:=\sqrt{\frac{\phi}{\kappa}},
\end{equation}
\begin{equation}\label{eq:sol_theta1}
\theta_1(t) = p(T)e^{\frac{1}{\kappa}\int_t^T \theta_2(s)ds}-\frac{1}{\kappa}\int_t^T \theta_2(s) p(s) e^{\frac{1}{\kappa}\int_t^s \theta_2(\tau)d\tau}ds,
\end{equation}
as well as
\begin{equation} \label{eq:sol_theta0}
    \theta_0(t) = \int_t^T \left[ \frac{1}{4\kappa}\left(\theta_1(u) - p(u) \right)^2 + 2\sigma^2\theta_2(u) \right]\,du.
\end{equation}
\end{proposition}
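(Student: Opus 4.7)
My approach would be to (i) derive the HJB--Fokker--Planck system from the control-theoretic and probabilistic formulation of Definition \ref{def:CompetitiveEquilibriaMFG}, and then (ii) exploit the linear--quadratic structure to reduce the HJB to a triangular system of ODEs that can be integrated in closed form.

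For the derivation of \eqref{eq:HJBandFP}, I would start from the dynamic programming principle applied to \eqref{eq:ValueFn}, which yields that $V$ is a viscosity solution of
\[
\partial_t V + \tfrac{\sigma^2}{2}\partial_q^2 V + \sup_{\nu\in\Rr}\bigl\{\nu(\partial_q V - p(t)) - \kappa\nu^2\bigr\} - \phi q^2 = 0,
\]
with terminal datum $V(T,q)=q(p(T)-Aq)$. The inner Hamiltonian is strictly concave in $\nu$ and its pointwise maximizer is $\nu^*(t,q) = (\partial_q V - p(t))/(2\kappa)$; substituting it back produces the first equation of \eqref{eq:HJBandFP}. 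Condition (a) of Definition \ref{def:CompetitiveEquilibriaMFG} then gives a strong solution of $dq^*_u = \nu^*(u,q^*_u)\,du + \sigma\,dW_u$, whose time-marginal density solves the corresponding Kolmogorov forward equation, i.e.\ the second equation of \eqref{eq:HJBandFP}.

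Next, I would insert the ansatz $V(t,q)=\theta_0(t)+\theta_1(t)q+\theta_2(t)q^2$ into the HJB and expand $(\partial_q V - p)^2 = (\theta_1-p)^2 + 4\theta_2(\theta_1-p)q + 4\theta_2^2 q^2$. Matching coefficients of $q^0,q^1,q^2$, and using $V(T,q)=-Aq^2+p(T)q$, yields the triangular terminal-value system
\begin{align*}
\theta_2' + \tfrac{1}{\kappa}\theta_2^2 - \phi &= 0, & \theta_2(T)&=-A,\\
\theta_1' + \tfrac{1}{\kappa}\theta_2(\theta_1-p) &= 0, & \theta_1(T)&=p(T),\\
\theta_0' + \sigma^2\theta_2 + \tfrac{1}{4\kappa}(\theta_1-p)^2 &= 0, & \theta_0(T)&=0.
\end{align*}
The Riccati for $\theta_2$ has constant coefficients and equilibria $\pm\sqrt{\kappa\phi}$; the substitution $\theta_2 = \sqrt{\kappa\phi}\,u$ with $\gamma=\sqrt{\phi/\kappa}$ turns it into the separable ODE $u'=\gamma(1-u^2)$. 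Integrating and pinning down the constant by $\theta_2(T)=-A$ yields \eqref{eq:sol_theta2} with $c=(\sqrt{\kappa\phi}+A)/(\sqrt{\kappa\phi}-A)$. The $\theta_1$ equation is linear, solved via the integrating factor $\exp\bigl(-\tfrac{1}{\kappa}\int_t^T\theta_2\bigr)$ and variation of parameters to produce \eqref{eq:sol_theta1}; a direct quadrature on the final line then gives \eqref{eq:sol_theta0}.

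For well-posedness, the $V$ so constructed is $C^{1,2}$ on $[0,T]\times\Rr$, hence a classical solution of the HJB, and by the standard comparison principle for second-order parabolic Hamilton--Jacobi equations with quadratic growth it is the unique viscosity solution. Given this $V$, the drift in the FP equation is affine in $q$ with continuous time-dependence, so the state SDE is of Gaussian Ornstein--Uhlenbeck type and $m$ is the unique distributional solution of the linear parabolic FP equation with initial datum $m_0$. The main obstacle I anticipate is not any single computation but the bookkeeping of the Riccati integration --- in particular, isolating the form of $c$ that matches $\theta_2(T)=-A$ and checking that the denominator $1+ce^{2\gamma(T-t)}$ stays bounded away from zero on $[0,T]$ under the natural sign assumptions on $\kappa,\phi,A$.
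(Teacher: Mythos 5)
Your proof follows essentially the same route as the paper's: dynamic programming giving the viscosity-sense HJB (with the concave inner maximization producing $\nu^*=(\partial_q V-p)/(2\kappa)$), the Kolmogorov forward equation for $m$, the quadratic ansatz reducing the HJB to the triangular Riccati/linear/quadrature ODE system, and standard verification plus uniqueness for the resulting linear Fokker--Planck equation. One minor remark: your zeroth-order coefficient $\sigma^2\theta_2$ (from $\tfrac{\sigma^2}{2}\partial_q^2 V=\sigma^2\theta_2$) is what the computation actually yields, whereas the paper's system \eqref{eq:sys_thet} and formula \eqref{eq:sol_theta0} display $2\sigma^2\theta_2$ --- an apparent typo on the paper's side, and harmless since $\theta_0$ enters neither the optimal control nor any later result.
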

\begin{proof}
Given $p \in C\left(\left[0,T\right]\right),$ the value function $V,$ which we define according to \eqref{eq:ValueFn}, solves in the viscosity sense the Hamilton-Jacobi-Bellman equation comprising the first PDE in \eqref{eq:HJBandFP}, see \cite[Theorems 4.3.1 and 4.3.2, Remarks 4.3.4 and 4.3.5]{pham2009continuous} --- the terminal condition is clear from \eqref{eq:ValueFn}. Now, given $p$ and $V,$ we check (arguing as in  \cite[Lemma 3.3]{cardaliaguet2010notes}) that the density of players' inventory $m$ (weakly) solves the Kolmogorov-Fokker-Planck equation in \eqref{eq:HJBandFP}, with initial data $m(0,\cdot) = \law(q_0)$. To prove the second part of the Proposition, we make the ansatz
\begin{equation} \label{eq:Ansatz}
    V(t,q) = \theta_0(t) + \theta_1(t)q + \theta_2(t)q^2.
\end{equation}
Thus, the first equation in \eqref{eq:HJBandFP} becomes
\begin{equation}\label{eq:sys_thet}
\begin{cases}
\displaystyle \theta_2'(t)+\tfrac{1}{\kappa}\theta_2^2(t)-\phi=0,\\
\displaystyle \theta_1'(t)+\tfrac{1}{\kappa}\theta_2(t)(\theta_1(t)-p(t))=0,\\
\displaystyle \theta_0'(t)+\tfrac{1}{4\kappa}(\theta_1(t)-p(t))^2 + 2\sigma^2 \theta_2(t)=0,\\
\theta_0(T)=0, \theta_1(T)=p(T), \theta_2(T)=-A.
\end{cases}
\end{equation}

Solving the Ricatti equation determining $\theta_2$ is straightforward; it also appears in the standard Almgren-Chriss model, with only one trader, see \cite{cartea2015algorithmic}. With $\theta_2$ at hand, we derive the formula for $\theta_1$ by solving the second equation in \eqref{eq:sys_thet}, which is a linear ODE. Finally, we obtain $\theta_0$ from the third one via direct integration. At this point, we can carry out standard verification arguments to validate \eqref{eq:Ansatz}. With such $V \in C^{1,2}$ at hand, showing that the Fokker-Plack-Kolmogorov equation admits a unique smooth solution $m$ is standard, see \cite{cardaliaguet2010notes}.
\end{proof}

 
\begin{corollary} \label{cor:RepFeedbackControl}
If $(p,V,m,\nu^*,\Lambda)$ is a competitive equilibrium, then the optimal control in feedback form is given by 
\begin{equation}\label{eq:nu_theta}
\nu^*(t,q) = \frac{\partial_q V(t,q) - p(t)}{2\kappa} =  \frac{1}{2\kappa}\left(\theta_1(t) - p(t)\right)  +\frac{\theta_2(t)}{\kappa} q.
\end{equation}
\end{corollary}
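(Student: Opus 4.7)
The plan is to derive the feedback control formula directly from the structure of the HJB equation in Proposition \ref{prop:HJBandFP}. I would start by expanding the Hamiltonian associated with the control problem \eqref{eq:ValueFn}: given an adjoint variable $\partial_q V$, the per-instant contribution from the control inside the sup is $\nu\,\partial_q V - \kappa \nu^2 - \nu p(t)$ (the $\nu$ term in the drift of $q$ pairs with $\partial_q V$, while the running cost contains $-\kappa \nu^2 - \nu p(t) - \phi q^2$). Maximizing this strictly concave quadratic in $\nu$ by setting its derivative to zero yields $\nu^* = \tfrac{1}{2\kappa}(\partial_q V - p(t))$. Plugging the maximizer back into the Hamiltonian produces precisely the $\tfrac{1}{4\kappa}(\partial_q V - p(t))^2$ term appearing in \eqref{eq:HJBandFP}, which confirms that the first-order condition is indeed attaining the pointwise supremum.

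Next, I would invoke the second part of Proposition \ref{prop:HJBandFP}, which provides the quadratic ansatz $V(t,q) = \theta_0(t) + \theta_1(t)q + \theta_2(t)q^2$, so that $\partial_q V(t,q) = \theta_1(t) + 2\theta_2(t)q$. Substituting this expression into $\nu^* = \tfrac{1}{2\kappa}(\partial_q V(t,q) - p(t))$ yields
\begin{equation*}
\nu^*(t,q) = \frac{\theta_1(t) + 2\theta_2(t)q - p(t)}{2\kappa} = \frac{1}{2\kappa}\bigl(\theta_1(t) - p(t)\bigr) + \frac{\theta_2(t)}{\kappa}q,
\end{equation*}
matching the claimed formula \eqref{eq:nu_theta}.

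The only delicate point is justifying that this pointwise maximizer actually realizes the supremum in the stochastic control problem \eqref{eq:ValueFn}, i.e. that it is admissible and optimal in the sense of condition (b) of Definition \ref{def:CompetitiveEquilibriaMFG}. For admissibility, since $\theta_1, \theta_2, p$ are continuous on $[0,T]$, the feedback coefficients are bounded, so the associated SDE for $q^*$ in condition (a) is a linear SDE with bounded coefficients, which admits a unique strong solution with $\mathbb{E}\int_t^T (\nu_u^*)^2\,du < \infty$, placing $\nu^*$ in $\mathcal{U}_t$. Optimality then follows from a standard verification argument (as already alluded to in the proof of Proposition \ref{prop:HJBandFP}): one applies It\^o's formula to $V(u, q_u)$ along an arbitrary $\nu \in \mathcal{U}_t$, uses the HJB equation to turn the drift into a non-positive integrand, and checks that the local-martingale part is a genuine martingale thanks to the quadratic growth of $V$ and integrability of $\nu$; equality holds precisely when $\nu = \nu^*$. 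I expect the only mildly technical ingredient to be this martingale/integrability check, but it follows routinely from the linear-quadratic structure and does not require any ingredient beyond what is already established.
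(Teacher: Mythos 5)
Your proposal is correct and follows essentially the route the paper intends: the corollary is stated without a separate proof precisely because the feedback form $\nu^* = \tfrac{1}{2\kappa}(\partial_q V - p)$ is the pointwise maximizer of the Hamiltonian in the HJB equation of Proposition \ref{prop:HJBandFP}, and substituting the quadratic ansatz \eqref{eq:Ansatz} gives \eqref{eq:nu_theta}. Your additional remarks on admissibility and the verification argument match what the paper itself defers to as ``standard verification arguments'' in the proof of Proposition \ref{prop:HJBandFP}.
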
 
\begin{remark} \label{rem:NotMktClearing}
We notice that the market clearing condition \eqref{eq:MktClearingMFG} is not necessary to derive neither the characterization of $(V,m)$ we provided in Proposition \ref{prop:HJBandFP} nor relation \eqref{eq:nu_theta} we presented above.
\end{remark}
The identity \eqref{eq:nu_theta} is somewhat insightful, as it sheds light on how we can interpret the functions $\theta_1$ and $\theta_2$. In fact, on the one hand, since $\theta_2 < 0,$ a consequence of this equation is that $\theta_2 q$ is a contribution to the trading rate of a player holding inventory $q$ that represents a force driving her to clear these holdings. On the other hand, there is another term: $\theta_1 - p.$ If $\theta_1 > p$ (respectively, $\theta_1 < p)$ then this term leads the trader to buy (respectively, sell) shares. Therefore, we can interpret $\theta_1$ as some indication of the future value of the asset --- something that relation \eqref{eq:sol_theta1} itself suggests.

\subsection{Comparison with \cite{duffie2017size}}

Our goal is to understand how prices in competitive equilibria are determined by the interactions of traders taking place in the LOB. We will do so by expressing the price component $p$ of a competitive equilibrium $(p,V,m,\nu^*,\Lambda)$ in terms of $\Lambda.$ Prior to that, but already in this direction, let us fix such a tuple $(p,V,m,\nu^*,\Lambda).$ We observe that a full clearing for optimal traders is equivalent to require that $\Lambda \equiv 0.$ In this case, from the market clearing condition \eqref{eq:MktClearingMFG}, we would have 
$$
\int\limits_{\mathbb{R}} \nu^*(t,q) m(t,q)dq=0.
$$
In view of \eqref{eq:nu_theta}, this implies that 
\begin{equation} \label{eq:ComparingWDuffie}
    p(t)=\theta_1(t)+2\theta_2(t)E(t),
\end{equation}
where we have written $E(t) := \int_{q} q m(t,q)dq.$ In this way, we have recovered a formula which is much similar to those derived in \cite[Eq. (9)]{duffie2017size}, for the discrete double auction, and in \cite[Eq. (A.99)]{duffie2017size}, for the continuous case, even though they analyzed a framework considerably distinct \commY{from} ours. Particularly, we once again observe from Eq. \eqref{eq:ComparingWDuffie} the analogy between $\theta_1$ and a future expected value of the asset's price. However, here we do not have to necessarily constrain ourselves to $\Lambda \equiv 0.$ In fact, if we consider a given realized order flow $\Lambda,$ not necessarily vanishing, we obtain
$$
p(t)= 2 \kappa \Lambda(t)+\theta_1(t)+2 \theta_2(t) E(t).
$$ 
Nevertheless, at this moment, the influence of $\Lambda$ on $E$ is not completely clear. The main result of the present section will address this issue, and we will be able, to a certain extent, to parametrize competitive equilibria with respect to $\Lambda.$ Before advancing in this direction, we infer some representations for the population's behavior in an equilibrium configuration.

\begin{proposition}
Let us consider a competitive equilibrium $(p,V,m,\nu^*,\Lambda).$ In this setting, the optimal inventory holdings $\left\{ q^*_t \right\}_{0\leqslant t \leqslant T}$ of a trader beginning with $q_0 \in L^2(\Omega)$ shares is
\begin{equation}\label{eq:trajec_Qstar}
q^*_t=q_0 e^{\int_0^t\frac{\theta_2(t)}{\kappa}}+\frac{1}{2\kappa}\int_0^t (\theta_1(s)-p(s))e^{\int_s^t\frac{\theta_2(\tau)}{\kappa}d\tau}ds + \sigma \int_0^t e^{\int_s^t\frac{\theta_2(\tau)}{\kappa}d\tau} dW_s,
\end{equation}
for $t\in[0,T]$. Moreover, the optimal trading rate process is
\begin{equation} \label{eq:OptmStrat}
\nu^*_t=\frac{1}{2 \kappa}\left(\theta_1(t)-p(t)\right) + \frac{\theta_2(t)}{\kappa} q^*_t,
\end{equation}
where $\theta_1, \theta_2$ are given, respectively, in \eqref{eq:sol_theta2} and \eqref{eq:sol_theta1}.
\end{proposition}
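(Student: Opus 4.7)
The plan is to combine the feedback characterization of the optimal control established in Corollary \ref{cor:RepFeedbackControl} with a direct integration of the resulting linear SDE for the inventory. By Corollary \ref{cor:RepFeedbackControl}, within any competitive equilibrium tuple $(p,V,m,\nu^*,\Lambda)$ the optimal control takes the feedback form $\nu^*(t,q) = \tfrac{1}{2\kappa}(\theta_1(t)-p(t)) + \tfrac{\theta_2(t)}{\kappa}q$. Substituting this into the dynamics of $q^*$ prescribed by condition (a) of Definition \ref{def:CompetitiveEquilibriaMFG} yields the affine SDE
\begin{equation*}
dq^*_t \;=\; \left[\frac{1}{2\kappa}(\theta_1(t)-p(t)) + \frac{\theta_2(t)}{\kappa}\,q^*_t\right]dt + \sigma\,dW_t, \qquad q^*_0 = q_0.
\end{equation*}
Since $\theta_1$ and $\theta_2$ are continuous on $[0,T]$ (as established via Proposition \ref{prop:HJBandFP} and the explicit formulas \eqref{eq:sol_theta2}--\eqref{eq:sol_theta1}) and $p \in C([0,T])$, the drift is affine with Lipschitz dependence on $q^*$, so strong existence and uniqueness is classical.

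Next, I would solve this linear SDE by the integrating-factor technique. Define $\Phi(t) := \exp\!\bigl(\int_0^t \theta_2(\tau)/\kappa\,d\tau\bigr)$, which satisfies $\Phi'(t) = (\theta_2(t)/\kappa)\Phi(t)$. Applying Itô's formula to $Y_t := \Phi(t)^{-1} q^*_t$ eliminates the linear term and gives
\begin{equation*}
dY_t \;=\; \Phi(t)^{-1}\!\left[\frac{1}{2\kappa}(\theta_1(t)-p(t))\,dt + \sigma\,dW_t\right].
\end{equation*}
Integrating from $0$ to $t$ and multiplying through by $\Phi(t)$ produces
\begin{equation*}
q^*_t \;=\; q_0\,\Phi(t) + \frac{1}{2\kappa}\int_0^t \frac{\Phi(t)}{\Phi(s)}(\theta_1(s)-p(s))\,ds + \sigma \int_0^t \frac{\Phi(t)}{\Phi(s)}\,dW_s,
\end{equation*}
and rewriting the ratio $\Phi(t)/\Phi(s) = \exp\bigl(\int_s^t \theta_2(\tau)/\kappa\,d\tau\bigr)$ gives exactly the representation \eqref{eq:trajec_Qstar}. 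The representation \eqref{eq:OptmStrat} for the optimal rate is then immediate: it is simply the feedback $\nu^*(t,q^*_t)$ from Corollary \ref{cor:RepFeedbackControl} evaluated along the optimal trajectory.

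I do not anticipate a substantive obstacle here: the proposition is essentially a bookkeeping consequence of Proposition \ref{prop:HJBandFP} and Corollary \ref{cor:RepFeedbackControl}, combined with the variation-of-parameters formula for a linear SDE. The only points requiring minor care are (i) confirming that the stochastic integral $\sigma\int_0^t \Phi(t)\Phi(s)^{-1}\,dW_s$ is well-defined, which follows from the boundedness of $\Phi$ and $\Phi^{-1}$ on $[0,T]$; and (ii) verifying that $\nu^*_t = \nu^*(t,q^*_t)$ indeed belongs to $\mathcal{U}_0$, which reduces via Gr\"onwall's inequality to the fact that $q_0 \in L^2(\Omega)$ together with the continuity of $\theta_1$, $\theta_2$ and $p$.
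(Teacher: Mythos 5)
Your proposal is correct and follows exactly the paper's own route: substitute the feedback control from Corollary \ref{cor:RepFeedbackControl} into the inventory dynamics to obtain a linear SDE, then solve it by variation of parameters. The paper simply states that the solution of the resulting linear SDE is \eqref{eq:trajec_Qstar}, whereas you spell out the integrating-factor computation and the integrability checks, which is a harmless elaboration.
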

\begin{proof}
The representation \eqref{eq:OptmStrat} of the optimal strategy follows promptly from \eqref{eq:nu_theta}. From \eqref{eq:Inventory} and \eqref{eq:nu_theta}, we derive
\[
dq^*_t =\frac{1}{2\kappa}(\theta_1(t)- p(t))\,dt + \frac{\theta_2(t)}{\kappa}q^*_t\,dt + \sigma dW_t.
\]
We easily see that the solution of this linear SDE is \eqref{eq:trajec_Qstar}. 
\end{proof}

\subsection{On the formed price in the MFG}\label{sec:ExplicitSolution}

We are now ready to state and prove the main theoretical result of the current section. Summing up, its content is a representation of the price in terms of the market supply/demand path in a competitive equilibrium, see \eqref{eq:FormedPriceMFG} below. Firstly, we provide the formula as a necessary condition for a given tuple to be a competitive equilibrium. We argue that this is the most relevant part of the theorem. In effect, this relation allows us to jump out of the model, looking for its validation on data stemming from real world prices of assets, which will be done in Section \ref{sec:Empirical}. 

\begin{theorem} \label{thm:MainTheorem}
Given a competitive equilibrium $(p,V,m,\nu^*,\Lambda)$ for the MFG, the formed price is
\begin{equation} \label{eq:FormedPriceMFG}
p(t) = p(0) + 2\phi E_0 t - 2\phi \int_0^t (t-u)\Lambda(u)du + 2 \kappa ( \Lambda(t) - \Lambda(0) ).
\end{equation}
\end{theorem}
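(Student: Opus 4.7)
The plan is to combine the market clearing condition with the feedback representation of the optimal control from Corollary \ref{cor:RepFeedbackControl} and the ODEs obeyed by $\theta_1,\theta_2$ in \eqref{eq:sys_thet}, so that a miraculous cancellation produces a clean identity for $p'$. Starting from \eqref{eq:nu_theta} and integrating against $m(t,\cdot)$, the clearing condition \eqref{eq:MktClearingMFG} immediately yields the algebraic relation
\begin{equation}\label{eq:plan_pEq}
p(t) = \theta_1(t) + 2\theta_2(t) E(t) + 2\kappa \Lambda(t), \qquad E(t) := \int_{\Rr} q\, m(t,q)\,dq,
\end{equation}
which already recovers the formula \eqref{eq:ComparingWDuffie} when $\Lambda \equiv 0$ and suggests the role of $\theta_1+2\theta_2 E$ as a ``fundamental'' component and of $2\kappa\Lambda$ as a contemporaneous impact.

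Next I would compute the evolution of the first moment $E$. Testing the Fokker--Planck equation in \eqref{eq:HJBandFP} against the function $q$, integrating by parts (the boundary terms vanish because $m(t,\cdot)$ is a density), and invoking the clearing condition once more gives
\begin{equation}\label{eq:plan_Edot}
E'(t) = \int_{\Rr} \nu^*(t,q) m(t,q)\,dq = -\Lambda(t), \qquad \text{whence } E(t) = E_0 - \int_0^t \Lambda(s)\,ds.
\end{equation}
The crucial step is to differentiate the fundamental part $\theta_1(t) + 2\theta_2(t) E(t)$. Using the second equation of \eqref{eq:sys_thet} to substitute $\theta_1' = -\tfrac{\theta_2}{\kappa}(\theta_1-p)$, the Riccati identity $\theta_2' = \phi - \theta_2^2/\kappa$, the formula \eqref{eq:plan_Edot} for $E'$, and then the algebraic identity \eqref{eq:plan_pEq} to rewrite $\theta_1-p = -2\theta_2 E - 2\kappa\Lambda$, one obtains
\begin{equation}\label{eq:plan_key}
\frac{d}{dt}\bigl[\theta_1(t) + 2\theta_2(t) E(t)\bigr] = \tfrac{2\theta_2^2}{\kappa}E + 2\theta_2\Lambda + 2\phi E - \tfrac{2\theta_2^2}{\kappa}E - 2\theta_2\Lambda = 2\phi E(t),
\end{equation}
in which the $\theta_2$-dependent terms cancel exactly.

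Finally, integrating \eqref{eq:plan_key} from $0$ to $t$ and substituting back into \eqref{eq:plan_pEq} gives
\begin{equation*}
p(t) - p(0) = 2\phi\int_0^t E(s)\,ds + 2\kappa\bigl(\Lambda(t)-\Lambda(0)\bigr),
\end{equation*}
and plugging \eqref{eq:plan_Edot} together with a Fubini swap $\int_0^t\!\!\int_0^s \Lambda(u)\,du\,ds = \int_0^t (t-u)\Lambda(u)\,du$ delivers \eqref{eq:FormedPriceMFG}. The main obstacle I anticipate is not conceptual but notational: keeping track of the signs in the market clearing identity \eqref{eq:plan_pEq} and verifying that the apparently delicate cancellation in \eqref{eq:plan_key} relies precisely on the Riccati equation rather than on any extra assumption on the regularity of $\Lambda$; working at the level of integrated differences (rather than differentiating $p$ directly) keeps the argument clean, since no hypothesis beyond continuity of $p$ and $\Lambda$ is required.
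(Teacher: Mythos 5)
Your proposal is correct and is essentially the paper's own argument: your ``fundamental component'' $\theta_1(t)+2\theta_2(t)E(t)$ coincides with the paper's adjoint quantity $\Pi(t)=\Ee\left[\partial_q V(t,q^*_t)\right]$, your central cancellation is exactly the paper's key identity $\Pi'(t)=2\phi E(t)$ (equation \eqref{eq:PiPrime}), and the conclusion via $E'=-\Lambda$, $p=\Pi+2\kappa\Lambda$, and the Fubini swap is identical. The only difference is cosmetic: you establish $\Pi'=2\phi E$ from the explicit Riccati and linear ODEs for $\theta_2$ and $\theta_1$ (invoking the clearing condition to rewrite $\theta_1-p$), whereas the paper obtains the same identity by differentiating the HJB equation in $q$ and integrating by parts against the Fokker--Planck equation, without needing market clearing at that stage.
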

\begin{proof}
Let us assume $(p,V,m,\nu^*,\Lambda)$ is a competitive equilibrium. Remember that the law of $q^*_t$ is given by $m(t,\cdot)$. We define:
\begin{equation}
  \begin{cases}
    \Pi(t):=\Ee[\partial_q V(t,q^*_t)] = \int_\Rr \partial_q V(t,q) m(t,q)\,dq; \\ 
    E(t):= \Ee [q^*_t] =\int_\Rr q m(t,q)\,dq.
  \end{cases}
\end{equation}

Differentiating the first equation in \eqref{eq:HJBandFP} with respect to $q$ yields
\begin{equation}\label{eq:diff_hjb}
\partial_q\partial_tV(t,q)+\frac{\sigma^2}{2}\partial_{q}^3V(t,q) +\frac{(\partial_qV(t,q) - p(t))}{2\kappa}\partial_{q}V(t,q)-2\phi q =0.
\end{equation} 
We emphasize that this differentiation is licit since $V$ is a quadratic polynomial with respect to $q,$ with coefficients that are differentiable in time. Therefore, we observe that
\begin{align}\label{eq:PiPrime1}
    \begin{split} 
        \Pi^\prime(t)&= \int_{\mathbb{R}} \partial_q\partial_t V(t,q)m(t,q)\,dq + \int_{\mathbb{R}} \partial_q V(t,q) \partial_t m(t,q)\,dq\\
        &=\int_{\mathbb{R}} \Big\{ -\frac{\sigma^2}{2}\partial_{q}^3V(t,q)-\frac{\left(\partial_q V(t,q) - p(t)\right)}{2\kappa}\partial_{q}^2V(t,q)+\phi q \Big\}m(t,q)\,dq\\
        &\hspace{3.5cm}-\int_{\mathbb{R}} \frac{1}{2\kappa}\partial_q V(t,q)\partial_q \left[ (\partial_q V(t,q)-p(t) )m(t,q) \right]\,dq.
    \end{split}
\end{align}
Next, we notice that carrying out integration by parts in identity \eqref{eq:PiPrime1}, we have
\begin{equation} \label{eq:PiPrime}
    \Pi^\prime(t)=2\phi E(t).
\end{equation}
Employing \eqref{eq:Ansatz} and \eqref{eq:nu_theta} in \eqref{eq:Inventory}, we infer
\begin{align*}
q^*_t &= q_0 + \int\limits_{0}^t \frac{\partial_q V(s,q^*_s) - p(s) }{2 \kappa}ds+  \sigma W_t,
\end{align*}
in such a way that
\begin{equation} \label{eq:RepresentationOfE}
    E(t)=\Ee[q_0] + \int\limits_{0}^t \Ee\bigg[ \frac{\partial_q V(s,q^*_s)  - p(s) }{2 \kappa}\bigg]ds.
\end{equation}
From \eqref{eq:RepresentationOfE} and \eqref{eq:MktClearingMFG}, we deduce
\begin{equation} \label{eq:Eprime}
    E^\prime(t)=\Ee\bigg[ \frac{\partial_q V(t,q^*_t)  - p(t) }{2 \kappa}\bigg] = -\Lambda(t),
\end{equation}
and additionally, by the definition of $\Pi$, we find
\begin{equation} \label{eq:Eprime_Pi}
    E^\prime(t)=\Ee\bigg[ \frac{\partial_q V(t,q^*_t)  - p(t) }{2 \kappa}\bigg] = \frac{\Pi(t) - p(t)}{2\kappa}.
\end{equation}
Collecting \eqref{eq:PiPrime}, \eqref{eq:Eprime} and \eqref{eq:Eprime_Pi}, we obtain
\begin{equation}\label{eq:system_p}
\begin{cases}
\Pi^\prime(t)&=2\phi E(t),\\[10pt]
E^\prime(t)&= \displaystyle \frac{\Pi(t)}{2\kappa}-\frac{p(t)}{2\kappa}= -\Lambda(t).\\[10pt]
\end{cases}
\end{equation}
On the one hand, from the second equation in system \eqref{eq:system_p}, we have
\begin{equation*} 
    E(t) = E_0 - \int_0^t \Lambda(s)ds.
\end{equation*}
On the other hand, the first one gives us
\begin{align*} 
    \begin{split}
        \Pi(t) &= \Pi(0) + 2\phi \int_0^t  E(s)ds\\
        &= \Pi(0) + 2\phi E_0 t - 2\phi \int_0^t \int_0^s\Lambda(u)\,du\,ds\\
        &=\Pi(0) + 2\phi E_0 t - 2\phi \int_0^t \int_u^t\Lambda(u)\,ds\,du\\
        &=\Pi(0) + 2\phi E_0 t - 2\phi \int_0^t (t-u)\Lambda(u)\,du,
    \end{split}
\end{align*}
as well as, again from the second equation,
\begin{align} \label{eq:PandPi}
    \begin{split}
        p(t) &= \Pi(t) + 2 \kappa \Lambda(t)\\
        &= \Pi(0) + 2\phi E_0 t - 2\phi \int_0^t (t-u)\Lambda(u)du + 2 \kappa \Lambda(t).
    \end{split}
\end{align}
In particular,
\begin{equation} \label{eq:Pi0}
    \Pi(0) = p(0)-2 \kappa \Lambda(0). 
\end{equation}
In conclusion, \eqref{eq:PandPi} and \eqref{eq:Pi0} allow us to infer
\begin{equation}
p(t) = p(0) + 2\phi E_0 t - 2\phi \int_0^t (t-u)\Lambda(u)du + 2 \kappa ( \Lambda(t) - \Lambda(0)).
\end{equation}

\end{proof}

\begin{remark}\label{rmk:SDE_for_p_quadratic_var}

The expression for the formed price \eqref{eq:FormedPriceMFG} implies the dynamics
\begin{equation} \label{eq:diff_eqn_p_lambda_MFG}
dp(t) = 2\phi \left(E_0 - \int_0^t \Lambda(u)du\right) dt + 2\kappa d\Lambda(t).
\end{equation}
Notice the drift of $p$ depends on the path of $\Lambda$ through its integral. Moreover, one might use this dynamics to conclude the following correspondence of quadratic variations: $\langle p \rangle_t = 4\kappa^2 \langle \Lambda \rangle_t.$

\end{remark}

\subsection{Discussion on the formed price}


Let us discuss the main aspects of the formula \eqref{eq:FormedPriceMFG}. 

We first consider $2\kappa(\Lambda(t) - \Lambda(0))$ --- it corresponds to the contribution of the current instantaneous market supply/demand $\Lambda(t)$ rate to the formed price (at time $t$). We remark that it is proportional to the difference $\Lambda(t) - \Lambda(0),$ the slope being $2\kappa.$ This is the component of the price which results from the fact that traders are willing to pay costs for trading the asset. In effect, for trading a volume of $\left|\Lambda(t) - \Lambda(0)\right|,$ they will pay a total of $\kappa\left| \Lambda(t) - \Lambda(0)\right|^2$ in transaction costs due to this market friction. The resulting influence of this movement on the formed price is $2\kappa\left(\Lambda(t) - \Lambda(0) \right)$.

A similar consideration is valid for $2\phi E_0 t.$ Namely, if the average initial holdings (or, equivalently, the initial targets) of players, $E_0$, is positive (resp., negative), then our modeling implies they are willing to sell (resp., buy). However, this term $2\phi E_0 t$ indicates a positive (resp., negative) drift on the price, as long as $\phi > 0,$ which grows proportionally to time. Thus, if players are not quick enough to execute their schedules, we would expect prices to be moving favorably to their current holdings, with this effect being more pronounced for larger times. In a sense, this term represents an inertia component on the asset's price.  Here, the inertia we refer to is distinct from what was studied in \cite{bayraktar2007queuing} (see also references therein).  In our setting, this terminology only indicates that $2 \phi E_0 t$ exert a sort of resistance on trading rates moving the price.

Finally, we comment on $-2\phi \int_0^t (t-u) \Lambda(u) du$, which is undoubtedly the most interesting term. We notice we can naturally interpret it as a \textit{memory} term. Indeed, the weight $(t-u)$ of the value $\Lambda(u)$ appearing within the integral is larger for older market supply/demand rate values.  In a certain sense, as markets ``digest'' past information, they will appear more significantly in the formed price formula.

We finish the present section by carrying out some explicit comparative statics to highlight the influence of the parameters on the formed price \eqref{eq:FormedPriceMFG}.

\paragraph{\textbf{Sensitivity relative to the temporary price impact}} We observe that $p$ is linear in $\kappa,$ with slope $\partial_{\kappa} p(t) = 2(\Lambda(t) - \Lambda(0)).$ As we discussed above, this is the term in the price representing the trader's willingness to pay costs to negotiate the asset. 

\paragraph{\textbf{Sensitivity relative to the average initial target}} We notice that $p$ is also linear with respect to $E_0.$ In this case, we have slope $\partial_{E_0} p(t) = 2\phi t,$ which always has the same sign as $\phi.$ Thus, if traders are risk averse, meaning $\phi > 0$ (resp., risk loving, meaning $\phi<0$), then $p$ is an increasing (resp., decreasing) function of $E_0.$ Let us assume for concreteness that $\phi > 0.$ Then, if the market has a larger target, \textit{ceteris paribus} --- particularly, while $\Lambda$ remains still --- we can only explain this by the price being larger as well. In other words, if we consider that, when behaving according to the same criteria, players still trade with all else remaining the same, especially demanding on average the same $\Lambda$, then the formed price must be larger. 

\paragraph{\textbf{Sensitivity relative to the trading urgency}} The price $p$ is also linear with respect to $\phi.$ The slope here is $\partial_{\phi} p(t) = 2E_0 t - 2\int_0^t (t-u)\Lambda(u)\,du.$ Thus, the conflict between the memory term, representing the detrimental pressure traders undergo due to market frictions, and one of the inertia terms, comprising the average initial holdings/targets of players, governs how prices change proportionally to the urgency parameter. 

\paragraph{\textbf{Sensitivity relative to the order flow}} Finally, we analyze how prices depend on $\Lambda.$ They are linear on it, in a functional sense. Namely, given a time $t\in\left[0,\,T\right],$ as a function of the path $\left\{ \Lambda(u) \right\}_{0\leqslant u \leqslant t},$ the formed price $p(t)$ evaluated at time $t$ is affine. We identify the analogous element to the slope here (which is now a whole path) if we perturb the order flow $\Lambda$ by trading at a rate $\left\{ \nu_u \right\}_{0 \leq u \leq t},$ and this yields the difference between the two resulting prices:
$$
2\kappa \nu_t - 2\phi \int_0^t (t-u)\nu_u \,du.
$$

\subsection{Existence of Equilibrium}


In this section we discuss two settings. Basically, among the realized order flow $\Lambda$ and the formed price $p,$ there is only one degree of freedom. Namely, assuming the price $p$ is given, with suitable regularity, we can derive the order flow in terms of the traders' optimal turnover rate $\nu^*$ and their inventory distribution density $m$ --- this is the content of our next result. Conversely, if we assume that the order flow is given, and that it satisfies suitable compatibility conditions (which we will describe later, see \eqref{eq:Compatibility}), then we can assert that $p,$ determined by the formula \eqref{eq:FormedPriceMFG}, is \textit{the} price formed by this order flow. In the latter framework, the necessity clause stems precisely from \eqref{eq:FormedPriceMFG}, whereas sufficiency is what demands the compatibility relation we mentioned above. We proceed to discuss these results in the sequel.

Before doing so, let us exemplify the case when $p$ is given considering  that $p(t) \equiv p_0 $ is a constant, with $p_0 > 0.$ Then, we can derive from \eqref{eq:diff_eqn_p_lambda_MFG} that $\Lambda$ solves 
$$
\begin{cases}
\kappa \Lambda^{\prime\prime}(t) = \phi \Lambda(t),\\
\Lambda(0) = \Lambda_0,\, \Lambda^\prime(0) = \phi E_0 / \kappa.
\end{cases}
$$
If we assume that the model parameters satisfy
$$
\Lambda_0\sqrt{\kappa} - \sqrt{\phi} E_0 \neq 0,
$$
then we can derive from the above second-order ODE the closed-form solution for $\Lambda:$
$$
\Lambda(t) = \Lambda_0 e^{\int_0^t g(u)\,du},
$$
where
$$
g(u) = \gamma\left( \frac{e^{2\gamma u}\beta - 1}{e^{2\gamma u}\beta + 1} \right), u \geqslant 0,
$$
the constants $\gamma$ and $\beta$ being given by
$$
\gamma = \sqrt{\frac{\phi}{\kappa}} \text{ and } \beta= \frac{\Lambda_0\sqrt{\kappa} + \sqrt{\phi} E_0}{\Lambda_0\sqrt{\kappa} - \sqrt{\phi} E_0}.
$$


The more general corollary below follows from Proposition \ref{prop:HJBandFP}.

\begin{corollary}
Let $p \in C\left( \left[0,T\right] \right),$ $(V,m)$ be solution to \eqref{eq:HJBandFP}, and $\nu^*$ be given in feedback form by \eqref{eq:nu_theta}. Then, by setting $\Lambda(t) = -\int_q \nu^*(t,q)m(t,q)dq,$ the tuple $(p,V,m,\nu^*,\Lambda)$ is a competitive equilibrium. In particular, competitive equilibria exist.
\end{corollary}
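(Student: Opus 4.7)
The plan is to verify the four clauses of Definition \ref{def:CompetitiveEquilibriaMFG} one by one, using Proposition \ref{prop:HJBandFP} and Corollary \ref{cor:RepFeedbackControl} as the main inputs. Since $\Lambda$ is \emph{defined} as $-\int_{\Rr}\nu^*(t,q)m(t,q)\,dq,$ condition (d) holds by construction, so the real content is in checking (a), (b) and (c).

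For (a), I would observe that by \eqref{eq:nu_theta} the feedback $\nu^*$ is affine in $q,$
\[
\nu^*(t,q)=\tfrac{1}{2\kappa}\bigl(\theta_1(t)-p(t)\bigr)+\tfrac{\theta_2(t)}{\kappa}q,
\]
with coefficients that are continuous in $t$ on $[0,T]$ by \eqref{eq:sol_theta2}--\eqref{eq:sol_theta1} and the assumption $p\in C([0,T]).$ In particular $\nu^*$ is globally Lipschitz in $q$ uniformly in $t,$ so the linear SDE for $q^*$ admits a unique strong solution for every square-integrable initial datum, and standard estimates give $\Ee[\sup_{u\in[t,T]}(q^*_u)^2]<\infty.$

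For (c), I would argue as in \cite[Lemma 3.3]{cardaliaguet2010notes}: applying It\^o's formula to a test function and taking expectations shows that the (time-marginal) law of $q^*_u,$ with $q^*_0$ drawn according to $m_0,$ weakly solves the Kolmogorov--Fokker--Planck equation appearing in \eqref{eq:HJBandFP}. Since Proposition \ref{prop:HJBandFP} provides uniqueness of the smooth solution $m,$ the density of $\law(q^*_u)$ must coincide with $m(u,\cdot),$ which is (c).

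The main technical step is (b), which is a verification argument. I would fix $(t,q)$ and any admissible $\nu\in\Uu_t,$ apply It\^o's formula to $V(u,q_u)$ between $t$ and $T,$ and use the fact that $V(t,q)=\theta_0(t)+\theta_1(t)q+\theta_2(t)q^2$ grows at most quadratically in $q$ (so that the stochastic integral has zero expectation, given $\Ee\int_t^T \nu_u^2\,du<\infty$ and the moment bound on $q_u$). Rearranging the HJB equation in \eqref{eq:HJBandFP} as
\[
\partial_t V+\tfrac{\sigma^2}{2}\partial_q^2 V-\phi q^2-p(u)\nu-\kappa\nu^2+\bigl(\partial_q V\,\nu+\kappa\nu^2+p(u)\nu-\tfrac{1}{4\kappa}(\partial_q V-p(u))^2\bigr)=0,
\]
where the bracketed quantity is $\kappa(\nu-\nu^*(u,q))^2\geqslant 0$ and vanishes iff $\nu=\nu^*(u,q_u),$ yields
\[
V(t,q)\;\geqslant\;\Ee\!\left[V(T,q_T)-\int_t^T\bigl(\kappa\nu_u^2+\phi q_u^2+p(u)\nu_u\bigr)du\,\Big|\,q_t=q\right],
\]
with equality when $\nu=\nu^*.$ Using the terminal condition $V(T,q)=q(p(T)-Aq),$ the right-hand side is exactly the performance criterion in \eqref{eq:ValueFn}, so the supremum is attained by $\nu^*$ and equals $V(t,q).$ The existence of a strong solution from (a) and the quadratic shape of $V$ make the integrability conditions needed for the verification step routine; this is the only place where one must be careful, and the main obstacle is simply checking the martingality of the It\^o integral, which reduces to the moment bound already derived in (a).
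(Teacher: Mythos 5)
Your proposal is correct and follows essentially the route the paper intends: the paper states this corollary with no separate proof, treating it as an immediate consequence of Proposition \ref{prop:HJBandFP} (whose own proof invokes, without writing out, exactly the ``standard verification argument'' you supply for condition (b)), while condition (d) of Definition \ref{def:CompetitiveEquilibriaMFG} holds by construction of $\Lambda$ and conditions (a) and (c) follow from the affine structure of \eqref{eq:nu_theta} and the Fokker--Planck argument just as you describe. One small algebra slip: the bracketed quantity in your displayed rearrangement of the HJB equation is not $\kappa(\nu-\nu^*)^2$ as written; completing the square gives $\tfrac{1}{4\kappa}(\partial_q V-p)^2=(\partial_q V-p)\nu-\kappa\nu^2+\kappa\bigl(\nu-\tfrac{\partial_q V-p}{2\kappa}\bigr)^2$, so the correct grouping is $\partial_t V+\nu\,\partial_q V+\tfrac{\sigma^2}{2}\partial_q^2 V=p\nu+\kappa\nu^2+\phi q^2-\kappa(\nu-\nu^*)^2$, after which the inequality, the equality case at $\nu=\nu^*$, and the integrability checks go through exactly as you conclude.
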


We now consider the other situation where $\Lambda$ is given and derive a condition for the price given by \eqref{eq:FormedPriceMFG} that gives us a competitive equilibrium.

\begin{proposition}
Given $\Lambda,$ let us define $p$ as in \eqref{eq:FormedPriceMFG}, $(V,m)$ as the corresponding solution of \eqref{eq:HJBandFP}, and $\nu^*$ as in \eqref{eq:nu_theta}. Then, the compatibility condition\footnote{Notice that $\theta_i$ depends on $p$ for $i=1,2$.}
\begin{equation} \label{eq:Compatibility}
    \theta_1(t) + 2\theta_2(t) \int_{\mathbb{R}} qm(t,q)dq = p(0) + 2\phi E_0 t - 2\phi \int_0^t (t-u)\Lambda(u)du - 2\kappa \Lambda(0)
\end{equation}
implies that $(p,V,m,\nu^*,\Lambda)$ is a competitive equilibrium.
\end{proposition}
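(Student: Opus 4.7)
The plan is to leverage Proposition \ref{prop:HJBandFP} and Corollary \ref{cor:RepFeedbackControl} to secure conditions (a), (b), and (c) of Definition \ref{def:CompetitiveEquilibriaMFG} essentially for free, and then to recognize that the compatibility condition \eqref{eq:Compatibility} is nothing more than a direct reformulation of the market clearing condition (d).

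First I would dispatch conditions (a), (b), (c). Since $p$ defined by \eqref{eq:FormedPriceMFG} is continuous in $t$ (inheriting continuity from the given $\Lambda$ and from the convolution-type integral), Proposition \ref{prop:HJBandFP} supplies the smooth solution $(V,m)$ with $V$ quadratic in $q$, and by Corollary \ref{cor:RepFeedbackControl} the feedback $\nu^*(t,q) = \frac{1}{2\kappa}(\theta_1(t) - p(t)) + \frac{\theta_2(t)}{\kappa} q$ is affine in $q$ with smooth time-dependent coefficients. The SDE in (a) is therefore a linear SDE admitting a unique strong solution by standard theory. Optimality of $\nu^*$ (condition (b)) follows from the standard verification theorem attached to the HJB equation in \eqref{eq:HJBandFP}, exploiting the quadratic ansatz and the admissibility of $\nu^*$. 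Condition (c) follows because $m$ by construction solves the Kolmogorov--Fokker--Planck equation with drift $\nu^*$ and diffusion $\sigma$, with initial data $m_0 = \law(q_0)$.

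The crux is then condition (d). Using \eqref{eq:nu_theta}, I would compute
\begin{equation*}
\int_{\Rr} \nu^*(t,q)\, m(t,q)\, dq = \frac{\theta_1(t) - p(t)}{2\kappa} + \frac{\theta_2(t)}{\kappa}\, E(t),
\end{equation*}
where $E(t) := \int_{\Rr} q\, m(t,q)\, dq$. The desired market clearing identity $\int \nu^* m\, dq + \Lambda = 0$ then rearranges, after multiplication by $2\kappa$, to
\begin{equation*}
\theta_1(t) + 2\theta_2(t)\, E(t) = p(t) - 2\kappa \Lambda(t).
\end{equation*}
Substituting the explicit formula \eqref{eq:FormedPriceMFG} for $p$, the right-hand side simplifies to $p(0) + 2\phi E_0 t - 2\phi \int_0^t (t-u)\Lambda(u)\, du - 2\kappa \Lambda(0)$, which is precisely the right-hand side of \eqref{eq:Compatibility}. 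Hence the compatibility hypothesis \emph{is} exactly the market clearing condition, and (d) holds, so $(p,V,m,\nu^*,\Lambda)$ meets all four clauses of Definition \ref{def:CompetitiveEquilibriaMFG}.

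The main technical point to handle carefully — more annoying than truly difficult — is ensuring that the first moment $E(t)$ is finite and continuous in $t$, so that the integral identity used above is legitimate. This is standard under a mild integrability hypothesis on $m_0$ (e.g., finite second moment), propagated to $m(t,\cdot)$ by moment estimates for the linear SDE governing $q^*$. Modulo this regularity check, the entire argument is an algebraic rearrangement, essentially recognizing that \eqref{eq:Compatibility} is the market clearing condition in disguise once $p$ is substituted from \eqref{eq:FormedPriceMFG}.
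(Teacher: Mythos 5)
Your proposal is correct and follows essentially the same route as the paper: the paper likewise reduces everything to verifying the market clearing condition (d) and then checks it by integrating \eqref{eq:nu_theta} against $m$ and substituting \eqref{eq:FormedPriceMFG}, which is exactly your algebraic rearrangement showing that \eqref{eq:Compatibility} is the clearing condition in disguise. Your additional remarks on securing (a)--(c) via Proposition \ref{prop:HJBandFP} and on the integrability of the first moment $E(t)$ are sound elaborations of points the paper leaves implicit.
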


\begin{proof}
Let us assume $\Lambda$ is given, that we define $p$ as \eqref{eq:FormedPriceMFG}, that $(V,m)$ is the solution of \eqref{eq:HJBandFP} corresponding to $p,$ and that $\nu^*$ satisfies \eqref{eq:nu_theta}. At this point, to show that $(p,V,m,\nu^*,\Lambda)$ is a competitive equilibrium, we need only to verify the market clearing condition \eqref{eq:MktClearingMFG}. To do this, we integrate \eqref{eq:nu_theta} against $m$ (cf. Remark \ref{rem:NotMktClearing}), whence we derive
\begin{align*}
    \int_{\mathbb{R}} \nu^*(t,q)&m(t,q)dq = \frac{1}{2\kappa}\left( \theta_1(t) - p(t) \right) + \frac{\theta_2(t)}{\kappa}\int_{\mathbb{R}} q m(t,q)dq \\
    &= \frac{\theta_1(t)}{2\kappa} + \frac{\theta_2(t)}{\kappa}\int_{\mathbb{R}} q m(t,q)dq - \frac{1}{2\kappa}\left[ p(0) + 2\phi E_0 t - 2\phi \int_0^t (t-u)\Lambda(u)du + 2 \kappa ( \Lambda(t) - \Lambda(0)) \right] \\
    &= \frac{\theta_1(t)}{2\kappa} + \frac{\theta_2(t)}{\kappa}\int_{\mathbb{R}} q m(t,q)dq - \frac{1}{2\kappa}\left[ p(0) + 2\phi E_0 t - 2\phi \int_0^t (t-u)\Lambda(u)du - 2 \kappa \Lambda(0) \right] - \Lambda(t). \\
    &= - \Lambda(t),
\end{align*}
where we used \eqref{eq:Compatibility} in the last equality. 
\end{proof}

The ``compatibility condition'' \eqref{eq:Compatibility} ensures that, if we \textit{define} the price according to the necessary relation, then the corresponding tuple (consisting of the appropriate objects, viz., price, value function, density of inventories, strategy in feedback form, and market supply/demand path) is indeed a competitive equilibrium.

\section{Price formation with a finite population model} \label{sec:FinitePop} 

In the price discovery framework, a relevant research question is to investigate the convergence of the short-term price to the ``true'' long-term one. In effect, as \cite{biais2005market} points out, in reference to \cite{walras1896elements}, the tradition L. Walras initiated has roots on this question.\footnote{``In perfect markets, Walrasian equilibrium prices reflect the competitive demand curves of all potential investors.  While determining these fundamental equilibrium valuations is the focus of (most) asset pricing, market microstructure studies how, in the short term, transaction prices converge to (or deviate from) long-term equilibrium values. Walras himself was concerned about the convergence to equilibrium prices, through a t\^atonnement process. One of the first descriptions of the microstructure of a financial market can be found in the Elements d’Economie Politique Pure (1874), where he describes the workings of the Paris Bourse. Walras’s field observations contributed to the genesis of his formalization of how supply and demand are expressed and markets clear. Market microstructure offers a unique opportunity to confront directly microeconomic theory with the actual workings of markets. This facilitates both tests of economic models and the development of policy prescriptions.''} We focus on a transposed version of this question in the price formation framework in the current section.  More precisely, we inquire under which conditions we can ensure that the price which is formed as a result of the interaction among traders of a finite population converges to the one that we have identified in the mean-field setting, i.e., the one we found in \eqref{eq:FormedPriceMFG}.


\subsection{The model} \label{subsec:MktModel}

\paragraph{\textbf{The market}} We consider a market consisting of a single venue. There are $N\geqslant 1$ traders present there, trading a single asset, in similar terms as in Section \ref{sec:MFG}. We index players using labels $i \in \mathcal{N} := \left\{ 1,\ldots, N \right\}.$ We denote the mid-price of the asset by $p = \left\{ p_t \right\}_{0\leqslant t \leqslant T}.$ Here, we allow $p$ to be stochastic. We will discuss our assumptions on $p$ below. \\

Before doing this, we describe the information structure we assume throughout the present section.

\paragraph{\textbf{Information structure}} We fix a complete probability space $(\Omega, \Ff, \mathbb{P})$, and we consider a $N+1$-dimensional Brownian motion $W = \left\{ (W^0_t, W^1_t, \ldots, W^N_t) \right\}_{0\leqslant t \leqslant T}$. Additionally, we denote the filtration generated by $W^0$ and $W^i$ by $\mathbb{F}^{0i} = \left\{ \mathcal{F}^{0i}_t \right\}_{0\leqslant t \leqslant T}$, which satisfies the usual conditions and models the gathering of information of player $i.$ We write $\mathbb{F}^0 := \left\{ \mathcal{F}^0_t \right\}_{0 \leqslant t \leqslant T}$ for the common information shared by all the players, which in our case is the filtration generated by $W^0$.

\paragraph{\textbf{Admissible prices}} In the finite population setting, we look for prices that are sensitive to the movements of each particular individual. That is, we consider possible prices $p$ that are mappings depending on the strategies of each trader. To carry out this endeavor, we will first specify the class of admissible strategies. Henceforth, for each $i \in \mathcal{N} \cup \left\{ 0 \right\},$ the set $\mathcal{A}_i$ consists of the $\mathbb{F}^{0i}-$progressively measurable processes $\left\{ \nu_t\right\}_{0\leqslant t \leqslant T}$ such that $\mathbb{E}\left[\int_0^T  \nu_t^2\,dt \right] < \infty.$ Also, we put 
$$
\mathcal{A} := \Pi_{i=1}^N \mathcal{A}_i, \text{ and } \mathcal{A}_{-i} := \Pi_{j\neq i} \mathcal{A}_j,
$$
For each $\boldsymbol{\nu} = \left( \nu^1,\ldots, \nu^N \right)^\intercal \in \mathcal{A}$ and $i\in\mathcal{N},$ we will write $\boldsymbol{\nu}^{-i} := \left( \nu^1,\ldots,\nu^{i-1},\nu^{i+1},\ldots,\nu^N\right)^\intercal,$ as well as $\left( \nu^i; \boldsymbol{\nu}^{-i} \right) := \boldsymbol{\nu}.$

Then, we focus on the class of prices $\mathfrak{P}$ comprising the mappings $\boldsymbol{\nu} \in \mathcal{A} \mapsto p\left( \boldsymbol{\nu} \right) \in \mathcal{A}_0$ satisfying the subsequent conditions:
\begin{itemize}
    \item[(\textbf{C1})] For each $t \in \left[0,T\right],$ $i \in \mathcal{N}$ and $\bm{\nu} \in \mathcal{A},$ there exists a process $\left\{\xi_{t,u}^{i,N}(\bm{\nu})\right\}_{0\leqslant u \leqslant t} \in \mathcal{A}_i$ such that
    $$
    \left\langle D_ip_t(\bm{\nu}),w^i\right\rangle := \lim_{\epsilon \rightarrow 0} \frac{p_t(\nu^i + \epsilon w^i;\boldsymbol{\nu}^{-i}) - p_t(\nu^i; \boldsymbol{\nu}^{-i})}{\epsilon} = \mathbb{E}\left[ \int_0^t \xi_{t,u}^{i,N}(\bm{\nu}) w^i_u\,du \Big| \mathcal{F}^0_t \right] \hspace{1.0cm} (w^i \in \mathcal{A}_i),
    $$
    the above limit holding strongly in the $L^2(\Omega)-$topology.
    \item[(\textbf{C2})] For each $t \in \left[0,T\right],$ $i \in \mathcal{N}$ and $\bm{\nu} \in \mathcal{A},$ we have $ \xi^{i,N}_{T,t}(\bm{\nu}) \in L^2(\Omega, \mathcal{F}^0_T).$ 
\end{itemize}

Due to the structure of the influence of an individual trading rate on the asset price, we call $\xi_{t,u}^{i,N}$ the microstructural kernel impact of player $i$ on the asset price.

Next, we provide an example of prices that belong to the class $\mathfrak{P},$ whence it is non-empty.
\begin{example} \label{ex:PriceExample}
If $p$ is given by 
\begin{equation} \label{eq:examplePrice}
    p_t(\bm{\nu}) = \widetilde{p}_t + \alpha\mathbb{E}\left[ \frac{1}{N}\sum_{i=1}^N q^i_t \Bigg| \mathcal{F}^0_t \right],
\end{equation}
for some $\alpha > 0,$ and some process $\{\widetilde{p}_t\}_{0\leqslant t \leqslant T} \in \mathcal{A}_0,$ then $p \in \mathfrak{P}.$ In this case, we have $\xi^{i,N}_{t,u}(\bm{\nu}) \equiv \alpha/N.$
\end{example}
Note that Example \ref{eq:examplePrice} corresponds to the case in which the reference price is linearly affected by the traders' turnover rates -- it is a generalized formulation of a linear permanent price impact.

We elaborate further discussions on $\mathfrak{P}$ in \ref{app:discAdmPrices}.


\paragraph{\textbf{The traders}} Let us fix $p \in \mathfrak{P}.$ For each $i \in \mathcal{N},$ the trader $i$ controls her turnover rate $\left\{ \nu^i_t \right\}_{0\leqslant t \leqslant T},$ selecting it out of the set $\mathcal{A}_i.$ Consequently, her inventory process, which we denote by $\left\{ q_t^i \right\}_{0\leqslant t \leqslant T},$ has the dynamics 
\begin{equation}\label{eq:inv_opt_trad}
\begin{cases}
\displaystyle dq^i_t = \nu^i_t dt + \sigma^i d W^i_t,\\
q^i_0 \text{ given,}
\end{cases}
\end{equation}
where $\sigma^i\geq 0$ for each $i\in\mathcal N.$ With a similar construction as in \eqref{eq:Cash}, we stipulate that her cash process $\left\{ c^i_t \right\}_{0 \leqslant t \leqslant T}$ evolves according to
\begin{equation}\label{eq:cash}
\begin{cases}
\displaystyle dc^i_t = -\left[ p_t\left( \boldsymbol{\nu} \right) + \kappa \nu_t^i \right]\nu_t^i\,dt ,\\
c^i_0 \text{ given.}
\end{cases}
\end{equation}
Furthermore, by denoting the wealth process of this trader by $\left\{ w_t^i := c^i_t + q^i_t p_t(\boldsymbol{\nu}) \right\}_{0\leqslant t \leqslant T},$ we have, as in \eqref{eq:Wealth},
\begin{align} \label{eq:wealth}
    \begin{split}
        w_t^i &=c^i_0 -\int_0^t \left[p_u\left(\boldsymbol{\nu}\right) + \kappa \nu^i_u\right] \nu^i_u du + q^i_t p_t\left(\boldsymbol{\nu}\right),
    \end{split}
\end{align} 
for each $t \in \left[0,T\right].$

\paragraph{\textbf{The performance criteria}} Following a similar reasoning as in the construction of \eqref{eq:ValueFn}, we build the performance criteria for player $i.$ Assuming agent $i$ uses strategy $\nu^i \in \mathcal{A}_i,$ and that the remaining ones trade with a profile $\boldsymbol{\nu}^{-i} \in \mathcal{A}_{-i},$ we define the payoff $J_i\left( \nu^i; \boldsymbol{\nu}^{-i}\right)$ of player $i$ as
\begin{align}\label{eq:cost1}
  \begin{split}
    J^i(\nu^i;\bm{\nu}^{-i}) &:= \Ee \bigg[ w_T^i - c^i_0 -A \, (q_T^i)^2-\phi\int_0^T (q_u^i)^2 du \bigg] \\ 
    &= \Ee \bigg[-\int_0^T \bigg[\kappa \, (\nu_u^i)^2 + \phi \, (q_u^i)^2 +\nu_u^i p_u\left(\boldsymbol{\nu}\right)  \bigg] du + q_T^i ( p_T - A q_T^i )\bigg].
  \end{split}
\end{align}

\subsection{Competitive equilibria}

In this section, we seek a definition of competitive equilibria for the $N-$person game analogous to the one we made for the MFG, i.e., Definition \ref{def:CompetitiveEquilibriaMFG}. Regarding the optimizing behavior of the traders in the population, it is natural to expect that they will accommodate to a Nash equilibrium, since we want them to be competing and rational. A precise definition, in the present context, is the content of the subsequent definition.
\begin{definition}\label{def:nasheq}
	We say that an admissible strategy profile ${\bm{\nu}}^*=\left(\nu^{*1}, \cdots, \nu^{*N}\right)^\intercal \in \mathcal{A}$ is a Nash equilibrium for the $N-$person game if, for each $i \in \mathcal{N}$ and $\nu^{i} \in \mathcal{A}_i,$ 
	\begin{equation}\label{eq:nasheq}
	\quad J^{i}({\bm{\nu}^*}) \geqslant J^{i}\left(\nu^{i}; {\bm{\nu}^*}^{-i}\right)
	\end{equation}
\end{definition}

The next element we seek is a counterpart of the market clearing condition \eqref{eq:MktClearingMFG} to the present setting. Thus, we consider a stochastic process $\Lambda = \left\{ \Lambda_{t} \right\}_{0\leqslant t \leqslant T}$ representing the market supply rate. In this case, if player $i \in \mathcal{N}$ considers strategy $\nu^i \in \mathcal{A}_i,$ the following market clearing condition must be valid:
\begin{equation} \label{eq:MktClearing1}
    \mathbb{E}\left[ \frac{1}{N}\sum_{i=1}^N \nu^i_t \Bigg| \mathcal{F}^0_t \right] + \Lambda_{t} = 0 \hspace{1.0cm} (0 \leqslant t \leqslant T).
\end{equation}

We are ready to propose a definition of a competitive equilibrium for the $N-$person game.
\begin{definition}
A triplet $\left(p,{\bm{\nu}}^*,\Lambda\right) \in \mathfrak{P} \times \mathcal{A} \times \mathcal{A}_0 $ is a competitive equilibrium under the performance criteria $\left\{ J^i \right\}_{i=1}^N$ if:
\begin{itemize}
    \item[(i)] Given $p,$ the process ${\bm{\nu}}^*$ is a Nash equilibrium under the criteria $\left\{J^i\right\}_{i=1}^N;$
    \item[(ii)] The processes ${\bm{\nu}}^*$ and $\Lambda$ together solve \eqref{eq:MktClearing1}.
\end{itemize}
\end{definition}

\subsection{Some properties of the competitive equilibria} \label{subsec:PriceFormationFinitePop}

Using variational techniques,  as in \cite{casgrain2020mean} and \cite{evangelista2020finite}, we can derive the subsequent necessary conditions for a Nash equilibrium. For its proof, see \ref{app:proof}.


\begin{proposition} \label{prop:CharacterizationOfTheNE}
Assume that, for each $i \in \mathcal{N}$ and $\bm{\nu}^{-i},$ the functional $\nu^i \mapsto J^i(\nu^i;\bm{\nu}^{-i})$ admits a unique maximizer. Then, a Nash equilibrium $\bm{\nu}^*$ for the criteria $\left\{ J^1,...,J^N \right\}$ must solve the FBSDE
\begin{equation} \label{eq:FBSDE}
    \begin{cases}
        q^{*i}_t = q^i + \int_0^t \nu^{*i}_u\,du + \sigma^i W^i_t,\\
        2\kappa\,\nu^{*i}_t = -p_t(\nu^{*i};\bm{\nu}^{*-i}) + 2\phi\,\int_0^t q^{*i}_u\,du \\
        \hspace{1.2cm} + \mathbb{E}\left[ \xi^{i,N}_{T,t}(\nu^{*i};\bm{\nu}^{*-i})q^{*i}_T - \int_t^T \xi^{i,N}_{u,t}(\nu^{*i};\bm{\nu}^{*-i})\nu^{*i}_u\,du \Big| \mathcal{F}^{0i}_t \right] + M^{*i}_t,
    \end{cases}
\end{equation}
for suitable square-integrable martingales $\{ M^{*i}_t,\mathcal{F}^{0i}_t \}_{0\leqslant t \leqslant T}$ which satisfy 
$$
M^{*i}_T = -2A q^{*i}_T + p_T(\nu^{*i};\bm{\nu}^{*-i}) - 2\phi\int_0^T q^{*i}_t\,dt,
$$
$\mathbb{P}-$almost surely.
\end{proposition}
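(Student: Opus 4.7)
The plan is to derive the FBSDE as the first-order optimality condition attached to the Nash equilibrium. Since the functional $\nu^i \mapsto J^i(\nu^i;\bm{\nu}^{*-i})$ is maximized uniquely at $\nu^{*i}$, its Gâteaux derivative at $\nu^{*i}$ vanishes along every admissible direction $w^i \in \mathcal{A}_i$. I will compute this derivative, apply Fubini to cast it as a duality pairing against $w^i$, invoke the arbitrariness of $w^i$ to extract an a.s.\ identity, and finally identify the required $\mathbb{F}^{0i}$-martingale.

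For the derivative: perturbing $\nu^{*i}$ by $\epsilon w^i$, the inventory responds linearly with $\partial_\epsilon q^i_u|_0 = Q^i_u := \int_0^u w^i_s\,ds$, while condition (C1) gives the price response $\partial_\epsilon p_u|_0 = \mathbb{E}\bigl[\int_0^u \xi^{i,N}_{u,s}(\bm{\nu}^*) w^i_s\,ds\,\bigm|\,\mathcal{F}^0_u\bigr]$. Differentiating \eqref{eq:cost1} and equating to zero yields
\begin{align*}
0 = \mathbb{E}\Big[&-\!\int_0^T\!\!\bigl(2\kappa \nu^{*i}_u w^i_u + 2\phi q^{*i}_u Q^i_u + w^i_u p_u(\bm{\nu}^*) + \nu^{*i}_u \partial_\epsilon p_u|_0\bigr)du \\
& + Q^i_T\bigl(p_T - 2A q^{*i}_T\bigr) + q^{*i}_T \partial_\epsilon p_T|_0\Big].
\end{align*}

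Fubini now recasts this as $\mathbb{E}\int_0^T w^i_s \Psi_s\,ds = 0$. The quadratic and boundary pieces are rearranged via the time-reversal $\int_0^T f(u)\int_0^u w^i_s\,ds\,du = \int_0^T w^i_s \int_s^T f(u)\,du\,ds$. For the price-perturbation pieces I would use the identity $\mathbb{E}[X\,\mathbb{E}[Y|\mathcal{G}]] = \mathbb{E}[\mathbb{E}[X|\mathcal{G}]\,Y]$ (with $\mathcal{G}=\mathcal{F}^0_u$, and respectively $\mathcal{F}^0_T$) to offload the inner conditional expectation onto the $\mathbb{F}^{0i}$-adapted factor, after which Fubini exposes the kernels $\xi^{i,N}_{u,s}\nu^{*i}_u$ and $\xi^{i,N}_{T,s}q^{*i}_T$. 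The density of $\mathcal{A}_i$-perturbations then forces $\mathbb{E}[\Psi_s|\mathcal{F}^{0i}_s]=0$ for a.e.\ $s$. Because $\nu^{*i}_s$ and $p_s(\bm{\nu}^*)$ are $\mathcal{F}^{0i}_s$-measurable (the latter since $p\in\mathfrak{P}\subset\mathcal{A}_0$), this identity becomes
$$
2\kappa \nu^{*i}_t + p_t(\bm{\nu}^*) = \mathbb{E}\Bigl[p_T - 2A q^{*i}_T - 2\phi\!\int_t^T q^{*i}_u\,du + \xi^{i,N}_{T,t}q^{*i}_T - \!\int_t^T \xi^{i,N}_{u,t}\nu^{*i}_u\,du\,\Bigm|\,\mathcal{F}^{0i}_t\Bigr].
$$

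To conclude, define $M^{*i}_t := \mathbb{E}\bigl[p_T - 2A q^{*i}_T - 2\phi\int_0^T q^{*i}_u\,du\,\bigm|\,\mathcal{F}^{0i}_t\bigr]$; this is automatically a square-integrable $\mathbb{F}^{0i}$-martingale with the stated terminal value, and splitting $\int_0^T q^{*i}_u\,du = \int_0^t + \int_t^T$ inside the expectation above recovers \eqref{eq:FBSDE} verbatim. The main technical obstacle will be the step where the $\mathcal{F}^0_u$-conditioning introduced by (C1) is offloaded and then collapsed under the final $\mathcal{F}^{0i}_t$-projection: this must be orchestrated so that $\nu^{*i}_u$ and $q^{*i}_T$ appear cleanly in the statement rather than their $\mathcal{F}^0$-projections, relying on the joint measurability postulates in (C1)–(C2). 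The remaining differentiation–expectation and Fubini exchanges are underwritten by the $L^2$-regularity built into $\mathcal{A}_i$ and by the integrability afforded by (C2).
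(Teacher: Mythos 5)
Your proposal is correct and follows essentially the same route as the paper's own argument in \ref{app:proof}: compute the G\^ateaux derivative of $J^i$ at the equilibrium, use Fubini and the adjoint identity for conditional expectations to rewrite it as a pairing against $w^i$, conclude that the $\mathcal{F}^{0i}_t$-projection of the integrand vanishes, and absorb the terminal terms into the martingale $M^{*i}_t:=\mathbb{E}\bigl[p_T-2Aq^{*i}_T-2\phi\int_0^T q^{*i}_u\,du\,\bigm|\,\mathcal{F}^{0i}_t\bigr]$. Your write-up is in fact somewhat more explicit than the paper's about the step where the $\mathcal{F}^0_u$-conditioning from (\textbf{C1}) is offloaded before the final $\mathcal{F}^{0i}_t$-projection, which the paper passes over silently.
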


\begin{remark}
The common information shared by all the players appears, for instance, into the martingale $M^{*i}$. Under the Brownian assumption, one could use the Martingale Representation Theorem to write this term as a combination of stochastic integrals with respect to $W^i$ and $W^0$.
\end{remark}

It is illustrative to consider Proposition \ref{prop:CharacterizationOfTheNE} in the case of permanent price impact. We comment on it in the subsequent example.
\begin{example} \label{eq:ExampleLinearImpact}
In the linear permanent price impact case, i.e., $p_t\left(\bm{\nu}\right) = \widetilde{p}_t + \frac{\alpha}{N}\sum_{i=1}^N q^i_t,$ with $\alpha >0,$ $\widetilde{p} \in \mathcal{U}_0$ being a semimartingale, and $\widetilde{p}_0,\,\widetilde{p}_T \in L^2(\Omega),$ we have
$$
J^i(\bm{\nu}) = \mathbb{E}\left[ \int_0^T \left\{ \left[-\kappa \left( \nu^i_t \right)^2 - \phi\left(q^i_t\right)^2 + \frac{b}{N}q^i_t\sum_{j\neq i}\nu^j_t \right]\,dt + q^i_t\,d\widetilde{p}_t \right\} - \left( A - \frac{\alpha}{2N} \right)\left(q^i_T\right)^2 + q^i\widetilde{p}_0 \right] .
$$
Thus, assuming $\kappa>0$ and $A>\alpha/(2N),$ we conclude that the functional $\nu^i \in \mathcal{A}_i \mapsto J_i(\nu^i;\bm{\nu}^{-i}) \in \mathbb{R}$ is strictly concave and coercive, for each $i \in \mathcal{N}$ and $\bm{\nu}^{-i} \in \mathcal{A}_{-i};$ hence, it admits a unique maximizer. Conversely, employing the techniques we used in \cite{evangelista2020finite} (with suitable \commY{integrability} assumptions on $\widetilde{p}$), we can prove that \eqref{eq:FBSDE} admits a unique strong solution.
\end{example}

\subsection{On the formed price in the finite population game}

In Proposition \ref{prop:CharacterizationOfTheNE}, we have given necessary conditions for the Nash equilibrium resulting from a price process $p \in \mathfrak{P}.$ In Example \ref{eq:ExampleLinearImpact} above, we provided a class of prices for which we can, in fact, prove sufficient conditions, thus characterizing Nash equilibria in this circumstance. The next result we obtain is the main one in this section. We use this necessary conditions to establish a formula analogous to \eqref{eq:FormedPriceMFG} in the current finite population framework in a competitive equilibrium. Some additional elements will appear due to the microstructural impact that we allow in prices,  and the consequence of asymmetry in the information that results from the finite population of traders. We turn to the investigation of this question in what follows.
%

\begin{theorem} \label{thm:PriceFormed}
Let us consider a competitive equilibrium $\left(p^N,\boldsymbol{\nu}^*,\Lambda^N\right)$ for the $N-$person game. We assume that there exists a mapping $\Lambda \mapsto \left\{ \xi_{t,u}^N(\Lambda) \right\}_{0\leqslant u \leqslant t \leqslant T}$\footnote{Notice the slight abuse of notation since this process $\xi_{t,u}^N(\Lambda)$, which does not depend on $i$, is not the process $\left\{\xi_{t,u}^{i,N}(\bm{\nu})\right\}_{0\leqslant u \leqslant t}$ from condition (C1) above.} such that, for every $i \in \mathcal{N}$ and $\boldsymbol{\nu} \in\mathcal{A},$ 
\begin{equation} \label{eq:Assumption2}
    \xi_{t,u}^{i,N}(\boldsymbol{\nu}) = \xi^N_{t,u}\left( -\mathbb{E}\left[ \frac{1}{N}\sum_{i=1}^N \nu^i_{\cdot} \Bigg| \mathcal{F}^0_\cdot \right] \right) \hspace{1.0cm} (0\leqslant u \leqslant t \leqslant T).
\end{equation} 
Then, the price $\{p_t^N = p_t(\bm{\nu}^*) \}_{0\leqslant t \leqslant T}$ is formed as
\begin{align} \label{eq:FormedPrice}
    \begin{split}
        p_t^N =&\, p^N_0 + 2\kappa\left(\Lambda^N_{t} - \Lambda^N_{0}\right) +2\phi \int_0^t E^N_{0,u}\,du - 2\phi \int_0^t (t-u) \Lambda^N_{u}\,du + \int_0^t \epsilon^N_u\,du + F^N_{t} + M^N_{t},
    \end{split}
\end{align}
where the process $\left\{F^N_{t} \right\}_{0\leqslant t \leqslant T}$ is given by
\begin{equation} \label{eq:cStarN}
  \begin{cases}
    F^N_{t} := f^N_{t} - f^N_{0}, \\
    f^N_{t} := \mathbb{E}\left[ \xi_{T,t}^N\left( \Lambda^N \right)\left(E^N_{0,T} - \int_0^T \Lambda^N_{u}\,du + \epsilon^N_T \right) + \int_t^T \xi_{u,t}^N\left(\Lambda^N\right)\Lambda^N_{u}\,du \Bigg| \mathcal{F}^0_t \right] \hspace{1.0cm} (0\leqslant t \leqslant T),
  \end{cases}
\end{equation}
whereas $\left\{\epsilon^N_t, \mathcal{F}^0_t \right\}_{0\leqslant t \leqslant T},$ $\left\{ E^N_{0,t},\mathcal{F}^0_t \right\}_{0\leqslant t \leqslant T}$ and $\left\{ M^N_{t}, \mathcal{F}^0_t \right\}_{0\leqslant t \leqslant T}$ are square-integrable martingales given by
\begin{equation} \label{eq:Martingale1}
    \epsilon^N_t :=  \mathbb{E}\left[ \frac{1}{N}\sum_{i=1}^N \sigma^i W^i_t\Bigg| \mathcal{F}^0_t\right] + \frac{1}{N}\sum_{i=1}^N \int_0^t \left\{\mathbb{E}\left[ \nu^{*i}_s | \mathcal{F}^0_t \right] - \mathbb{E}\left[ \nu^{*i}_s | \mathcal{F}^0_s \right] \right\}\,ds ,
\end{equation}
\begin{equation} \label{eq:Martingale2}
    E^N_{0,t} := \mathbb{E}\left[ \frac{1}{N}\sum_{i=1}^N q^i_0 \Bigg| \mathcal{F}^0_t \right],
\end{equation}
and
\begin{equation} \label{eq:Martingale3}
    M^N_{t} := \mathbb{E}\left[ \frac{1}{N}\sum_{i=1}^N \left( M^{*i}_t - M^{*i}_0 \right) \Bigg| \mathcal{F}^0_t \right] + \int_0^t \left\{ \mathbb{E}\left[ \frac{1}{N}\sum_{i=1}^N q^{*i}_s \Bigg| \mathcal{F}^0_t \right] - \mathbb{E}\left[ \frac{1}{N}\sum_{i=1}^N q^{*i}_s \Bigg| \mathcal{F}^0_s \right] \right\}\,ds,
\end{equation}
for the martingales $\left\{ M^{*i}_t, \mathcal{F}^{0i}_t\right\}_{0 \leqslant t \leqslant T},$ $i \in \mathcal{N},$ figuring in the FBSDE \eqref{eq:FBSDE}.
\end{theorem}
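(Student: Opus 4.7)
The strategy is to mirror the MFG argument of Theorem~\ref{thm:MainTheorem}, but starting from the FBSDE characterization \eqref{eq:FBSDE} in place of the HJB--FP system. I would take the second equation of \eqref{eq:FBSDE}, average both sides over $i\in\mathcal N$, and then apply $\mathbb{E}[\,\cdot\,|\,\mathcal{F}^0_t]$. Writing $\bar X_t := N^{-1}\sum_{i=1}^N X^i_t$ for any indexed family, the tower property ($\mathcal{F}^0_t\subseteq \mathcal{F}^{0i}_t$) collapses the inner conditional expectations in the BSDE, so that
\[
2\kappa\,\mathbb{E}[\bar\nu^*_t\mid\mathcal F^0_t] = -p^N_t + 2\phi\,\mathbb{E}\!\left[\int_0^t \bar q^*_u\,du\,\Big|\,\mathcal F^0_t\right] + \mathbb{E}[\Xi_t\mid\mathcal F^0_t] + \mathbb{E}[\bar M^*_t\mid\mathcal F^0_t],
\]
where $\Xi_t$ bundles together the two $\xi$-terms. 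The market-clearing condition \eqref{eq:MktClearing1} turns the left-hand side into $-2\kappa\,\Lambda^N_t$, so that solving for $p^N_t$ leaves four pieces to identify with the corresponding terms in \eqref{eq:FormedPrice}.

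For the $\xi$-term, the key move is to invoke assumption \eqref{eq:Assumption2}: since $\xi^{i,N}_{t,u}(\bm\nu^*)=\xi^N_{t,u}(\Lambda^N)$ does not depend on $i$ and is $\mathbb{F}^0$-adapted, it can be pulled outside the per-agent $\mathcal F^{0i}_t$-conditional expectations after averaging. Combined with an $\mathcal{F}^0_T$-conditioning trick and the identity (established in parallel) $\mathbb{E}[\bar q^*_T\mid \mathcal F^0_T] = E^N_{0,T}-\int_0^T \Lambda^N_u\,du+\epsilon^N_T$, this produces exactly the expression $f^N_t$ of \eqref{eq:cStarN}; subtracting off the $t=0$ version then yields the $F^N_t$ contribution.

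For the running-inventory term, I decompose $\mathbb{E}[\bar q^*_s\mid\mathcal F^0_t]=\mathbb{E}[\bar q^*_s\mid\mathcal F^0_s]+\bigl(\mathbb{E}[\bar q^*_s\mid\mathcal F^0_t]-\mathbb{E}[\bar q^*_s\mid\mathcal F^0_s]\bigr)$. The first summand, computed from the dynamics \eqref{eq:inv_opt_trad} of $q^{*i}$ and market clearing applied time-wise, equals $E^N_{0,s}+\epsilon^N_s-\int_0^s \Lambda^N_r\,dr$ — the precise content of $\epsilon^N$ being exactly the filtration-switch residual $\mathbb{E}[\nu^{*i}_s\mid\mathcal F^0_t]-\mathbb{E}[\nu^{*i}_s\mid\mathcal F^0_s]$ plus the Brownian projection. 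Integrating over $s\in[0,t]$ and swapping order of integration via Fubini ($\int_0^t\int_0^s\Lambda^N_r\,dr\,ds = \int_0^t (t-r)\Lambda^N_r\,dr$) recovers the memory term $-2\phi\int_0^t(t-u)\Lambda^N_u\,du$ and the inertia term $2\phi\int_0^t E^N_{0,u}\,du$. The residual (the bracketed difference integrated in $s$) is precisely the inventory-correction piece inside the definition \eqref{eq:Martingale3} of $M^N_t$, while $\mathbb{E}[\bar M^*_t-\bar M^*_0\mid\mathcal F^0_t]$ supplies the remaining martingale part. Finally, evaluating the resulting identity at $t=0$ shows that the surplus constants assemble into $p^N_0-2\kappa\,\Lambda^N_0-f^N_0$, reproducing the shape of \eqref{eq:FormedPrice}.

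The main technical obstacle I expect is the careful bookkeeping of the filtration switches: the conditional expectations $\mathbb{E}[\,\cdot\,\mid\mathcal F^{0i}_t]$ and $\mathbb{E}[\,\cdot\,\mid\mathcal F^0_t]$ do not commute freely with time-integration, and the difference between them is what populates the extra terms $\epsilon^N$ and the integral component of $M^N$ that are absent in the MFG version. A secondary difficulty is verifying that $\xi^N_{u,t}(\Lambda^N)$ may be pulled inside/outside the appropriate conditional expectations — this rests on its $\mathbb F^0$-measurability, ultimately inherited from condition (\textbf{C2}) applied to the representation \eqref{eq:Assumption2}.
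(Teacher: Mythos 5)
Your proposal is correct and follows essentially the same route as the paper's proof: averaging the BSDE of \eqref{eq:FBSDE} over $i$, projecting onto $\mathbb{F}^0$, invoking market clearing \eqref{eq:MktClearing1} and the representation $\mathbb{E}[\bar q^*_t\mid\mathcal F^0_t]=E^N_{0,t}-\int_0^t\Lambda^N_u\,du+\epsilon^N_t$, identifying the $\xi$-term with $f^N_t$ via the $\mathcal F^0_t$-measurability of $\xi^N_{s,t}(\Lambda^N)$ and conditional Fubini, and obtaining the memory term from the order-of-integration swap, with the filtration-switch residuals absorbed into $\epsilon^N$ and $M^N$. This matches the paper's argument step for step, so no further comparison is needed.
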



\begin{proof}
Throughout the present proof, we fix a competitive equilibrium $\left(p^N,\boldsymbol{\nu}^*,\Lambda^N\right).$ In particular, \eqref{eq:MktClearing1} holds, whence assumption \eqref{eq:Assumption2} implies $\xi^{i,N}_{t,u}\left( \boldsymbol{\nu}^* \right) = \xi^N_{t,u}\left(\Lambda^N\right),$ for $0\leqslant u \leqslant t \leqslant T$ and $i\in \mathcal{N}.$ 

From the inventory equation presented in \eqref{eq:FBSDE}, alongside \eqref{eq:MktClearing1}, we have
\begin{equation} \label{eq:OptimalAvgInventory}
    \mathbb{E}\left[ \frac{1}{N}\sum_{i=1}^N q^{*i}_t \Bigg| \mathcal{F}^0_t \right] = E^N_{0,t} - \int_0^t \Lambda^N_{u}\,du + \epsilon^N_t.
\end{equation}
Employing \eqref{eq:MktClearing1} and \eqref{eq:OptimalAvgInventory} in the average over $i \in \mathcal{N}$ of the BSDEs of \eqref{eq:FBSDE} projected on $\mathbb{F}^0,$ we deduce that 
\begin{align} \label{eq:PriceFormed1stStep}
    \begin{split}
        -2\kappa\Lambda^N_{t} &= - p^N_t +2\phi\int_0^t\left(E^N_{0,\tau} - \int_0^\tau \Lambda^N_{u}\,du + \epsilon^N_\tau \right)\,d\tau + \widetilde{f}^N_{t} + M^N_{t} + \frac{1}{N}\sum_{i=1}^N M^{*i}_0 \\
        &= - p^N_t +2\phi\left( \int_0^t E^N_{0,u}\,du - \int_0^t(t-u) \Lambda^N_{u}\,du + \int_0^t \epsilon^N_u \,du \right) + \widetilde{f}^N_{t} + M^N_{t} + \frac{1}{N}\sum_{i=1}^N M^{*i}_0,
    \end{split}
\end{align}
where we have written 
$$
\widetilde{f}^N_t = \mathbb{E}\left[ \xi^N_{T,t}\left(\Lambda_N\right) \frac{1}{N}\sum_{i=1}^N q^i_T - \int_t^T \xi_{u,t}^N\left(\Lambda_N\right) \frac{1}{N}\sum_{i=1}^N \nu^{*i}_u \,du \Bigg| \mathcal{F}^0_t \right].
$$
We claim $\widetilde{f}^N \equiv f^N,$ where we describe the latter in \eqref{eq:cStarN}. In effect, noticing that our assumption \eqref{eq:Assumption2} implies that the random variable $\xi^N_{s,t}\left(\Lambda^N\right)$ is $\mathcal{F}^0_t-$measurable, for $t \leqslant s \leqslant T,$ we use the conditional Fubini's Theorem, alongside the tower property of conditional expectations, to deduce the identities
\begin{align*}
    \widetilde{f}^N_t &= \xi^N_{T,t}\left(\Lambda_N\right) \mathbb{E}\left[ \frac{1}{N}\sum_{i=1}^N q^i_T \Bigg| \mathcal{F}^0_t \right] - \int_t^T \xi^N_{u,t}\left(\Lambda_N\right) \mathbb{E}\left[ \frac{1}{N}\sum_{i=1}^N \nu^{*i}_u \Bigg| \mathcal{F}^0_t \right] \,du \\
    &= \xi^N_{T,t}\left(\Lambda_N\right) \mathbb{E}\left[ \mathbb{E}\left[ \frac{1}{N}\sum_{i=1}^N q^i_T \Bigg| \mathcal{F}^0_T\right] \Bigg| \mathcal{F}^0_t \right] - \int_t^T \xi^N_{u,t}\left(\Lambda_N\right) \mathbb{E}\left[ \mathbb{E}\left[ \frac{1}{N}\sum_{i=1}^N \nu^{*i}_u \Bigg| \mathcal{F}^0_u \right] \Bigg| \mathcal{F}^0_t \right] \,du \\
    &= \xi^N_{T,t}\left(\Lambda_N\right) \mathbb{E}\left[ E^N_{0,T} - \int_0^T \Lambda^N_{u}\,du + \epsilon^N_T \Bigg| \mathcal{F}^0_t \right] + \int_t^T \xi^N_{u,t}\left(\Lambda_N\right) \mathbb{E}\left[ \Lambda^N_u \Bigg| \mathcal{F}^0_t \right] \,du = f^N_t,
\end{align*}
as we wanted to show. Rearranging \eqref{eq:PriceFormed1stStep}, we obtain
$$
p^N_t =  2\kappa\Lambda^N_{t} +2\phi\left(\int_0^t E^N_{0,u}\,du - \int_0^t(t-u) \Lambda^N_{u}\,du + \int_0^t \epsilon^N_u \,du \right) + f^N_{t} + M^N_t + \frac{1}{N}\sum_{i=1}^N M^{*i}_0,
$$
whence
$$
p^N_t - p^N_0 = 2\kappa\left( \Lambda^N_{t} - \Lambda^N_{0} \right) +2\phi\left( \int_0^t E^N_{0,u}\,du - \int_0^t(t-u) \Lambda^N_{u}\,du + \int_0^t \epsilon^N_u \,du \right) + F^N_{t} + M^N_{t}.
$$
The latter equation is clearly equivalent to \eqref{eq:FormedPrice}. This finishes the proof.
\end{proof}

Prior to our discussion of limiting consequences of Theorem \ref{thm:PriceFormed} as $N \to +\infty$, we explore some particular cases of this result in the two subsequent corollaries.  First, we consider the case in which the microstructural impact kernel, defined by $\xi_{t,u}^N$, does not depend on $u$.  Moreover,  this impact affects the price only through the aggregation of the $N$ players' turnover rate. Next,  we further assume that $\xi_t^N=\alpha/N$ and deduce that the formed price with this microstructural impact kernel leads to the permanent price impact model.


\begin{corollary} \label{cor:ParticularCase1}
Let us maintain all the notations of the statement of Theorem \ref{thm:PriceFormed}. If $\xi^N_{t,u}(\Lambda^N)$ is independent of $u,$ i.e., $\xi_{t,u}^N(\Lambda^N) = \xi_t^N(\Lambda^N),\, 0\leqslant u \leqslant t$, then we can write $p^N$ in \eqref{eq:FormedPrice} as follows:
\begin{equation} \label{eq:ParticularCase1}
    p^N_t = p^N_0 +2\phi\int_0^t E^N_{0,u}\,du - 2\phi \int_0^t (t-u) \Lambda^N_{u}\,du + 2\kappa\left(\Lambda^N_{t} - \Lambda^N_{0}\right) - \int_0^t \xi_u^N\left( \Lambda^N \right) \Lambda^N_{u}\,du + \int_0^t \epsilon^N_u\,du  + \widetilde{M}^N_{t},
\end{equation}
for a square-integrable martingale $\left\{ \widetilde{M}^N_t, \mathcal{F}^0_t \right\}_{0\leqslant t \leqslant T}.$
\end{corollary}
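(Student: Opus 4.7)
The plan is to take the representation from Theorem \ref{thm:PriceFormed} and show that, under the additional hypothesis that $\xi^N_{t,u}(\Lambda^N)$ is independent of $u$, the non-martingale content of $F^N_t$ can be extracted explicitly as $-\int_0^t \xi_u^N(\Lambda^N)\Lambda^N_u\,du$, with every remaining piece absorbed into a single $\mathbb{F}^0$-martingale $\widetilde M^N$. No reanalysis of the equilibrium itself is needed: the entire argument reduces to a measurability-and-algebra manipulation of $f^N$ from \eqref{eq:cStarN}.

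I would start by substituting $\xi^N_{T,t}(\Lambda^N)=\xi^N_T(\Lambda^N)$ and $\xi^N_{u,t}(\Lambda^N)=\xi^N_u(\Lambda^N)$ into \eqref{eq:cStarN}. The essential observation is that, once the second argument is suppressed, the process $u \mapsto \xi_u^N(\Lambda^N)\Lambda^N_u$ becomes $\mathbb{F}^0$-adapted: $\Lambda^N$ is $\mathbb{F}^0$-adapted by construction, and the measurability statement $\xi^N_{s,t}(\Lambda^N) \in \mathcal{F}^0_t$ for $t \le s$ (already used in the proof of Theorem \ref{thm:PriceFormed}) evaluated at $s=t=u$ gives $\xi_u^N(\Lambda^N) \in \mathcal{F}^0_u$. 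Writing $\int_t^T \xi_u^N(\Lambda^N)\Lambda^N_u\,du = \int_0^T \xi_u^N(\Lambda^N)\Lambda^N_u\,du - \int_0^t \xi_u^N(\Lambda^N)\Lambda^N_u\,du$ and pulling the $\mathcal{F}^0_t$-measurable piece outside the conditional expectation defining $f^N_t$ then yields
\begin{equation*}
f^N_t + \int_0^t \xi_u^N(\Lambda^N)\Lambda^N_u\,du = \mathbb{E}\!\left[\xi_T^N(\Lambda^N)\!\left(E^N_{0,T}-\int_0^T \Lambda^N_u\,du+\epsilon^N_T\right) + \int_0^T \xi_u^N(\Lambda^N)\Lambda^N_u\,du \,\bigg|\, \mathcal{F}^0_t\right],
\end{equation*}
whose right-hand side is the $\mathcal{F}^0_t$-conditional expectation of a fixed $\mathcal{F}^0_T$-measurable random variable, hence an $\mathbb{F}^0$-martingale.

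Setting $\widetilde M^N_t := F^N_t + M^N_t + \int_0^t \xi_u^N(\Lambda^N)\Lambda^N_u\,du$, the previous display (after subtracting the $\mathcal{F}^0_0$-measurable quantity $f^N_0$, and adding the $\mathbb{F}^0$-martingale $M^N$, which also vanishes at $t=0$) shows that $\widetilde M^N$ is an $\mathbb{F}^0$-martingale starting at $0$; its square-integrability follows from (\textbf{C2}) together with the square-integrability of $\Lambda^N$, of $\epsilon^N$, and of the $M^{*i}$ invoked in Theorem \ref{thm:PriceFormed}. Substituting $F^N_t + M^N_t = \widetilde M^N_t - \int_0^t \xi_u^N(\Lambda^N)\Lambda^N_u\,du$ into \eqref{eq:FormedPrice} then delivers \eqref{eq:ParticularCase1}. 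The only delicate point in the whole argument is the adaptedness of $\xi_u^N(\Lambda^N)$, which is inherited from a claim already used in the parent theorem, so I anticipate no genuine obstacle.
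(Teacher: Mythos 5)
Your proof is correct and follows essentially the same route as the paper's: the paper likewise uses the hypothesis to split $\int_t^T \xi^N_u(\Lambda^N)\Lambda^N_u\,du$ at $t$, pulls the $\mathcal{F}^0_t$-measurable piece out of the conditional expectation defining $f^N_t$ so that $f^N_t = -\int_0^t \xi^N_u(\Lambda^N)\Lambda^N_u\,du + \widetilde{M}_t$ for a square-integrable martingale $\widetilde{M}$, and then absorbs $\widetilde{M}_t - \widetilde{M}_0$ together with $M^N_t$ into the single martingale $\widetilde{M}^N$. Your write-up merely makes explicit the adaptedness of $u \mapsto \xi^N_u(\Lambda^N)$ that the paper's two-line proof leaves implicit.
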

\begin{proof}
Under the current hypothesis, we derive from \eqref{eq:cStarN} that
$$
f^N_{t} = - \int_0^t \xi_u^N\left( \Lambda^N \right) \Lambda^N_{u}\,du + \widetilde{M}_t,
$$
where $\left\{\widetilde{M}_t,\mathcal{F}^0_t\right\}_{t}$ is a square-integrable martingale, cf. \eqref{eq:cStarN}. We can incorporate $\left\{ \widetilde{M}_t - \widetilde{M}_0,\mathcal{F}^0_t \right\}_{t}$ into $\left\{ M^N_{t},\mathcal{F}^0_t\right\}_{t}$ to obtain the representation we asserted for $p^N$ in \eqref{eq:ParticularCase1}.
\end{proof}

\begin{remark}
Notice that using \eqref{eq:ParticularCase1}, we can compute the quadratic variation of $p^N$ in terms of the quadratic variations of $\Lambda^N$ and $\widetilde{M}^N$. Indeed:
$$\langle p^N \rangle_t = 4 \kappa^2 \langle \Lambda^N \rangle_t + \langle \widetilde{M}^N \rangle_t + 2\kappa \langle \Lambda^N, \widetilde{M}^N \rangle_t.$$
One should compare this result with the one from MFG shown in Remark \ref{rmk:SDE_for_p_quadratic_var}. Notice that the finite player setting materializes into the quadratic variation of $\widetilde{M}^N$ and the co-variation of $\widetilde{M}^N$ and $\Lambda^N$.

\end{remark}

To state the second relevant particular case, we recall the following. Reference \cite{evangelista2020finite} contains the proof that in a linear price impacts model, with a slope of permanent price impact $\alpha,$ no inventory uncertainty ($\sigma^i = 0$), and with all the other parameters having the same meaning as in our current setting, the average of the inventories $E_N$ of traders is deterministic and satisfies the second-order ODE: 
\begin{equation} \label{eq:AvgInvsFinGame}
    \begin{cases}
        2\kappa E_N^{\prime\prime} + \alpha\left( 1- \frac{1}{N} \right) E_N^\prime - 2\phi E_N = 0,\\
        E_N(0) = E_{N,0} \text{ and } \kappa E^\prime_N(T) + A E_N(T) = 0,
    \end{cases}
\end{equation}
where $E_{N,0}$ is given.  Here, $E_N$ is defined as 
$$
E_N(t)=\frac{1}{N}\sum_{i=1}^N q^{*i}_t,
$$
where $q^{*i}$ is defined as in Proposition \ref{prop:CharacterizationOfTheNE}. This case corresponds to the finite population counterpart of the MFG model introduced by \cite{cardaliaguet2018mean}, in which the mean-field interactions are through the controls in a permanent (linear) price impact model. We can regard the next corollary as a sanity check. 

\begin{corollary} \label{cor:ParticularCase2}
Let us consider that all conditions we stipulated in Corollary \ref{cor:ParticularCase1} hold, and let us maintain its notation. If, moreover, $\xi_t^N \equiv \alpha/N,$ where $\alpha$ is a non-negative constant, $\Lambda^N_t \equiv -E_N^\prime(t),$ and $\sigma^i \equiv 0,$ for all $i,$ then
$$
p^N_t = p^N_0 + \alpha \left( E_{N}(t) - E_{N}(0) \right) + \widetilde{M}^N_{t}.
$$
\end{corollary}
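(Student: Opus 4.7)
The plan is to start from the representation in Corollary \ref{cor:ParticularCase1} and substitute the three simplifying hypotheses in order. Since $\sigma^i\equiv 0$, the inventory process \eqref{eq:inv_opt_trad} is driftless-noiseless and the setting of \cite{evangelista2020finite} guarantees the optimal controls are deterministic; hence $\epsilon^N_u\equiv 0$ by the definition \eqref{eq:Martingale1}, and analogously $E^N_{0,u}\equiv E_N(0)$ by \eqref{eq:Martingale2}. With $\xi^N_u\equiv\alpha/N$ and $\Lambda^N_u=-E'_N(u)$, the kernel term becomes
\begin{equation*}
-\int_0^t \xi_u^N(\Lambda^N)\Lambda^N_u\,du = \frac{\alpha}{N}\int_0^t E'_N(u)\,du = \frac{\alpha}{N}\bigl(E_N(t)-E_N(0)\bigr),
\end{equation*}
while $2\kappa(\Lambda^N_t-\Lambda^N_0) = -2\kappa\bigl(E'_N(t)-E'_N(0)\bigr)$.

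Next I would handle the two $\phi$-terms together. The constant-in-$u$ piece contributes $2\phi\int_0^t E^N_{0,u}\,du = 2\phi t\,E_N(0)$, and an integration by parts gives
\begin{equation*}
-2\phi\int_0^t(t-u)\Lambda^N_u\,du = 2\phi\int_0^t(t-u)E'_N(u)\,du = -2\phi t\,E_N(0)+2\phi\int_0^t E_N(u)\,du,
\end{equation*}
so the two terms collapse to $2\phi\int_0^t E_N(u)\,du$.

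The key step, the one I expect to be the heart of the verification, is now to exploit the ODE \eqref{eq:AvgInvsFinGame}, which tells us $2\phi E_N(u) = 2\kappa E''_N(u) + \alpha\bigl(1-\tfrac{1}{N}\bigr)E'_N(u)$. Integrating from $0$ to $t$ gives
\begin{equation*}
2\phi\int_0^t E_N(u)\,du = 2\kappa\bigl(E'_N(t)-E'_N(0)\bigr) + \alpha\Bigl(1-\tfrac{1}{N}\Bigr)\bigl(E_N(t)-E_N(0)\bigr).
\end{equation*}
Plugging everything back into \eqref{eq:ParticularCase1}, the two $2\kappa(E'_N(t)-E'_N(0))$ terms cancel, and the two $\alpha$-coefficients combine as $\alpha(1-1/N)+\alpha/N = \alpha$, yielding exactly $p^N_0+\alpha(E_N(t)-E_N(0))+\widetilde{M}^N_t$ as claimed. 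The only subtlety worth flagging is confirming that the deterministic and $\mathcal{F}^0$-adapted character of $E_N,\Lambda^N,\xi^N$ under the current hypotheses makes the Fokker-Planck/conditional-expectation machinery of Theorem \ref{thm:PriceFormed} collapse to the ODE identity above; this is the essential glue between the finite-population martingale representation and the classical Almgren-Chriss-type permanent impact model of \cite{cardaliaguet2018mean}.
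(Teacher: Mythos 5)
Your proof is correct and follows essentially the same route as the paper's: substitute the hypotheses into \eqref{eq:ParticularCase1}, collapse the two $\phi$-terms to $2\phi\int_0^t E_N(u)\,du$, and invoke the integrated ODE \eqref{eq:AvgInvsFinGame} to cancel the $2\kappa$ terms and recombine the $\alpha$-coefficients into $\alpha\left(E_N(t)-E_N(0)\right)$. The paper's version is terser but rests on exactly the same steps, so your write-up is simply a more detailed rendering of the same argument.
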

\begin{proof}
Upon the use of \eqref{eq:AvgInvsFinGame}, we notice that the current assumptions imply
\begin{align*}
    p^N_t &= p_0^N + 2\kappa\left( E_N^\prime(0) - E_N^\prime(t) \right) +2\phi\int_0^t E_N(u)\,du + \frac{\alpha}{N}\left(E_N(t) - E_N(0)\right) + \widetilde{M}^N_{t}\\
    &= p_0^N + \alpha\left(1-\frac{1}{N}\right) \left(E_N(t) - E_N(0)\right)+ \frac{\alpha}{N}\left(E_N(t) - E_N(0)\right) + \int_0^t \epsilon_u\,du + \widetilde{M}^N_{t} \\
    &= p_0^N + \alpha \left(E_N(t) - E_N(0)\right) + \widetilde{M}^N_{t}.
\end{align*}
\end{proof}

\subsection{A convergence result}

The last result we present in this section comprises the conclusion of our initial goal. Namely, we provide conditions guaranteeing that the formed price in a competitive equilibrium (for the finite population game) converges to its MFG counterpart when $N$ goes to infinity.

\begin{theorem} \label{prop:Limit}
Let us consider a sequence $\left\{ \Lambda^N \right\}_N$ of elements of $\mathcal{A}_0.$ We keep all notations of the statement of Theorem \ref{thm:PriceFormed}. We suppose the following:
\begin{itemize}
    \item[(i)] The sequence of initial prices $\left\{ p^N_0\right\}_N$ satisfies $\mathbb{E}\left[ p^N_0 \right] \xrightarrow{N \rightarrow \infty} p_0,$ for some $p_0 \in \mathbb{R};$
    \item[(ii)] For some $E_0 \in \mathbb{R},$ we have $\mathbb{E}\left[ E^N_{0,0} \right] \xrightarrow{N\rightarrow \infty} E_0;$ 
    \item[(iii)] The sequence $\left\{ \Lambda^N \right\}_N$ is such that $\overline{\Lambda}(t) := \sup_{N\geqslant 1} \left|\mathbb{E}\left[ \Lambda^N_t \right] \right|$ satisfies $\int_0^T \overline{\Lambda}(t)\,dt < \infty;$
    \item[(iv)] For some $\Lambda : \left[0,T\right] \rightarrow \mathbb{R,}$ we have $\mathbb{E}\left[ \Lambda^{N}_t \right] \xrightarrow{N \rightarrow \infty} \Lambda(t),$ for each $t \in \left[0,T\right];$
    \item[(v)] The sequence $\left\{ F^N \right\}_N$ (cf. \eqref{eq:cStarN}) satisfies $\mathbb{E}\left[ F^N \right] \xrightarrow{N\rightarrow \infty} 0.$
\end{itemize}
Then, it follows that
\begin{equation} \label{eq:MainConvergence}
    \lim_{N \rightarrow \infty} \mathbb{E}\left[ p^N_t \right] = p(t),
\end{equation}
where $p$ is the formed price in the MFG with supply rate function $\Lambda,$ as in \eqref{eq:FormedPriceMFG}.
\end{theorem}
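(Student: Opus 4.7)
The plan is to take expectations of both sides of the representation \eqref{eq:FormedPrice} of $p_t^N$ established in Theorem \ref{thm:PriceFormed} and pass to the limit term by term. All terms in \eqref{eq:FormedPrice} are square-integrable, so Fubini justifies interchanging $\mathbb{E}$ with the time integrals whenever they appear.

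First, I would clear out the $\mathcal{F}^0$-martingale components. Since $\{M_t^N,\mathcal{F}^0_t\}$ is square-integrable with $M_0^N=0$ (directly from \eqref{eq:Martingale3}), $\mathbb{E}[M_t^N]=0$. Likewise, $\{\epsilon_t^N,\mathcal{F}^0_t\}$ is a martingale with $\epsilon_0^N=0$ (from \eqref{eq:Martingale1}), so $\mathbb{E}[\epsilon_u^N]=0$ for every $u$, and Fubini gives $\mathbb{E}\bigl[\int_0^t \epsilon_u^N\,du\bigr]=0$. For the initial-inventory term, the martingale property of $\{E_{0,t}^N,\mathcal{F}^0_t\}$ yields $\mathbb{E}[E_{0,u}^N]=\mathbb{E}[E_{0,0}^N]$, which converges to $E_0$ by (ii); hence $2\phi\int_0^t \mathbb{E}[E_{0,u}^N]\,du = 2\phi t\,\mathbb{E}[E_{0,0}^N] \to 2\phi E_0\, t$.

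Next, I would dispatch the $\Lambda^N$-dependent terms. The boundary contribution $2\kappa\bigl(\mathbb{E}[\Lambda_t^N]-\mathbb{E}[\Lambda_0^N]\bigr)$ tends to $2\kappa(\Lambda(t)-\Lambda(0))$ by assumption (iv). For the convolution (memory) term $2\phi\int_0^t (t-u)\mathbb{E}[\Lambda_u^N]\,du$, I would invoke dominated convergence: the integrand converges pointwise by (iv), and $(t-u)\,|\mathbb{E}[\Lambda_u^N]|\leqslant T\,\overline{\Lambda}(u)$, which is integrable on $[0,t]$ by (iii); the limit is therefore $2\phi\int_0^t (t-u)\Lambda(u)\,du$. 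The remaining two terms are handled directly: $\mathbb{E}[p_0^N]\to p(0)$ by (i), and $\mathbb{E}[F_t^N]\to 0$ by (v).

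Assembling the limits of the six groups of terms reproduces exactly the right-hand side of \eqref{eq:FormedPriceMFG}, yielding \eqref{eq:MainConvergence}. Among the steps, the only genuinely non-routine one is the dominated convergence argument applied to the memory integral, where assumption (iii) plays the decisive role by supplying the uniform envelope $\overline{\Lambda}$; the remainder reduces to linearity of expectation, the martingale property of $M^N$, $\epsilon^N$ and $E_{0,\cdot}^N$, and the Fubini-type interchange of time integrals with expectations under the assumed square-integrability.
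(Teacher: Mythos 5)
Your proposal is correct and follows essentially the same route as the paper: take expectations in \eqref{eq:FormedPrice}, observe that the martingale terms $M^N$, $\epsilon^N$ vanish in expectation and that $\mathbb{E}[E^N_{0,u}]=\mathbb{E}[E^N_{0,0}]$, then pass to the limit using (i)--(v) with dominated convergence on the memory integral. Your write-up is in fact more explicit than the paper's (which compresses the martingale bookkeeping and the role of the envelope $\overline{\Lambda}$ into a single sentence), but there is no substantive difference in approach.
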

\begin{proof}
Taking expectations in \eqref{eq:FormedPrice}, we derive
$$
\mathbb{E}\left[ p^N_t \right] = \mathbb{E}\left[ p^N_0\right] + 2\kappa\left(\mathbb{E}\left[\Lambda^N_{t}\right] - \mathbb{E}\left[\Lambda^N_{0}\right]\right) +2\phi \mathbb{E}\left[ E^N_{0,0} \right] t - 2\phi \int_0^t (t-u) \mathbb{E}\left[ \Lambda^N_{u} \right]\,du.
$$
Using conditions $(i)-(v),$ with the aid of the Dominated Convergence Theorem, we can pass the above identity to the limit as $N \rightarrow \infty$ to conclude that
$$
\mathbb{E}\left[ p^N_t \right] \xrightarrow{N \rightarrow \infty} p_0 + 2\kappa\left(\Lambda(t) - \Lambda(0)\right) +2\phi E_0 t - 2\phi \int_0^t (t-u) \Lambda(t)\,du,
$$
which is readily equivalent to \eqref{eq:MainConvergence}.
\end{proof}

\begin{remark}\label{rem:condition}
Out of the conditions we stipulated in Theorem \ref{prop:Limit}, the one that probably deserves more attention is $(v).$ We observe from \eqref{eq:cStarN} that, if $\xi^N_{t,u}\left( \Lambda^N \right) = O\left( \frac{1}{N} \right),$ as $N\rightarrow \infty,$ then $(ii)-(iv),$ together with an appropriate integrability condition on $\left\{\epsilon^N\right\}_N,$ ensure that this condition does hold. Demanding this of $\xi^N$ is natural. Indeed, it follows if we assume $p^N(\boldsymbol{\nu}) = \widetilde{p}\left( -\mathbb{E}\left[ \frac{1}{N} \sum_{j=1}^N \nu^j_{\cdot} \Big| \mathcal{F}^0_{\cdot} \right] \right),$ for an adequate mapping $\Lambda \mapsto \widetilde{p}\left( \Lambda \right),$ cf. Example \ref{ex:PriceExample}. 

The general consideration about $(v)$ is that we must ask that the microstructural impact coefficient $\xi^N$ vanish, in an adequate sense, as the population size grows. We can relate it with opinion formation models in the economic literature, e.g., \cite{golub2010naive}. There, the authors characterize the phenomenon of the \textit{wisdom of the crowds}, in which all opinions in a large population converge to the truth, as being equivalent to the weight of the most influential agent vanishing as the population size grows. Our result seems in line with the intuitive idea that a vanishing microstructural influence leads, in the limit, to consistency (in a certain sense) to the MFG price. 
\end{remark}

\section{Empirical assessments} \label{sec:Empirical}

In this section we assess the performance of the formed price formulas under the MFG and finite population settings using real high-frequency data. We first specify the data and the order flow measures we consider. Next, we write down the regressions of price against explanatory variables that arise from the theory we developed in the previous sections. We finally show the numerical results of these experiments.

\subsection{The data} \label{subsec:data}


The data we use in this work can be found at \href{http://sebastian.statistics.utoronto.ca/books/algo-and-hf-trading/data/}{http://sebastian.statistics.utoronto.ca/books/algo-and-hf-trading/data/}.  It is from November 2014 for the following tickers: AMZN, EBAY, FB, GOOG, INTC, MSFT, MU, PCAR, SMH, VOD. This data contains all trades and snapshots of the LOB for every 100ms of the trading day.  Table \ref{tab:stocksData} displays the mean and standard deviation in parenthesis for each stock,  of the midprice, spread, and average daily volume (ADV).

\begin{footnotesize}
\begin{table}
\footnotesize
\begin{center}
\caption{The data we use in this work is from November 2014 for ten NASDAQ listed stocks.  It contains all trades and snapshots of the LOB for every 100ms of the trading day. }
\vspace{10pt}
\begin{tabular}{lllr}
\toprule
\textbf{Stocks} &  \textbf{Midprice} &   \textbf{Spread} &          \textbf{ADV} \\
\midrule
\rowcolor[gray]{0.9}    MU &      33.4 &     0.01 &    3,325,174 \\
\rowcolor[gray]{0.9}       &   (0.823) &  (0.001) &    (972,993) \\
\midrule
    FB &     74.68 &     0.01 &    5,616,503 \\
       &   (0.893) &  (0.002) &  (1,381,900) \\
\midrule       
\rowcolor[gray]{0.9}  PCAR &     66.57 &     0.02 &      373,722 \\
\rowcolor[gray]{0.9}       &    (0.64) &  (0.009) &    (122,522) \\
\midrule  
  AMZN &    317.18 &     0.11 &      706,172 \\
       &  (13.903) &  (0.049) &    (214,172) \\
\midrule    
\rowcolor[gray]{0.9}  EBAY &     54.11 &     0.01 &    1,224,661 \\
\rowcolor[gray]{0.9}       &   (0.664) &  (0.002) &    (262,809) \\
\midrule    
   SMH &     52.52 &     0.01 &      263,909 \\
       &   (1.011) &  (0.005) &     (95,574) \\
\midrule    
\rowcolor[gray]{0.9}  INTC &      34.5 &     0.01 &    4,717,047 \\
\rowcolor[gray]{0.9}       &   (1.104) &  (0.001) &  (2,069,603) \\
\midrule    
   VOD &     34.57 &     0.01 &      850,241 \\
       &   (1.196) &  (0.002) &    (588,596) \\
\midrule    
\rowcolor[gray]{0.9}  GOOG &    542.87 &     0.17 &      291,492 \\
\rowcolor[gray]{0.9}       &   (6.036) &   (0.08) &     (73,852) \\
\midrule    
  MSFT &      48.4 &     0.01 &    5,190,672 \\
             &   (0.704) &  (0.001) &    (971,123) \\
\bottomrule
\end{tabular}
\label{tab:stocksData}
\end{center}
\end{table}

\end{footnotesize}

\subsection{Order flow measures}

From the market clearing condition \eqref{eq:MktClearingMFG}, we see that $\Lambda$ should be taken as the additive opposite of a measure of order flow. Firstly, we consider the Trade Imbalance ($\TI$), see \cite{cont2014price}, as such a measure. We divide the trading day $\left[0,T\right]$ in $12$ corresponding intervals of $30$ minutes $\left[0=T_0,T_1\right],\ldots,\left[T_{11},T_{12}=360\right]$, further dividing such slice into smaller intervals $T_i = t_{i,0}<\ldots<t_{i,179} = T_{i+1}$ of constant length $t_{i,k+1}-t_{i,k}=10$ seconds. We then define the time series $\TI^i = \left\{ \TI^i_k \right\}_{k=0}^{179}$ on $\left[T_i,T_{i+1}\right]$ as
\begin{equation}\label{eq:tradeImbalance}
\TI^i_k = \sum_{n = N_{i,k} + 1}^{ N_{i,k+1} } \left\{ \mu^+_n - \mu^-_n \right\},  
\end{equation}
where $N_{i,k}$ denotes the number of events that have occurred on $\left[T_i,t_{i,k}\right],$ and $\mu^+_n$ (respectively, $\mu^-_n$) is the volume (possibly zero) of buying (respectively,  selling) market orders that are traded within $\left]t_{i,n-1},\,t_{i,n} \right],$ for each $n\geqslant 1.$

In order to take the LOB queue sizes into account, we also utilize a metric called Order Flow Imbalance (OFI), introduced in \cite[Subsection 2.1]{cont2014price}. The goal in using this is to consider information from the LOB first queues, since it comprises more information. We proceed to provide the definition of OFI, referring to their paper for an in-depth discussion on its design. Let us represent by $\fp^b_n$ and $q^b_n$ (respectively, $\fp^a_n$ and $q^a_n$) the bid (respectively, ask) price and volume (at-the-touch), respectively, resulting from the $n-$th set of events (i.e., those occurring within the window $\left]t_{i,n-1},\,t_{i,n}\right]$). We form the quantity
\begin{equation}\label{eq:enDef}
e_n := I_{\left\{ \fp^b_n \geqslant \fp^b_{n-1} \right\}}q^b_n -  I_{\left\{ \fp^b_n \leqslant \fp^b_{n-1} \right\}}q^b_{n-1} - I_{\left\{ \fp^a_n \leqslant \fp^a_{n-1} \right\}}q^a_n + I_{\left\{ \fp^a_n \geqslant \fp^a_{n-1} \right\}}q^a_{n-1},
\end{equation}
and we define
\begin{equation}\label{eq:OFIDef}
\OFI_k := \sum_{n = N_{i,k} + 1}^{N_{i,k+1}} e_n.
\end{equation}

\subsection{Regressions} \label{subsec:TI_MFG_Empirics}

Considering the formed price under the MFG setting \eqref{eq:FormedPriceMFG}, we undertake the subsequent regression over each time window $\left[T_i,T_{i+1}\right]$:
\begin{equation} \label{eq:RegressionFormedPrice_MFG}
    \fp^i_k = a_{i,0} S^{i,0}_k + a_{i,1} S^{i,1}_k + a_{i,2} S^{i,2}_k + a_{i,3} S^{i,3}_k + \epsilon^i_k,
\end{equation}
where we denote the error term by $\epsilon^i_k$, the observed midprice at time by $\fp^i_k$, both at time $t_{i,k}$, and the time series covariates:
\begin{equation} \label{eq:TimeSeries_MFG}
    \begin{cases}   
        S^{i,0} := \left\{S^{i,0}_k := 1 \right\}_{k = 0}^{179},\, S^{i,1} := \left\{S^{i,1}_k := t_{i,k} \right\}_{k = 0}^{179},\\
        S^{i,2} := \left\{S^{i,2}_k := \int_{T_i}^{t_{i,k}} \Lambda(u)(t_{i,k}-u)\,du \right\}_{k = 0}^{179},\, S^{i,3} := \left\{S^{i,3}_k := \Lambda(t_{i,k}) \right\}_{k = 0}^{179}.
    \end{cases}
\end{equation}
In $S^{i,2}$ we use the trapezoidal rule to approximate the integral. We assume parameters are constant in each of the time windows.

Furthermore, under the finite population setting, in particular considering the context and notation of Corollary \ref{cor:ParticularCase1}, we define the variable 
$$
I_{\text{micro}}(t) := \int_0^t \xi^N_u\left(\Lambda_u^N\right)\Lambda_u^N\,du.
$$
We notice that $I_{\text{micro}}$ figures in the price formation formula \eqref{eq:ParticularCase1} for the finite population game.  We call it $I_{\text{micro}}$ since it appears due to market friction due to the microstructural impact suffered by an individual trader.  If we assume, in particular, that this impact $\xi^N_u \equiv \xi^N$ is constant, then
$$
I_{\text{micro}}(t) = \xi^N \int_0^t \Lambda_u^N\,du.
$$
In the case of a decentralized game, it is reasonable to expect that $\xi^N = O(1/N),$ as we previously discussed in Remark \ref{rem:condition}. However, if there is some persistence in the impact of some ``special'' traders, as in a case when there is a hierarchy among the population, cf. \cite[Section 5]{evangelista2020finite}, we could expect that such term would play a relevant role in the price formation process.

Inspired by this discussion, we consider the regression \eqref{eq:RegressionFormedPrice_MFG} (over the time slices $[T_i,T_{i+1}]$),  adding the following explanatory variable:
\begin{align}\label{eq:s4}
S^{i,4} := \left\{S^{i,4}_k := \int_{T_i}^{t_{i,k}} \Lambda(u) \,du \right\}_{k=0}^{179},
\end{align}
for each window $\left[T_{i},T_{i+1}\right].$ In this way, the regression we investigate in the current setting is
\begin{equation} \label{eq:RegressionFinPop}
    \fp_k^i = a_{i,0} S_k^{i,0} + a_{i,1} S_k^{i,1} + a_{i,2} S_k^{i,2} + a_{i,3} S_k^{i,3} + a_{i,0} S_k^{i,4} + \eta_k^i,
\end{equation}
where we denote the error term by $\eta^i_k$ at time $t_{i,k}$. 

\subsection{Results}

We first consider that the order flow measure $\Lambda$ is given by the trade imbalance with $\Lambda(t_{i,k}) = -\TI^i_{k}.$ As a benchmark, we take a model inspired by \cite{kyle1985continuous}, viz.,
\begin{equation} \label{eq:RegressionTIBenchmark_MFG}
    \Delta \fp^i_k := \fp^i_k - \fp^i_{k-1} = b_{i,0} + b_{i,1} \TI^i_k + \widetilde{\epsilon}^i_k.
\end{equation}
Thus, in this benchmark, we simply assume that a price movement must correspond linearly to the market trade imbalance.

In Figure \ref{fig:ExampleOfTheFit}, we provide an example of the performance of our MFG model with parameters estimated as we described above, i.e., via the regression \eqref{eq:TimeSeries_MFG}, together with the benchmark model \eqref{eq:RegressionTIBenchmark_MFG}. In Table \ref{MFG_FinitePopR2_oneStock}, we present the average over all windows of of November, 3rd $2014$ for the GOOG stock.  In Figure \ref{fig:ExampleOfTheFit}, in green we have the formed price of the MFG using the TI order flow. The average $R^2$ on this trading day is given in the first column of Table \ref{MFG_FinitePopR2_oneStock} (cf. column $R^2_{TI}$). We also present the average $R^2$ over the whole period considered for all the stocks considered in this experiment in Table \ref{MFG_FinitePopR2_allStock}. We notice that the MFG model beats the benchmark model systematically over all stocks we considered. Moreover, we would point out that the $R^2$ achieved by the benchmark is comparable to the one observed in the literature; see \cite{cont2014price}, for instance.

To verify the statistical significance of our regression study, we display the Augmented Dickey–Fuller (ADF) test (cf. \cite{fuller1976introduction}) for the residuals  $\fp^i_k - a_{i,0} S^{i,0}_k + a_{i,1} S^{i,1}_k + a_{i,2} S^{i,2}_k + a_{i,3} S^{i,3}_k = \epsilon^i_k$.  The ADF statistic used in the test is a negative number. The more negative it is, the stronger the rejection of the hypothesis that there is a unit root at some level of confidence. The critical value of (5\%) for the sample size of 250 points is -3.43.  In Table \ref{tab:FormedPrice_TI}, the average ADF statistic is given under the column named ADF Resid..  Although the average result is not smaller than $-3.43$, we see that one standard deviation is enough to cover the critical level.  As we will see later in this section, the formed price under the finite population model 
displays better ADF tests' averages.

In Table \ref{tab:FormedPrice_TI},  we also showcase the average values over all windows and all days of the parameters we estimate from the regression \eqref{eq:RegressionTIBenchmark_MFG}. We immediately interpret the parameter $a_0$ as the average initial price of the corresponding stock sampled at the beginning of each window. 

\begin{table}
\centering
\caption{Average $R^2$ results of GOOG stock on November 3rd, 2014. }
\begin{tabular}{lrrrrrr}
\toprule
Stock &  $R^2_{TI}$ &  $R^2_{OFI}$ &  $R^2_{TI_{micro}}$ &  $R^2_{OFI_{micro}}$ &  $R^2_{TI_{bench}}$ &  $R^2_{OFI_{bench}}$ \\
\midrule
\rowcolor[gray]{0.9} GOOG &    0.505631 &     0.555541 &            0.703997 &             0.803111 &             0.17395 &             0.459976 \\
\bottomrule
\end{tabular}
\label{MFG_FinitePopR2_oneStock}
\end{table}

\begin{figure}[H]
\includegraphics[width=\textwidth]{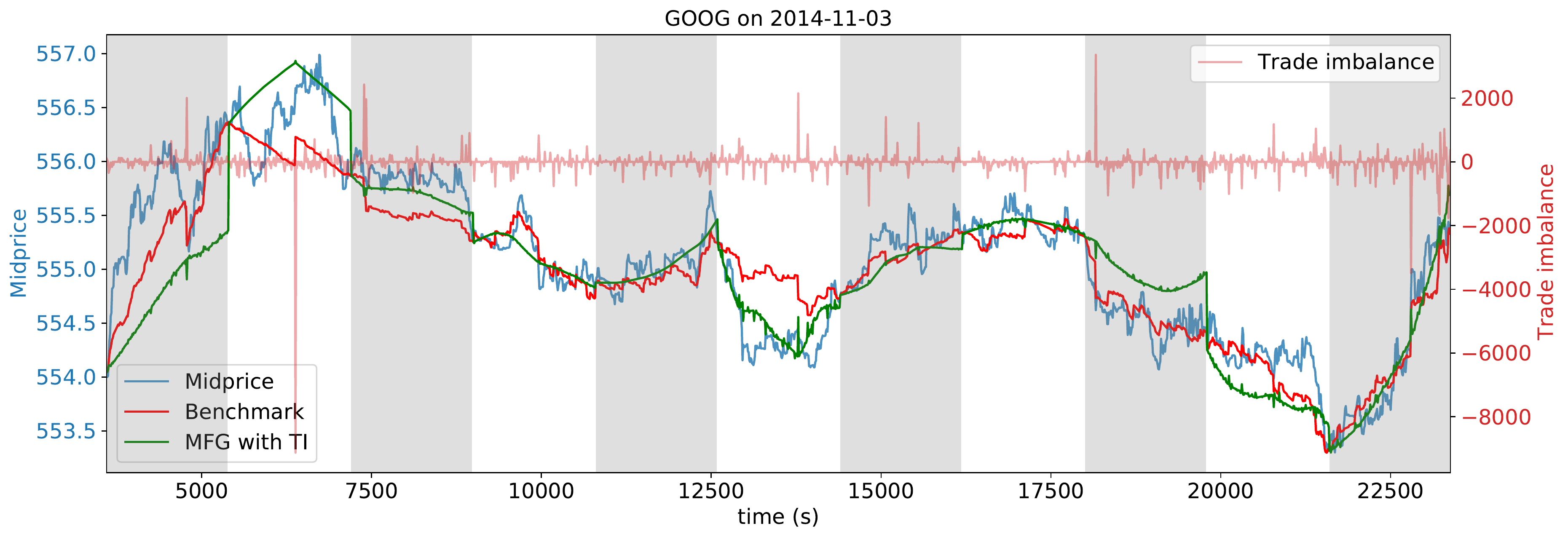}
\caption{
We use the Trade Imbalance, $\TI$ (see \cite{cont2014price}), as a measure of order flow (see \eqref{eq:tradeImbalance}) and the MFG model coefficients of the formed price equation: 
$
    \fp^i_k = a_{i,0} S^{i,0}_k + a_{i,1} S^{i,1}_k + a_{i,2} S^{i,2}_k + a_{i,3} S^{i,3}_k + \epsilon^i_k,
$
where we denote the error term by $\epsilon^i_k$ and the observed price by $\fp^i_k$, both at time $t_{i,k}$.  Such regression is performed in time intervals of $30$ minutes $\left[T_i,T_{i+1}\right]$ (marked as gray and white stripes), where model parameters are kept constant.  Moreover, we divide each window of 30 minutes into smaller intervals $T_i = t_{i,0}<\ldots<t_{i,179} = T_{i+1}$ of constant length $t_{i,k+1}-t_{i,k}=10$ seconds.  This regression corresponds to the MFG model in Theorem \ref{thm:MainTheorem}. As a benchmark, we take a model inspired by \cite{kyle1985continuous}, viz.,
$
    \Delta \fp^i_k := \fp^i_k - \fp^i_{k-1} = b_{i,0} + b_{i,1} \TI^i_k + \widetilde{\epsilon}^i_k.
$
This benchmark was also considered in \cite{cont2014price}. }
\label{fig:ExampleOfTheFit}
\end{figure}

Furthermore,  in Figure \ref{fig:FormedPriceTI_finPop} we present the outcome of considering the regression \eqref{eq:RegressionFinPop} from the finite population model compared to the trade imbalance linear model.  We show in Table \ref{MFG_FinitePopR2_oneStock} the average $R^2$ for the finite population model with the trade imbalance metric (cf. column $R^2_{TI_{micro}}$) on this trading day. For the $R^2$ over the whole analyzed period and for all the stocks, see Table \ref{MFG_FinitePopR2_allStock}. We observe that the extra explanatory variable of the finite population game brings relevant information to the problem. Indeed, this novel model not only beats the benchmark considerably buy it also improves the MFG model for all assets. The significance of the novel explanatory variable \eqref{eq:s4} is an indication that there might be a persistent effect of certain players in the market. The average values over all windows and all days of the regression parameters for this model is shown in Table \ref{tab:FormedPrice_TI_micro}.

\begin{figure}[H]
\includegraphics[width=\textwidth]{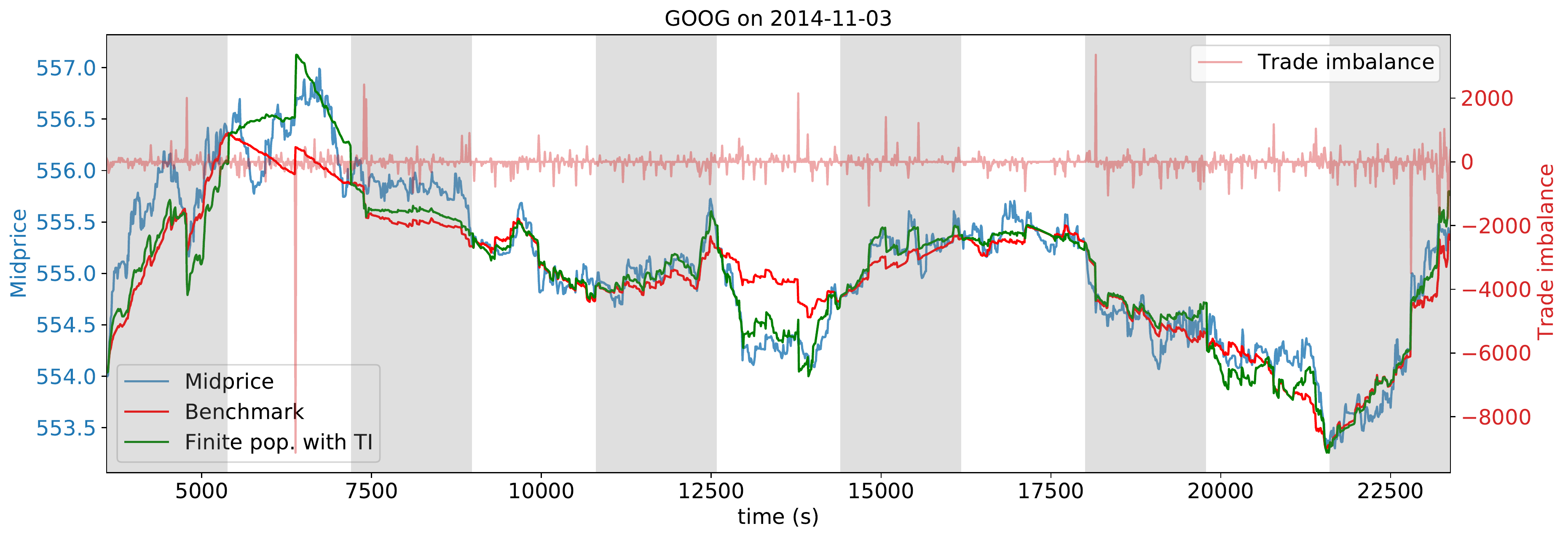}
\caption{We use the Trade Imbalance, $\TI$ (see \cite{cont2014price}), as a measure of order flow (see \eqref{eq:tradeImbalance}) and compute the finite population game model coefficients of the formed price equation: 
$
    \fp^i_k = a_{i,0} S^{i,0}_k + a_{i,1} S^{i,1}_k + a_{i,2} S^{i,2}_k + a_{i,3} S^{i,3}_k +a_{i,4} S^{i,4}_k + \eta^i_k,
$
where we denote the error term by $\eta^i_k$ and the observed price  by $\fp^i_k$, both at time $t_{i,k}$.  Such regression is performed in time intervals of $30$ minutes $\left[T_i,T_{i+1}\right],$ where model parameters are kept constant.  Moreover, we divide each window of 30 minutes into smaller intervals $T_i = t_{i,0}<\ldots<t_{i,179} = T_{i+1}$ of constant length $t_{i,k+1}-t_{i,k}=10$ seconds. This regression corresponds to the finite population price model in Corollary \ref{cor:ParticularCase1}.  As a benchmark, we take a model inspired by \cite{kyle1985continuous}, viz.,
$\Delta \fp^i_k := \fp^i_k - \fp^i_{k-1} = b_{i,0} + b_{i,1} \TI^i_k + \widetilde{\epsilon}^i_k.$
This benchmark was also considered in \cite{cont2014price}.}
\label{fig:FormedPriceTI_finPop}
\end{figure}

We now consider the order flow measure $\Lambda$ given by the Order Flow Imbalance with $\Lambda(t_{i,k}) = -\OFI^i_{k}.$ The benchmark we consider to compare our results for this metric is then
$$
\Delta \fp_k^i = c_{i,0} + c_{i,1} \OFI_k^i + \overline{\epsilon}^i_k,
$$
where the regression is performed over the slice $\left[T_i,T_{i+1}\right]$, which was studied in \cite{cont2014price}.

The counterpart of Figure \ref{fig:FormedPriceTI_finPop} in this setting is what we showcase in Figure \ref{fig:FormedPriceOFI_finPop}. This plot indicates that using this metric leads to overall better results than the trade imbalance one. This is expected since the OFI metric leads to better results when compared to TI, as noted by \cite{cont2014price}. A reason for the improvement can be due to the fact that more information is taken into account in the OFI metric, such as the size of the LOB queues.

From the results of Tables \ref{MFG_FinitePopR2_oneStock} and \ref{MFG_FinitePopR2_allStock}, we notice that using the OFI metric leads to substantial improvement of the performance of the current model. Moreover, the model we derived also outperforms the benchmark. This highlights the importance of the memory term in the price formation process. However, the memory term does not seem to be the key of the performance boost relative to the pure linear in the OFI metric. Indeed, from Tables \ref{MFG_FinitePopR2_oneStock} and \ref{MFG_FinitePopR2_allStock}, we see that when we employ $\Lambda = -\OFI,$ the average $R^2 $ of the MFG model is worse overall. Thus, the most valuable term from the perspective of the OFI metric seems to be the explanatory variable appearing from the microstructural impact, in the form of the time series \eqref{eq:s4}. Finally, the average values over all windows and all days of the regression parameters for this model is shown in Table \ref{tab:FormedPrice_OFI_micro}.

\begin{figure}[H]
\includegraphics[width=\textwidth]{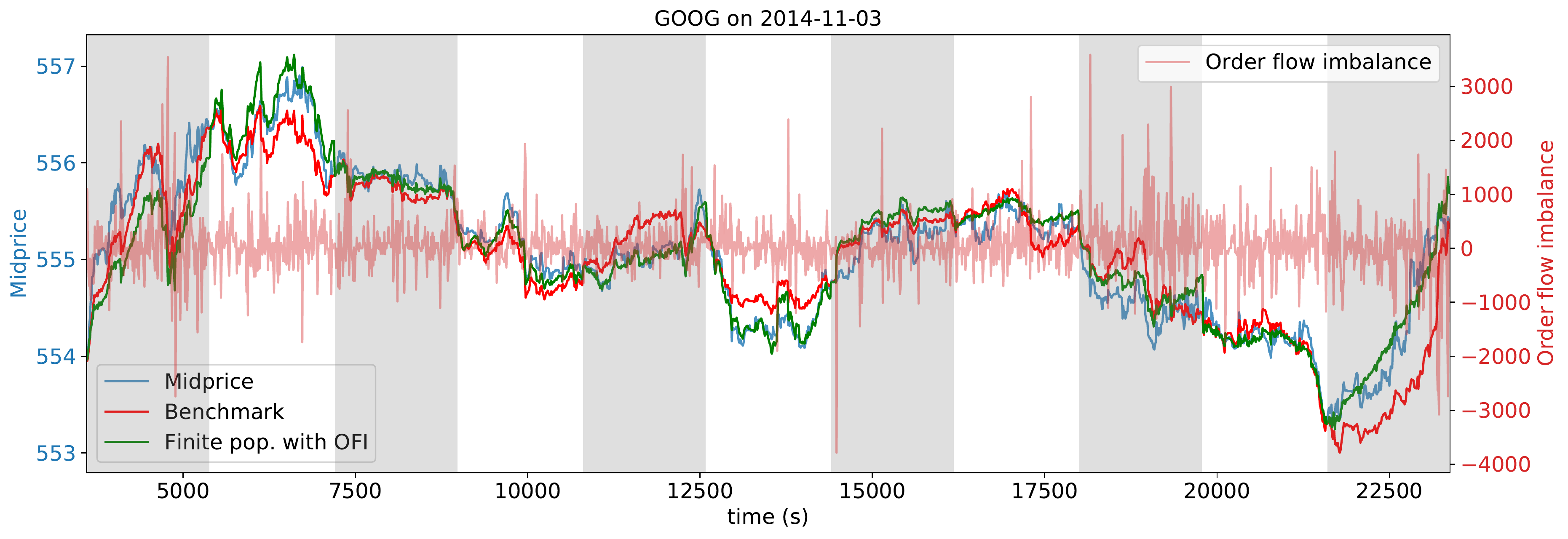}
\caption{We use the Oder Flow Imbalance, $\OFI$ (see \cite{cont2014price}), as a measure of order flow (see 
\eqref{eq:OFIDef}) and compute the finite population game model coefficients of the formed price equation: 
$
       \fp_k^i = a_{i,0} S_k^{i,0} + a_{i,1} S_k^{i,1} + a_{i,2} S_k^{i,2} + a_{i,3} S_k^{i,3} + a_{i,0} S_k^{i,4} + \eta_k^i,
$
where we denote the error term by $\eta^i_k$ and the observed price  by $\fp^i_k$, both at time $t_{i,k}$.  Such regression is performed in time intervals of $30$ minutes $\left[T_i,T_{i+1}\right],$ where model parameters are kept constant.  Moreover, we divide each window of 30 minutes into smaller intervals $T_i = t_{i,0}<\ldots<t_{i,179} = T_{i+1}$ of constant length $t_{i,k+1}-t_{i,k}=10$ seconds.  This regression corresponds to the finite population game price model in Corollary \ref{cor:ParticularCase1}.  As a benchmark, we take the model introduced in \cite{cont2014price}, viz. , $\Delta \fp_k^i = c_{i,0} + c_{i,1} \OFI_k^i + \overline{\epsilon}^i_k$, where $\OFI$ is given in \eqref{eq:OFIDef}.}
\label{fig:FormedPriceOFI_finPop}
\end{figure}

In Figure \ref{fig:allModels} we illustrate the performance of all models we developed: MFG with TI, MFG with OFI, Finite population with TI, and Finite Population with OFI. This figure corresponds to November 3rd, 2014, which is the same day illustrated in Figures \ref{fig:ExampleOfTheFit}, \ref{fig:FormedPriceTI_finPop}, and \ref{fig:FormedPriceOFI_finPop}.

To better visualize the differences of each formed price model, in Figure \ref{fig:allModels} we illustrate the squared  relative difference to the midprice, i.e., 
\begin{equation}\label{eq:sqrtRelDiff}
\bigg(\frac{p_t^{m} - p_t}{p_t}\bigg)^2,
\end{equation}
for $m \in \{$MFG with TI, MFG with OFI, Finite population with TI, and Finite Population with OFI$\}$, where $p_t$ is the midprice at time $t$. As for the other figures, the regression is performed in time intervals of $30$ minutes $\left[T_i,T_{i+1}\right],$ where model parameters are kept constant.  Moreover, we divide each window of 30 minutes into smaller intervals $T_i = t_{i,0}<\ldots<t_{i,179} = T_{i+1}$ of constant length $t_{i,k+1}-t_{i,k}=10$ seconds.  
Furthermore, in Table \ref{tab:sumSquaredRel} we compute the sum of the quantity in \eqref{eq:sqrtRelDiff} for the trading day of November 3rd, 2014, using the time discretization as mentioned above. 

In order to show the robustness of our results, we also considered in \ref{app-new-figs} two assets of very different nature than the ones considered so far. Namely, Petrobras ordinary stock and Bitcoin traded on Binance, both quoted in Brazilian real (BRL).


\begin{figure}[H]
\includegraphics[width=\textwidth]{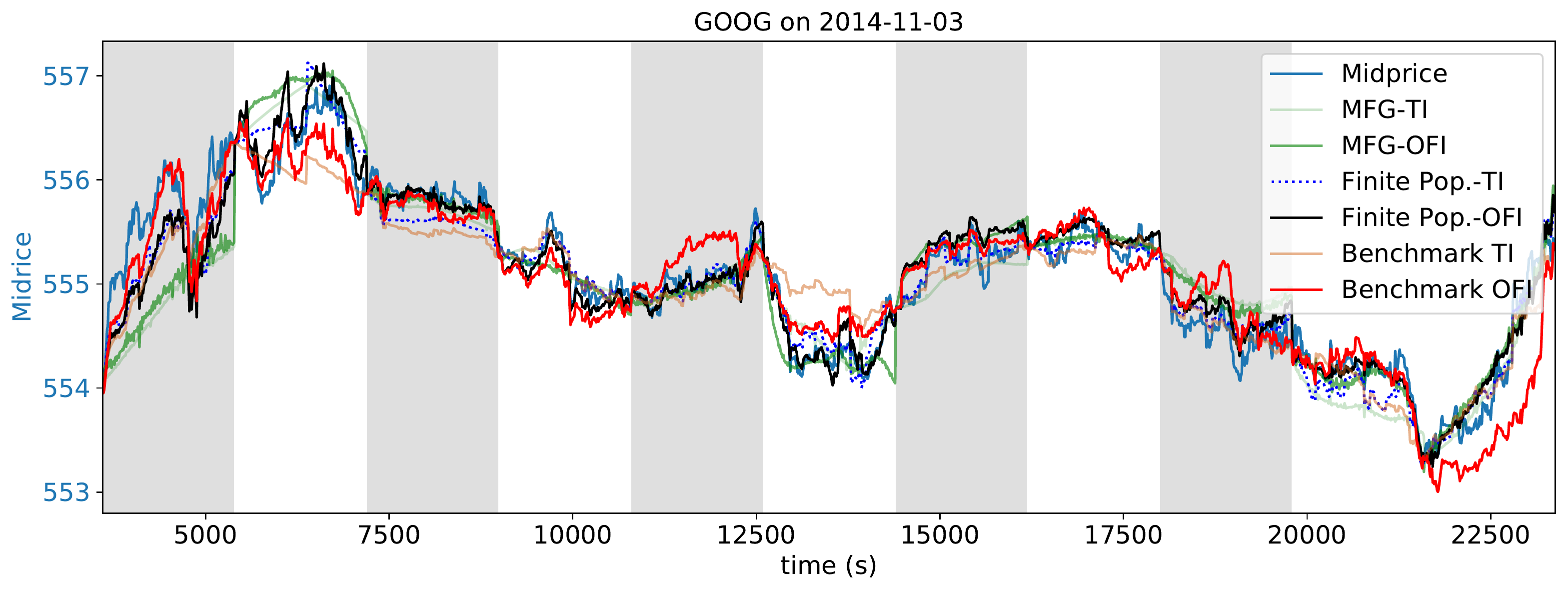}
\caption{Comparison between the midprice and formed price models: MFG with TI, MFG with OFI, Finite population with TI, and Finite population with OFI.}
\label{fig:allModels0}
\end{figure}

\begin{figure}[H]
\includegraphics[width=\textwidth]{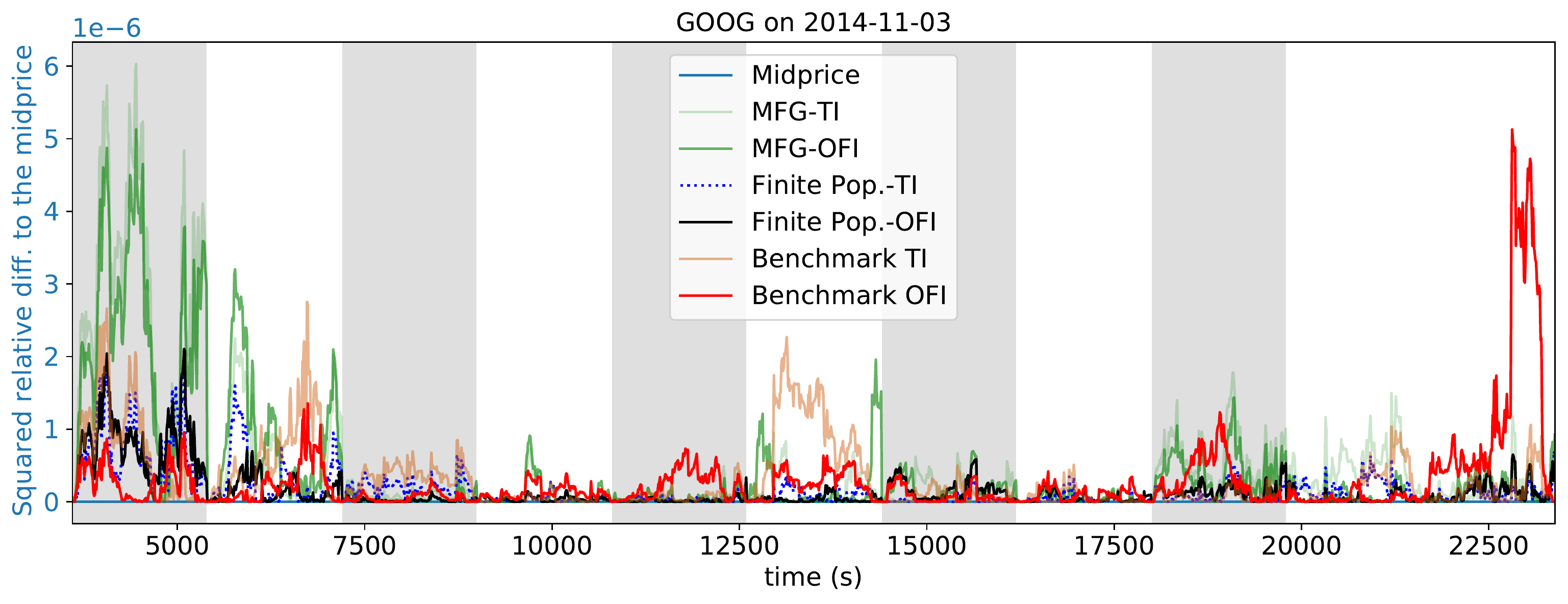}
\caption{Squared relative difference with respect to the midprice of each formed price models: MFG with TI, MFG with OFI, Finite population with TI,  and  Finite population with OFI.}
\label{fig:allModels}
\end{figure}

\begin{table}[H]
\centering
\caption{Mean of the squared relative difference with respect to the midprice, given by \eqref{eq:sqrtRelDiff}, for each model: MFG with TI, MFG with OFI, Finite population with TI, Finite Population with OFI.}
\begin{tabular}{lcccccc}
\toprule
Stock &  $\mbox{MFG}_{TI}$ &  $\mbox{MFG}_{OFI}$ &  $\mbox{Finite pop.}_{TI}$ &  $\mbox{Finite pop.}_{OFI}$ &  $\mbox{Bench}_{TI}$ &  $\mbox{Bench}-OFI$ \\
\midrule
\rowcolor[gray]{0.9} GOOG &    4.64e-7 &     4.05e-7 &            1.62e-7 &             1.45e-7 &             2.89e-7 &             2.72e-7 \\
\bottomrule
\end{tabular}
\label{tab:sumSquaredRel}
\end{table}


\section{Conclusions} \label{sec:conclusions}

We have proposed two frameworks of price formation. The first of which comprised a mean-field game model, where we determined the price assuming informed traders had objective functionals built such that they sought to maximize their P\&L, they did not want to terminate a given \added[id=F]{time} horizon holding any position and an appropriate market clearing condition was in force. Moreover, we considered a penalization on non-vanishing inventory.

The second framework we developed was a finite population counterpart of the former. Here, we could assume that prices were stochastic processes, relaxing the simplification we made in the MFG, where we assumed prices to be deterministic. We once again obtained a necessary condition determining prices in terms of the realized order flow, where we considered assumptions similar to those of the MFG for each player's utility in the finite population. We also showed, under appropriate assumptions, that our finite population result converged, on average, the MFG one. 

We provided empirical assessments for each of the models we proposed. In them, we considered two different types of order flow metrics,  which also highlight the importance of such a choice in the impact of our results. In conclusion, our best model was derived from the finite population model, using an order flow metric that took limit order book queue size information into account. A novel regression covariate in the finite population model showed to be significant, and it pointed out in the direction of the possible persistence of the influence of individual players in the price formation process. This last term suggests that one of the assumptions ensuring convergence of the finite population model to the MFG one is not valid in real market data, namely, that single individuals' contribution to order flow have little to no effect on the formed price.

\section*{Acknowledgments}

YT was financed in part by the Coordena\c{c}\~ao de Aperfei\c{c}oamento de Pessoal de N\'ivel Superior - Brasil (CAPES) - Finance Code 001.


\bibliographystyle{plainnat}
\bibliography{References}

\newpage

\appendix
\section{Tables} \label{app:table}


\setcounter{table}{0}

\begin{footnotesize}
\begin{table}[H]
\footnotesize
\centering
\caption{Average $R^2$ over the whole period analyzed under each model (including benchmarks), each measure of order flow and each stock studied in the paper.  We display the standard deviation of the computed $R^2$ in parenthesis. }  
\vspace{10pt}
\begin{tabular}{lcccccc}
\toprule
& \multicolumn{2}{c}{\textbf{MFG model}} & \multicolumn{2}{c}{\textbf{Finite pop. game model}} & \multicolumn{2}{c}{\textbf{Benchmarks} } 
\\
\midrule
Stock &   $R^2_{TI}$ &  $R^2_{OFI}$ & $R^2_{TI_{micro}}$ & $R^2_{OFI_{micro}}$ & $R^2_{TI_{bench}}$ & $R^2_{OFI_{bench}}$ \\
\midrule
\rowcolor[gray]{0.9}   MU &    0.5464292 &    0.5080833 &          0.7396288 &           0.9053048 &          0.1792843 &           0.6348812 \\
\rowcolor[gray]{0.9}      &  (0.2602291) &  (0.2782342) &        (0.2091207) &         (0.0867001) &         (0.100364) &         (0.1116312) \\
\midrule
   FB &    0.5627452 &    0.5499002 &          0.7927316 &           0.8879627 &          0.2700599 &            0.646802 \\
      &  (0.2679504) &  (0.2676806) &        (0.1691284) &         (0.1077574) &        (0.1129601) &         (0.1213636) \\
\midrule
\rowcolor[gray]{0.9} PCAR &     0.538717 &    0.5227527 &          0.7177695 &           0.8825702 &          0.1302854 &           0.5943707 \\
\rowcolor[gray]{0.9}      &  (0.2709382) &   (0.276842) &        (0.2070019) &          (0.122286) &        (0.0977697) &         (0.0919139) \\
\midrule
 AMZN &    0.5794828 &    0.5752787 &          0.7357521 &             0.81932 &          0.1862541 &            0.497083 \\
      &  (0.2529513) &  (0.2578789) &        (0.2089806) &         (0.1666376) &        (0.0902108) &         (0.1019693) \\
\midrule
\rowcolor[gray]{0.9} EBAY &    0.5287725 &    0.5161722 &          0.7524981 &           0.8985027 &          0.1888365 &           0.6237341 \\
\rowcolor[gray]{0.9}      &  (0.2692952) &   (0.276937) &        (0.1862722) &         (0.0981883) &        (0.1236811) &         (0.1201012) \\
\midrule
  SMH &    0.5671448 &    0.5273391 &          0.6386758 &           0.9045901 &          0.0383468 &           0.5366392 \\
      &  (0.2329803) &  (0.2483619) &        (0.2168032) &         (0.0835175) &        (0.0489217) &          (0.112495) \\
\midrule
\rowcolor[gray]{0.9} INTC &    0.5680237 &    0.5356633 &          0.7669449 &           0.9272422 &          0.1755975 &           0.6298803 \\
\rowcolor[gray]{0.9}      &  (0.2503211) &  (0.2681156) &        (0.1882361) &         (0.0654502) &        (0.1053752) &           (0.09795) \\
\midrule
  VOD &    0.5972025 &    0.5577175 &           0.751673 &            0.936199 &          0.0768654 &           0.5950157 \\
      &  (0.2682837) &  (0.2822306) &         (0.184862) &         (0.0649528) &        (0.0873495) &         (0.1315352) \\
\midrule
\rowcolor[gray]{0.9} GOOG &    0.5970308 &    0.5928375 &          0.7661087 &           0.8401094 &          0.1919996 &           0.4950987 \\
\rowcolor[gray]{0.9}      &  (0.2608654) &  (0.2629105) &        (0.1966308) &         (0.1517892) &        (0.1033645) &         (0.0956888) \\
\midrule
 MSFT &    0.5716435 &    0.5251484 &          0.7846665 &           0.9334051 &          0.2149702 &           0.6882648 \\
      &  (0.2491445) &  (0.2760493) &          (0.17097) &         (0.0655859) &        (0.1124921) &         (0.0949067) \\
\bottomrule
\end{tabular}
\label{MFG_FinitePopR2_allStock}
\end{table}

\end{footnotesize}

\begin{table}
\footnotesize
\begin{center}
\caption{This table displays the average model parameters and their standard deviation in parenthesis across all windows of $30$ min and all trading days of November 2014 for each stock. Here, we use the Trade Imbalance, $\TI$ (see \cite{cont2014price}), as a measure of order flow (see \eqref{eq:tradeImbalance}) and compute the MFG model coefficients of the formed price equation: 
$
    \fp^i_k = a_{i,0} S^{i,0}_k + a_{i,1} S^{i,1}_k + a_{i,2} S^{i,2}_k + a_{i,3} S^{i,3}_k + \epsilon^i_k,
$
where we denote the error term by $\epsilon^i_k$ and the observed price at time by $\fp^i_k$, both at time $t_{i,k}$.  Such regression is performed in time intervals of $30$ minutes $\left[T_i,T_{i+1}\right],$ where model parameters are kept constant.  Moreover, we divide each window of 30 minutes into smaller intervals $T_i = t_{i,0}<\ldots<t_{i,179} = T_{i+1}$ of constant length $t_{i,k+1}-t_{i,k}=10$ seconds.  This regression corresponds to the MFG model in Theorem \ref{thm:MainTheorem}. As a benchmark, we take a model inspired by \cite{kyle1985continuous}, viz.,
$
    \Delta \fp^i_k := \fp^i_k - \fp^i_{k-1} = b_{i,0} + b_{i,1} \TI^i_k + \widetilde{\epsilon}^i_k.
$
This benchmark was also considered in \cite{cont2014price}. }
\vspace{10pt}
\begin{tabular}{lcccccccc}
\toprule
      & \multicolumn{4}{c}{ \textbf{MFG model coeff.}} & & \multicolumn{2}{c}{ \textbf{Benchmark model coeff.}}\\
Stock &         $a_0$ &        $a_1$ &        $a_2$ &       $a_3$ & ADF Resid.  &     $b_0$ &     $b_1$ \\
\midrule
\rowcolor[gray]{0.9}   MU &    33.3963422 &    0.0851441 &     0.001033 &      -7e-07 &   -2.8765641 &    -4.99e-05 &       2e-06 \\
\rowcolor[gray]{0.9}      &    (0.845783) &  (0.5019527) &  (0.0107715) &   (1.3e-06) &  (0.8859914) &  (0.0004175) &   (1.1e-06) \\
\midrule
   FB &    74.6772701 &    0.0455335 &    0.0029012 &    -2.4e-06 &   -2.8127971 &   -0.0001932 &     3.5e-06 \\
      &   (0.9212966) &   (1.054565) &  (0.0143527) &   (1.9e-06) &  (0.8257791) &  (0.0007031) &   (1.8e-06) \\
\midrule
\rowcolor[gray]{0.9} PCAR &    66.5659824 &   -0.0138761 &   -0.0219083 &     1.9e-06 &   -2.8200954 &        8e-05 &    1.28e-05 \\
\rowcolor[gray]{0.9}      &   (0.6613602) &  (0.4460253) &  (0.1035331) &  (1.09e-05) &  (0.9382861) &  (0.0005088) &   (6.3e-06) \\
\midrule
 AMZN &   317.1961939 &   -0.0435624 &    0.0485113 &   -1.78e-05 &   -2.7000014 &    -5.12e-05 &    7.28e-05 \\
      &  (14.2859333) &  (4.0944619) &  (0.4366202) &  (4.78e-05) &  (0.8630214) &  (0.0043365) &  (3.32e-05) \\
\midrule
\rowcolor[gray]{0.9} EBAY &    54.1099383 &    0.0237025 &   -0.0026724 &      -6e-07 &   -2.7107647 &    -2.22e-05 &     4.1e-06 \\
\rowcolor[gray]{0.9}      &   (0.6804064) &  (0.3960183) &  (0.0219799) &   (2.5e-06) &  (0.9637356) &  (0.0004235) &   (2.4e-06) \\
\midrule
  SMH &    52.5159266 &     0.040239 &    -0.020156 &      -5e-07 &    -2.830251 &     7.09e-05 &       4e-06 \\
      &   (1.0403845) &  (0.4524628) &  (0.1338062) &  (1.19e-05) &  (0.7951221) &  (0.0004398) &   (6.7e-06) \\
\midrule
\rowcolor[gray]{0.9} INTC &    34.4902719 &    0.0760494 &    0.0003478 &      -3e-07 &   -2.9765764 &    -5.23e-05 &     1.2e-06 \\
\rowcolor[gray]{0.9}      &   (1.1351313) &  (0.4756311) &  (0.0059531) &     (7e-07) &  (0.8285871) &   (0.000403) &     (5e-07) \\
\midrule
  VOD &    34.5713667 &     0.002355 &   -0.0038229 &       8e-07 &   -2.7983246 &     2.91e-05 &     1.7e-06 \\
      &   (1.2332637) &  (0.2978866) &  (0.0250166) &   (2.3e-06) &  (0.8883831) &  (0.0002807) &   (1.6e-06) \\
\midrule
\rowcolor[gray]{0.9} GOOG &   542.8725774 &   -0.5931504 &     0.027602 &   -2.31e-05 &   -2.7499078 &   -0.0008275 &   0.0001529 \\
\rowcolor[gray]{0.9}      &   (6.1632073) &  (4.4657383) &    (0.87042) &  (0.000124) &  (0.9356923) &  (0.0049919) &  (8.38e-05) \\
\midrule
 MSFT &    48.3965698 &    0.0272846 &     0.000406 &      -2e-07 &   -2.8859986 &    -6.38e-05 &     1.3e-06 \\
      &   (0.7240534) &  (0.4672271) &  (0.0064598) &     (7e-07) &  (0.8539199) &  (0.0004826) &     (6e-07) \\
\bottomrule
\end{tabular}
\label{tab:FormedPrice_TI}
\end{center}
\end{table}


\begin{footnotesize}
\begin{table}
\footnotesize
\begin{center}
\caption{This table displays the average model parameters and their standard deviation in parenthesis across all windows of $30$ min and all trading days of November 2014 for each stock. Here, we use the Trade Imbalance $\TI$ (see \cite{cont2014price}) as a measure of order flow (see \eqref{eq:tradeImbalance}) and compute the finite population game model coefficients of the formed price equation: 
$
    \fp^i_k = a_{i,0} S^{i,0}_k + a_{i,1} S^{i,1}_k + a_{i,2} S^{i,2}_k + a_{i,3} S^{i,3}_k +a_{i,4} S^{i,4}_k \eta^i_k,
$
where we denote the error term by $\eta^i_k$ and the observed price at time by $\fp^i_k$, both at time $t_{i,k}$.  Such regression is performed in time intervals of $30$ minutes $\left[T_i,T_{i+1}\right],$ where model parameters are kept constant.  Moreover, we divide each window of 30 minutes into smaller intervals $T_i = t_{i,0}<\ldots<t_{i,179} = T_{i+1}$ of constant length $t_{i,k+1}-t_{i,k}=10$ seconds.  This regression corresponds to the finite population price model in Corollary \ref{cor:ParticularCase1}.  As a benchmark, we take a model inspired by \cite{kyle1985continuous}, viz.,
$\Delta \fp^i_k := \fp^i_k - \fp^i_{k-1} = b_{i,0} + b_{i,1} \TI^i_k + \widetilde{\epsilon}^i_k.$
This benchmark was also considered in \cite{cont2014price}.}
\vspace{10pt}
\begin{tabular}{lcccccccc}
\toprule
      & \multicolumn{5}{c}{ \textbf{Finite population model coeff.}} & & \multicolumn{2}{c}{ \textbf{Benchmark model coeff.}}\\
Stock &         $a_0$ &        $a_1$ &        $a_2$ &       $a_3$ &        $a_4$ &  ADF Resid. &     $b_0$ &     $b_1$ \\
\midrule
\rowcolor[gray]{0.9}   MU &    33.3976412 &   -0.0237521 &    0.0007491 &      -8e-07 &   -0.0004818 &   -3.5569321 &    -4.99e-05 &       2e-06 \\
\rowcolor[gray]{0.9}      &   (0.8461137) &  (0.3405124) &  (0.0095541) &     (1e-06) &  (0.0006709) &  (0.9644172) &  (0.0004175) &   (1.1e-06) \\
\midrule
   FB &    74.6859923 &    -0.148864 &    0.0033705 &    -2.5e-06 &   -0.0006395 &   -3.3782803 &   -0.0001932 &     3.5e-06 \\
      &     (0.91748) &  (0.6879874) &  (0.0115177) &   (1.5e-06) &    (0.00084) &  (0.9240582) &  (0.0007031) &   (1.8e-06) \\
\midrule
\rowcolor[gray]{0.9} PCAR &    66.5649544 &    0.0495253 &   -0.0189966 &     1.8e-06 &   -0.0037029 &   -3.4136948 &        8e-05 &    1.28e-05 \\
\rowcolor[gray]{0.9}      &   (0.6602739) &  (0.4140773) &  (0.0921828) &   (8.8e-06) &  (0.0049728) &  (0.9976919) &  (0.0005088) &   (6.3e-06) \\
\midrule
 AMZN &   317.2097833 &    -0.203643 &    0.0638642 &   -1.88e-05 &   -0.0151209 &   -3.1879314 &    -5.12e-05 &    7.28e-05 \\
      &  (14.3062892) &  (3.5137093) &  (0.4144437) &  (4.08e-05) &  (0.0225343) &  (0.8626377) &  (0.0043365) &  (3.32e-05) \\
\midrule
\rowcolor[gray]{0.9} EBAY &    54.1087666 &   -0.0507869 &   -0.0035469 &      -7e-07 &   -0.0011191 &   -3.4126401 &    -2.22e-05 &     4.1e-06 \\
\rowcolor[gray]{0.9}      &   (0.6797354) &  (0.3933079) &  (0.0173387) &   (2.1e-06) &  (0.0015897) &  (0.9577671) &  (0.0004235) &   (2.4e-06) \\
\midrule
  SMH &    52.5160044 &    0.0453901 &   -0.0122174 &        -0.0 &   -0.0010844 &   -3.1661944 &     7.09e-05 &       4e-06 \\
      &   (1.0403181) &  (0.4029302) &  (0.1250604) &  (1.04e-05) &   (0.004796) &  (0.7650656) &  (0.0004398) &   (6.7e-06) \\
\midrule
\rowcolor[gray]{0.9} INTC &    34.4926244 &   -0.0028086 &    -6.87e-05 &      -4e-07 &   -0.0002724 &    -3.626328 &    -5.23e-05 &     1.2e-06 \\
\rowcolor[gray]{0.9}      &   (1.1344234) &  (0.3030998) &  (0.0049006) &     (6e-07) &  (0.0003485) &  (0.9473067) &   (0.000403) &     (5e-07) \\
\midrule
  VOD &    34.5710566 &   -0.0225748 &   -0.0047916 &       6e-07 &   -0.0006803 &   -3.4371937 &     2.91e-05 &     1.7e-06 \\
      &   (1.2329606) &  (0.2898547) &  (0.0234281) &     (2e-06) &  (0.0010536) &  (0.9308903) &  (0.0002807) &   (1.6e-06) \\
\midrule
\rowcolor[gray]{0.9} GOOG &   542.8578395 &   -0.7487758 &    0.0149796 &      -3e-05 &   -0.0355172 &    -3.351813 &   -0.0008275 &   0.0001529 \\
\rowcolor[gray]{0.9}      &   (6.1634995) &  (3.4921034) &  (0.7157249) &  (9.66e-05) &  (0.0528001) &  (0.9196872) &  (0.0049919) &  (8.38e-05) \\
\midrule
 MSFT &    48.3974368 &   -0.0307244 &    0.0002427 &      -3e-07 &   -0.0003152 &   -3.5282415 &    -6.38e-05 &     1.3e-06 \\
      &   (0.7241274) &  (0.2945148) &  (0.0047474) &     (5e-07) &  (0.0004022) &  (0.9376129) &  (0.0004826) &     (6e-07) \\
\bottomrule
\end{tabular}
\label{tab:FormedPrice_TI_micro}
\end{center}
\end{table}

\end{footnotesize}
\clearpage

\begin{footnotesize}
\begin{table}
\footnotesize
\begin{center}
\caption{This table displays the average model parameters and their standard deviation in parenthesis across all windows of $30$ min and all trading days of November 2014 for each stock. Here, we use the Oder Flow Imbalance, $\OFI$ (see \cite{cont2014price}), as a measure of order flow (see 
\eqref{eq:OFIDef}) and compute the finite population game model coefficients of the formed price equation: 
$
       \fp_k^i = a_{i,0} S_k^{i,0} + a_{i,1} S_k^{i,1} + a_{i,2} S_k^{i,2} + a_{i,3} S_k^{i,3} + a_{i,0} S_k^{i,4} + \eta_k^i,
$
where we denote the error term by $\eta^i_k$ and the observed price at time by $\fp^i_k$, both at time $t_{i,k}$.  Such regression is performed in time intervals of $30$ minutes $\left[T_i,T_{i+1}\right],$ where model parameters are kept constant.  Moreover, we divide each window of 30 minutes into smaller intervals $T_i = t_{i,0}<\ldots<t_{i,179} = T_{i+1}$ of constant length $t_{i,k+1}-t_{i,k}=10$ seconds.  This regression corresponds to the finite population game price model in Corollary \ref{cor:ParticularCase1}.  As a benchmark, we take the model introduced in \cite{cont2014price}, viz. , $\Delta \fp_k^i = c_{i,0} + c_{i,1} \OFI_k^i + \overline{\epsilon}^i_k$, where $\OFI$ and $e_n$ are given in \eqref{eq:OFIDef} and \eqref{eq:enDef} respectively.}
\vspace{10pt}
\begin{tabular}{lcccccccc}
\toprule
      & \multicolumn{5}{c}{ \textbf{Finite population model coeff.}} & & \multicolumn{2}{c}{ \textbf{Benchmark model coeff.}}\\
Stock &         $a_0$ &        $a_1$ &        $a_2$ &       $a_3$ &        $a_4$ &  ADF Resid. &     $c_0$ &     $c_1$ \\
\midrule
\rowcolor[gray]{0.9}   MU &    33.3976715 &   -0.0061925 &    0.0001617 &      -8e-07 &   -0.0002785 &   -3.7419097 &    -4.06e-05 &     1.6e-06 \\
\rowcolor[gray]{0.9}      &   (0.8449511) &  (0.1907172) &  (0.0022751) &     (3e-07) &  (0.0003134) &  (1.2216516) &  (0.0002898) &     (5e-07) \\
\midrule
   FB &    74.6837965 &     0.053123 &     0.001143 &    -1.6e-06 &   -0.0004546 &     -3.36312 &    0.0001599 &     2.8e-06 \\
      &   (0.9150394) &  (0.5698996) &  (0.0037318) &     (7e-07) &  (0.0005303) &  (0.9568442) &  (0.0009176) &     (1e-06) \\
\midrule
\rowcolor[gray]{0.9} PCAR &    66.5654685 &   -0.0295009 &   -0.0021186 &    -6.5e-06 &    -0.001993 &   -3.4586142 &    -7.22e-05 &    1.15e-05 \\
\rowcolor[gray]{0.9}      &   (0.6580494) &  (0.3106973) &  (0.0200821) &   (2.6e-06) &  (0.0022566) &  (1.0904161) &  (0.0004809) &   (3.3e-06) \\
\midrule
 AMZN &   317.2187288 &    0.2560108 &   -0.0121746 &    -3.8e-05 &   -0.0111251 &     -3.15502 &    0.0018582 &    6.83e-05 \\
      &  (14.3286014) &   (3.204889) &  (0.2088567) &  (1.94e-05) &  (0.0131955) &  (0.8397061) &  (0.0046468) &  (1.88e-05) \\
\midrule
\rowcolor[gray]{0.9} EBAY &    54.1095513 &    0.0074091 &   -0.0003064 &    -1.5e-06 &   -0.0005658 &   -3.5466687 &     4.68e-05 &       3e-06 \\
\rowcolor[gray]{0.9}      &   (0.6779297) &   (0.211954) &  (0.0048564) &     (6e-07) &  (0.0006356) &  (0.9679349) &    (0.00035) &     (8e-07) \\
\midrule
  SMH &     52.513595 &    0.0264258 &    -6.32e-05 &      -6e-07 &   -0.0001945 &    -3.662269 &     -9.6e-06 &     1.1e-06 \\
      &    (1.042245) &   (0.213225) &  (0.0017799) &     (2e-07) &  (0.0002136) &  (1.0288569) &  (0.0003551) &     (3e-07) \\
\midrule
\rowcolor[gray]{0.9} INTC &    34.4921724 &    0.0070192 &    0.0001049 &      -4e-07 &     -0.00014 &   -3.8448383 &     1.25e-05 &       8e-07 \\
\rowcolor[gray]{0.9}      &   (1.1357877) &   (0.144242) &  (0.0010034) &     (1e-07) &  (0.0001523) &  (1.2681683) &  (0.0002294) &     (2e-07) \\
\midrule
  VOD &    34.5696753 &    0.0076454 &     -7.8e-06 &      -5e-07 &   -0.0002337 &   -4.2250149 &     1.14e-05 &     1.2e-06 \\
      &   (1.2324129) &  (0.1045205) &  (0.0017905) &     (2e-07) &  (0.0002649) &  (1.2164438) &  (0.0001424) &     (4e-07) \\
\midrule
\rowcolor[gray]{0.9} GOOG &   542.8302421 &     0.328999 &   -0.0009426 &   -4.72e-05 &   -0.0145369 &   -3.1616662 &    0.0006758 &    9.34e-05 \\
\rowcolor[gray]{0.9}      &   (6.1604382) &  (3.1658284) &   (0.235805) &   (2.7e-05) &   (0.017287) &  (0.9419513) &  (0.0055439) &  (2.64e-05) \\
\midrule
 MSFT &    48.3980857 &   -0.0037844 &    -1.15e-05 &      -4e-07 &   -0.0001448 &   -3.7722162 &    -4.43e-05 &       8e-07 \\
      &   (0.7234343) &   (0.171465) &  (0.0008954) &     (1e-07) &  (0.0001574) &  (1.0471774) &  (0.0002614) &     (2e-07) \\
\bottomrule
\end{tabular}
\label{tab:FormedPrice_OFI_micro}
\end{center}
\end{table}

\end{footnotesize}
\section{Proof of Proposition \ref{prop:CharacterizationOfTheNE}}\label{app:proof}

 \begin{proof}
 We take the G\^ateaux derivative,
 \begin{align*}
     \left\langle D_i J^i(\nu^i;\bm{\nu}^{-i}),w^i \right\rangle &= \lim_{\epsilon\rightarrow 0} \frac{J_i(\nu^{i} + \epsilon w^i,\bm{\nu}^{-i}) - J_i(\nu^{i},\bm{\nu}^{-i})}{\epsilon} \\
     &= \mathbb{E}\Bigg[ \int_0^T w^i_t \left\{p_T(\nu^i,\bm{\nu}^{-i}) - p_t(\nu^i,\bm{\nu}^{-i}) -2Aq^{i}_T - 2\kappa \nu^{i}_t-2\phi\int_t^T q^{i}_u\,du \right\}\,dt \\
     &\hspace{1.0cm}+ q^i_T\left\langle D_ip_T(\nu^i,\bm{\nu}^{-i}),w^i \right\rangle - \int_0^T \nu^i_t \left\langle D_ip_t(\nu^i,\bm{\nu}^{-i}),w^i\right\rangle\,dt\Bigg].
 \end{align*}
 By our assumptions (\textbf{A1}) and (\textbf{A2}) this leads to 
 \begin{align} \label{eq:GateauxDerivative}
     \begin{split}
         \left\langle D_i J^i(\nu^i,\bm{\nu}^{-i}),w^i \right\rangle = \mathbb{E}\Bigg[ \int_0^T w^i_t \Bigg\{&p_T(\nu^i,\bm{\nu}^{-i})- p_t(\nu^i,\bm{\nu}^{-i}) -(2A - \xi^{i,N}_{T,t}(\nu^i,\bm{\nu}^{-i}))q^{i}_T \\
         &- 2\kappa \nu^{i}_t-2\phi\int_t^T q^{i}_u\,du - \int_t^T \nu^i_u \xi_{u,t}^i(\nu^i,\bm{\nu}^{-i})\,du \Bigg\}\,dt \Bigg]\\
         = \mathbb{E}\Bigg[\int_0^T w^i_t\Bigg\{& -2\kappa \nu^i_t -p_t\left(\nu^i,\bm{\nu}^{-i}\right) + 2\phi \int_0^t q^i_u\,du \\
         & +\mathbb{E}\left[\xi^i_{T,t}(\nu^i,\bm{\nu}^{-i})q^i_T - \int_t^T\nu^i_u \xi_{u,t}^i(\nu^i,\bm{\nu}^{-i})\,du \Bigg| \mathcal{F}^{i0}_t \right] + M^i_t  \Bigg\}\,dt \Bigg], 
     \end{split} 
 \end{align} 
 where we have written
 $$
 M^i_t := \mathbb{E}\left[-2A q^i_T + p_T(\nu^i,\bm{\nu}^{-i}) - 2\phi \int_0^T q^i_u\,du \Bigg| \mathcal{F}^{i0}_t\right] \in \mathbb{L}^2.
 $$
 A Nash equilibrium $\bm{\nu}^*$ is characterized by
 $$
 \nu^{*i} = \argmax_{\nu^i} J_i(\nu^i,\bm{\nu}^{*-i}), \text{ for each } i \in \mathcal{N}.
 $$
 Under the assumptions presently made, this is equivalent to
 \begin{equation} \label{eq:CharacterizationViaGateauxDerivatives}
     \left\langle D_iJ_i(\nu^{*i}, \bm{\nu}^{*-i}),w^i \right\rangle = 0,
 \end{equation}
 for all $i \in \mathcal{N}$ and $w^i.$ From (\ref{eq:GateauxDerivative}) and (\ref{eq:CharacterizationViaGateauxDerivatives}), the characterization of $\bm{\nu}^*$ as the solution of (\ref{eq:FBSDE}) follows.
 \end{proof}

\section{Further discussion on the admissible prices class} \label{app:discAdmPrices}

We begin by remarking that we can generalize Example \ref{ex:PriceExample} as follows.

\begin{example} \label{ex:PriceExample_app}
Both conditions (\textbf{C1}) and (\textbf{C2}) hold if $p_t(\bm{\nu}) = P_t\left(\mathbb{E}\left[ \frac{1}{N}\sum_{i=1}^N q^i_t \Big| \mathcal{F}^0_t \right] \right),$ where $(t,Q) \in \left[0,T\right] \times \mathbb{R} \mapsto P_t(Q) \in \mathbb{R}$ satisfies:
\begin{itemize}
    \item[(i)] For each $\left\{ Q_t \right\}_{0\leqslant t \leqslant T} \in \mathcal{A}_0,$ the process $\{P_t(Q_t)\}_{0\leqslant t \leqslant T}$ belongs to $\mathcal{A}_0.$
    \item[(ii)] For each $t \in \left[0,T\right],$ the mapping $Q \mapsto P_t(Q)$ is of class $C^1,$ $\mathbb{P}-$almost surely.
\end{itemize}
In effect, in this setting we have
$$
\left\langle D_ip_t(\bm{\nu}),w^i \right\rangle = \mathbb{E}\left[ \int_0^t \frac{1}{N}\partial_Q P_t\left(\mathbb{E}\left[ \frac{1}{N}\sum_{i=1}^N q^i_t \Bigg| \mathcal{F}^0_t \right] \right)w^i_u\,du \Bigg| \mathcal{F}^0_t \right],
$$
whence (\textbf{C1}) is valid with $\xi^{i,N}_{t,u}(\boldsymbol{\nu}) = \frac{1}{N}\partial_Q P_t\left(\mathbb{E}\left[ \frac{1}{N}\sum_{i=1}^N q^i_t \Bigg| \mathcal{F}^0_t \right] \right),$ $0 \leqslant u \leqslant t \leqslant T,$ from where it follows that (\textbf{C2}) holds too.
\end{example}

The works \cite{feron2020price,feron2020leader} focus on the analysis of the price formation problem in the case information is symmetric among players and $p$ is has the form we described in \eqref{eq:examplePrice}. In this case, understanding the formed price amounts to understanding the dynamics of the optimal trading rates. In the present framework, in which we allow for a broader class of admissible prices, we gain more flexibility for explaining the price in terms of the order flow, as we have already seen in Section \ref{sec:MFG}.

We remark that (\textbf{C2}) is an integrability condition we stipulate to allow us to apply variational techniques. Regarding the condition (\textbf{C1}), it is of a rather general nature. The intuition behind it is that a mapping $p \in \mathfrak{P}$ ought to have the following properties: 
\begin{itemize}
    \item[(i)] Prices are \textit{non-anticipative} functionals: for each $t \in \left[0,T\right],$ $\boldsymbol{\nu} \mapsto p_t\left(\boldsymbol{\nu}\right)$ depends only on the restriction $\boldsymbol{\nu}|_{\left[0,t\right]}$ of the path of $\boldsymbol{\nu};$
    \item[(ii)] For each $t \in \left[0,T\right],$ the mapping
                $$
                \boldsymbol{\nu}(\omega)|_{\left[0,t\right]} = \left\{\boldsymbol{\nu}_u(\omega) \right\}_{0\leqslant u\leqslant t} \in L^2(0,t)^N \mapsto p_t\left(\boldsymbol{\nu}\right)(\omega) \in \mathbb{R}
                $$
                is G\^ateaux differentiable $\mathbb{P}-$almost surely;
    \item[(iii)] For every $\boldsymbol{\nu} \in \mathcal{A},$ the process $\left\{ p_t\left(\boldsymbol{\nu}\right) \right\}_{0\leqslant t \leqslant T}$ belongs to $\mathcal{A}_0.$
\end{itemize}
Technically, it is worthwhile to remark an implication of the relation $p \in \mathfrak{P}.$ We expose it in the sequel.

\begin{lemma} \label{lem:ClassPrices}
Let us assume $p \in \mathfrak{P}.$ Then,
$$
p_t\left( \boldsymbol{\nu} \right) = p_t(\boldsymbol{0}) + \mathbb{E}\left[ \sum_{i=1}^N \int_0^t \zeta^i_{t,u}\left( \boldsymbol{\nu} \right) \nu^i_u\,du \Bigg| \mathcal{F}^0_t \right],
$$
where
$$
\zeta^i_{t,u}\left( \boldsymbol{\nu} \right) = \int_0^1 \xi^i_{t,u}\left( 0,\ldots,0,\lambda \nu^{i}, \nu^{i+1},\ldots,\nu^N \right) \,d\lambda.
$$
\end{lemma}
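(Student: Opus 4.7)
The plan is to obtain the representation by a coordinate-wise telescoping argument, reducing to $N$ one-dimensional applications of the fundamental theorem of calculus combined with the directional-derivative formula coming from condition $(\mathbf{C1})$.

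More concretely, for $i \in \{1,\ldots,N+1\}$ I would set the intermediate profile
$\boldsymbol{\nu}^{(i)} := (0,\ldots,0,\nu^{i},\nu^{i+1},\ldots,\nu^{N}) \in \mathcal{A}$, with the convention $\boldsymbol{\nu}^{(N+1)} := \boldsymbol{0}$ and noting $\boldsymbol{\nu}^{(1)} = \boldsymbol{\nu}$. Then
$$
p_t(\boldsymbol{\nu}) - p_t(\boldsymbol{0}) = \sum_{i=1}^{N}\big[p_t(\boldsymbol{\nu}^{(i)}) - p_t(\boldsymbol{\nu}^{(i+1)})\big].
$$
Fixing $i$, the profiles $\boldsymbol{\nu}^{(i)}$ and $\boldsymbol{\nu}^{(i+1)}$ differ only in their $i$-th coordinate, which is $\nu^i$ in the former and $0$ in the latter; hence the scalar-parameter path
$\varphi_i(\lambda) := p_t(0,\ldots,0,\lambda\nu^{i},\nu^{i+1},\ldots,\nu^{N})$, $\lambda \in [0,1]$,
connects the two endpoints of the telescoped increment. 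By condition $(\mathbf{C1})$ applied along this path, $\varphi_i$ is differentiable (in the $L^2(\Omega)$ sense) with derivative
$$
\varphi_i'(\lambda) = \mathbb{E}\!\left[\int_0^t \xi^{i,N}_{t,u}\bigl(0,\ldots,0,\lambda\nu^{i},\nu^{i+1},\ldots,\nu^{N}\bigr)\,\nu^{i}_u\,du\,\Big|\,\mathcal{F}^0_t\right].
$$

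The fundamental theorem of calculus for Bochner integrals in $L^2(\Omega)$ then yields $p_t(\boldsymbol{\nu}^{(i)}) - p_t(\boldsymbol{\nu}^{(i+1)}) = \int_0^1 \varphi_i'(\lambda)\,d\lambda$, and a conditional Fubini--Tonelli exchange (valid in view of the $L^2$ integrability provided by $(\mathbf{C1})$--$(\mathbf{C2})$) moves the $d\lambda$ integral inside the conditional expectation and the $du$ integral, producing precisely
$$
p_t(\boldsymbol{\nu}^{(i)}) - p_t(\boldsymbol{\nu}^{(i+1)}) = \mathbb{E}\!\left[\int_0^t \zeta^{i}_{t,u}(\boldsymbol{\nu})\,\nu^{i}_u\,du\,\Big|\,\mathcal{F}^0_t\right],
$$
with $\zeta^{i}_{t,u}(\boldsymbol{\nu})$ as defined in the statement. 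Summing these identities over $i = 1,\ldots,N$ and combining with the telescoping relation gives the claimed representation.

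The main technical obstacle is upgrading the pointwise (Gateaux) directional differentiability granted by $(\mathbf{C1})$ to absolute continuity of $\varphi_i$ on $[0,1]$, so that the fundamental theorem of calculus is genuinely applicable; this typically requires some uniform $L^2$ control of $\xi^{i,N}_{t,u}$ along the segment, which is reasonable to impose tacitly as part of the regularity built into the class $\mathfrak{P}$. The Fubini interchange and the telescoping itself are essentially routine given that bookkeeping.
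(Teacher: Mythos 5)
Your proposal follows essentially the same route as the paper: a coordinate-wise telescoping decomposition through the intermediate profiles $(0,\ldots,0,\nu^i,\ldots,\nu^N)$, the fundamental theorem of calculus along the segment $\lambda \mapsto p_t(0,\ldots,0,\lambda\nu^i,\nu^{i+1},\ldots,\nu^N)$ with the derivative identified via (\textbf{C1}), and a Fubini exchange to pull the $\lambda$-integral inside. The one point you flag --- upgrading Gateaux differentiability to absolute continuity of $\varphi_i$ on $[0,1]$ --- is indeed passed over tacitly in the paper's own proof as well, so your treatment is, if anything, slightly more careful.
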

\begin{proof}
It suffices to observe that
\begin{align*}
    p_t(\boldsymbol{\nu}) - p_t(\boldsymbol{0}) &= \sum_{i=1}^N \left\{ p_t(0,\ldots,0,\nu^i,\nu^{i+1},\ldots,\nu^N) - p_t(0,\ldots,0,0,\nu^{i+1},\ldots,\nu^N) \right\}  \\
    &=\int_0^1 \frac{d}{d\lambda} p_t(0,\ldots,0,\lambda \nu^{i},\nu^{i+1},\ldots,\nu^N)\,d\lambda \\
    &= \int_0^1 \sum_{i=1}^N \mathbb{E}\left[ \int_0^t \xi^{i,N}_{t,u}\left(0,\ldots,0,\lambda \nu^{i},\nu^{i+1},\ldots,\nu^N\right)\nu^i_u\,du \Bigg| \mathcal{F}^0_t \right] d\lambda\\
    &= \mathbb{E}\left[ \sum_{i=1}^N \int_0^t \left( \int_0^1 \xi^{i,N}_{t,u}\left(0,\ldots,0,\lambda \nu^{i},\nu^{i+1},\ldots,\nu^N\right) \,d\lambda \right) \nu^i_u\,du \Bigg| \mathcal{F}^0_t \right].
\end{align*}
\end{proof}

Lemma \ref{lem:ClassPrices} indicates that our class $\mathfrak{P}$ generalizes in a natural way the concept of transient price impact \cite{gatheral2012transient}.

\section{Examples from the Brazilian and Crypto markets}\label{app-new-figs}

Here we illustrate two examples of our price formation approach applied to the Brazilian market (PETR3, Petrobras ordinary stock) and crypto market (BTCBRL, Bitcoin traded on Binance exchange in Brazil), quoted in BRL.  For both of them,  we illustrate the formed prices from the four methods we developed (MFG with TI, MFG with OFI, Finite population with TI, and finite population with OFI), along with the corresponding benchmarks.  Figures \ref{fig:allModelsPETR3} and \ref{fig:allModelsPETR32} correspond to PETR3, and in Table \ref{tab:R2-PETR3} we compute the average of adjusted $R^2$ for this stock.  Moreover,  Figures \ref{fig:allModelsBTCBRL} and \ref{fig:allModelsBTCBRL2} relate to BTCBRL, and in Table \ref{tab:R2-BTCBRL} we compute the average of its adjusted $R^2$.
\begin{figure}[H]
\includegraphics[width=\textwidth]{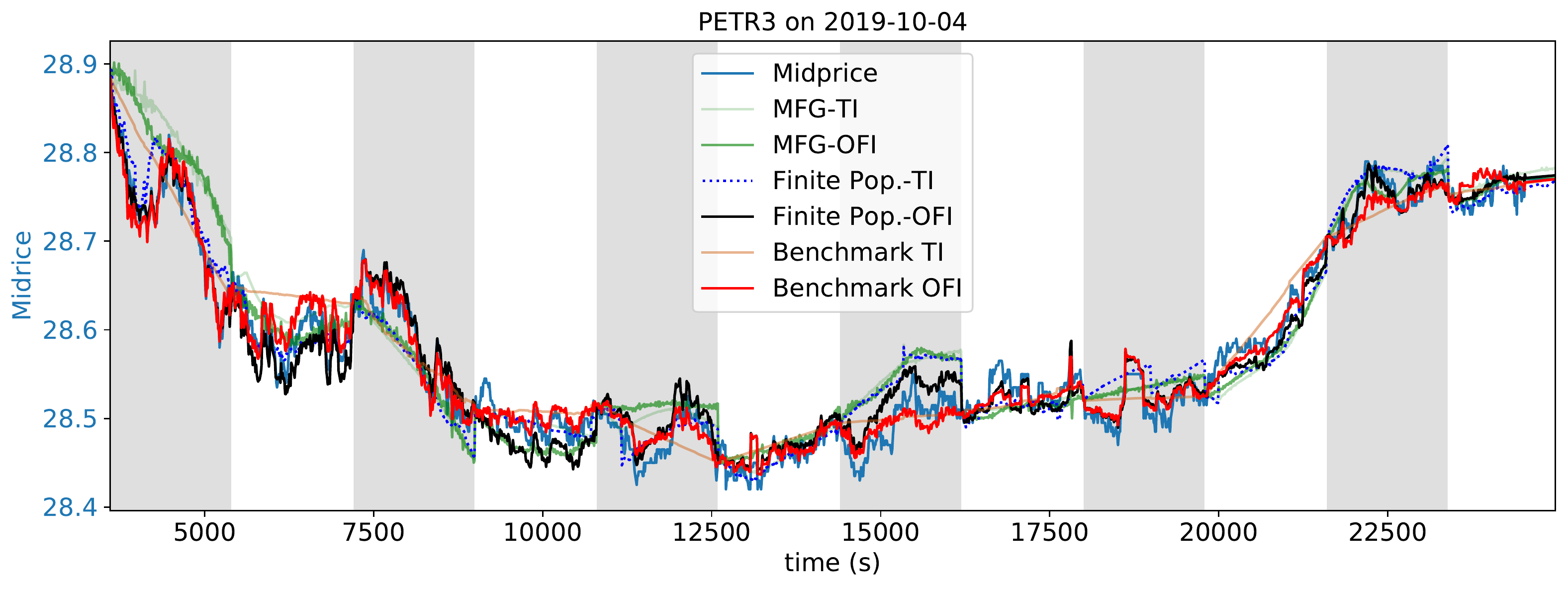}
\caption{Comparison between the midprice and formed price models: MFG with TI, MFG with OFI, Finite population with TI, and Finite population with OFI.}
\label{fig:allModelsPETR3}
\end{figure}

\begin{figure}[H]
\includegraphics[width=\textwidth]{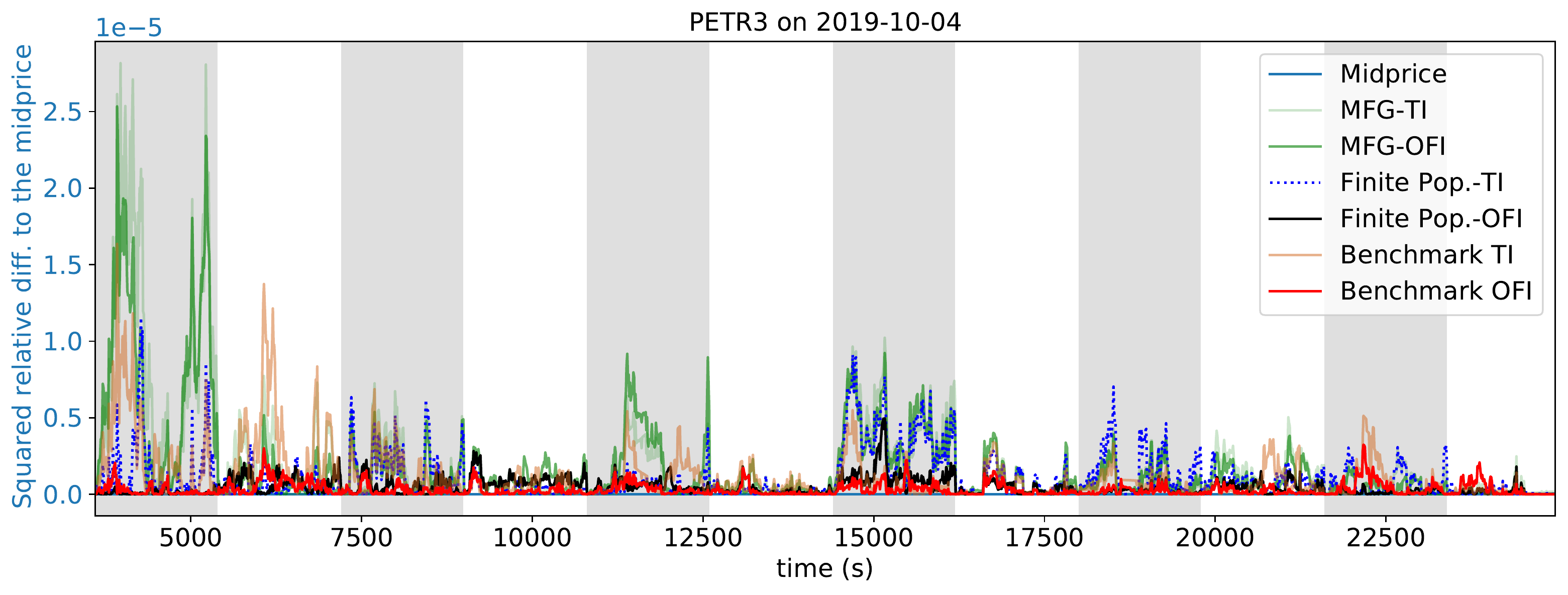}
\caption{Squared relative difference with respect to the midprice of each formed price models: MFG with TI, MFG with OFI, Finite population with TI,  and  Finite population with OFI.}
\label{fig:allModelsPETR32}
\end{figure}
\begin{table}
\centering
\caption{Average adjusted $R^2$ results}
\begin{tabular}{lcccccc}
\toprule
 Stock &  $R^2_{TI}$ &  $R^2_{OFI}$ &  $R^2_{TI_{micro}}$ &  $R^2_{OFI_{micro}}$ &  $R^2_{TI_{bench}}$ &  $R^2_{OFI_{bench}}$ \\
\midrule
\rowcolor[gray]{0.9} PETR3 &    0.486544 &      0.51164 &             0.59331 &             0.848529 &            0.004201 &             0.460331 \\
\bottomrule
\end{tabular}
\label{tab:R2-PETR3}
\end{table}

\begin{figure}[H]
\includegraphics[width=\textwidth]{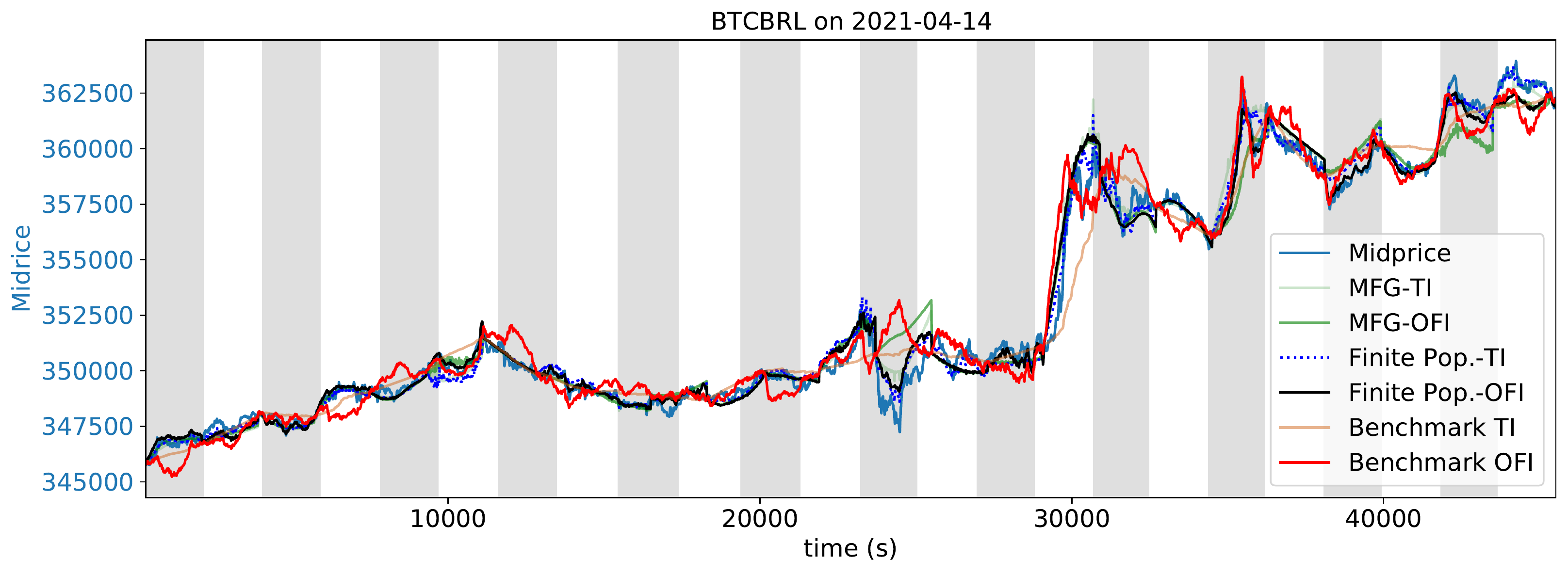}
\caption{Comparison between the midprice and formed price models: MFG with TI, MFG with OFI, Finite population with TI, and Finite population with OFI.}
\label{fig:allModelsBTCBRL}
\end{figure}

\begin{table}
\centering
\caption{Average adjusted $R^2$ results}
\begin{tabular}{lcccccc}
\toprule
  Stock &  $R^2_{TI}$ &  $R^2_{OFI}$ &  $R^2_{TI_{micro}}$ &  $R^2_{OFI_{micro}}$ &  $R^2_{TI_{bench}}$ &  $R^2_{OFI_{bench}}$ \\
\midrule
\rowcolor[gray]{0.9} BTCBRL &     0.62019 &     0.609347 &            0.687367 &             0.743341 &             0.02188 &             0.409336 \\
\bottomrule
\end{tabular}
\label{tab:R2-BTCBRL}
\end{table}

\begin{figure}[H]
\includegraphics[width=\textwidth]{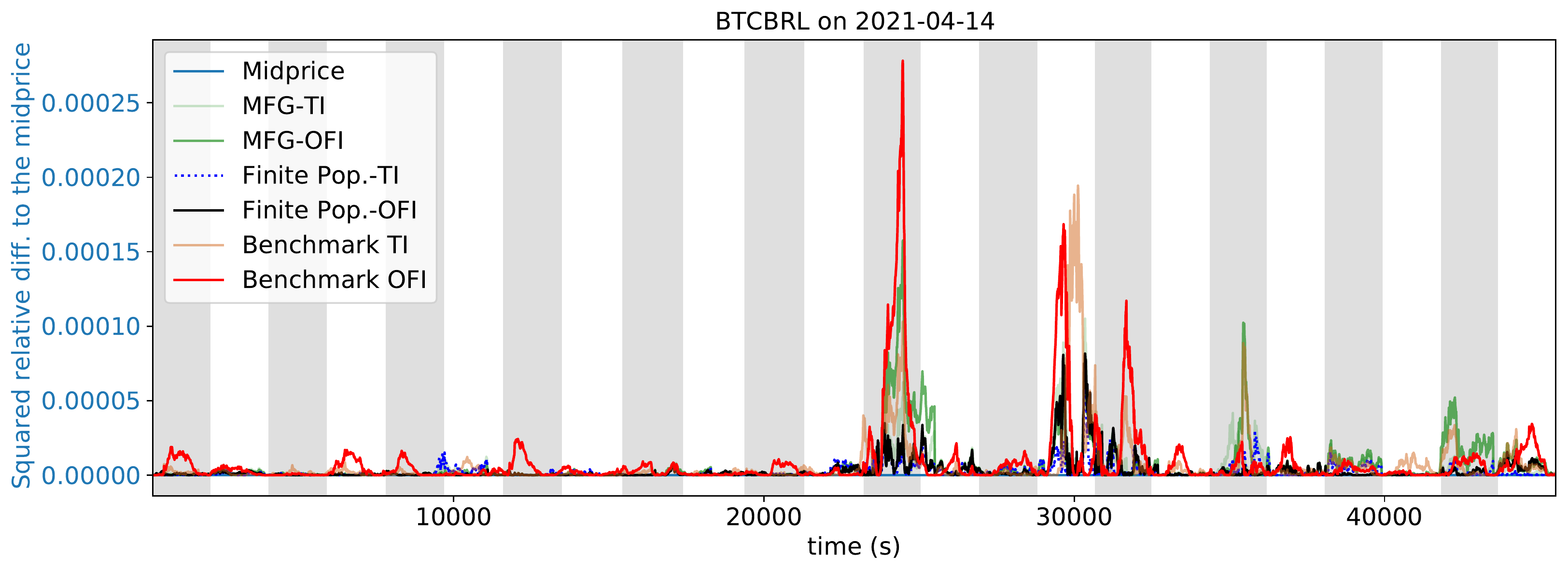}
\caption{Squared relative difference with respect to the midprice of each formed price models: MFG with TI, MFG with OFI, Finite population with TI,  and  Finite population with OFI.}
\label{fig:allModelsBTCBRL2}
\end{figure}


\end{document}